\documentclass[11pt,letterpaper]{article}

\usepackage[T1]{fontenc}
\usepackage{fullpage}
\usepackage{mathpazo}
\usepackage{amsmath,amssymb,amsthm,amsfonts}
\usepackage{array}
\usepackage{booktabs}
\usepackage{float}
\usepackage[dvipsnames]{xcolor}
\usepackage{microtype}
\usepackage{tabularx}
\usepackage{natbib}
\usepackage{enumitem}
\usepackage{listings}
\usepackage[colorlinks=true,allcolors=magenta]{hyperref}
\usepackage[nameinlink,capitalize,noabbrev]{cleveref}
\usepackage{mathtools}
\usepackage{pifont}
\usepackage{thm-restate}
\usepackage{nicefrac}
\usepackage{tikz}
\usetikzlibrary{arrows.meta}
\usepackage{xcolor}

\usepackage[ruled]{algorithm2e} 
\SetAlFnt{\small}
\SetAlCapFnt{\small}
\SetAlCapNameFnt{\small}
\SetAlCapHSkip{0pt}
\IncMargin{-\parindent}
\crefname{algocf}{Algorithm}{Algorithms}
\Crefname{algocf}{Algorithm}{Algorithms}

\newtheorem{theorem}{Theorem}
\newtheorem{proposition}{Proposition}
\newtheorem{corollary}{Corollary}
\newtheorem{lemma}{Lemma}
\newtheorem{open}{Open Question}
\theoremstyle{definition}
\newtheorem{definition}{Definition}
\newtheorem{example}{Example}
\theoremstyle{remark}
\newtheorem{remark}{Remark}
\newtheorem{observation}{Observation}

\newcommand{\bR}{\mathbb{R}}
\usepackage{tcolorbox}
\renewenvironment{open}{\refstepcounter{open}\begin{tcolorbox}[colback=gray!7, colframe=black, rounded corners] \vspace*{-4pt} \textbf{Open Question \theopen:}}{ \vspace*{-4pt}\end{tcolorbox}}

\renewcommand{\ge}{\geqslant}
\renewcommand{\geq}{\geqslant}
\renewcommand{\le}{\leqslant}

\newcommand{\supp}{\operatorname{supp}}
\newcommand{\OPT}{\mathcal{O}}

\setcitestyle{numbers,square}

\newcommand{\cA}{\mathcal{A}}
\newcommand{\cF}{\mathcal{F}}
\newcommand{\cW}{\mathcal{W}}
\newcommand{\cI}{\mathcal{I}}
\newcommand{\cK}{\mathcal{K}}
\newcommand{\cV}{\mathcal{V}}
\newcommand{\cVz}{\mathcal{V}_0}
\newcommand{\cVo}{\mathcal{V}_1}
\newcommand{\lowerstar}{\mathop{\underline{\partial}}}

\newcommand{\EFone}{\mathrm{EF1}^{+}_{-}}
\newcommand{\EFoneplus}{\mathrm{EF1}^{+}}
\newcommand{\EFoneminus}{\mathrm{EF1}_{-}}
\newcommand{\EFXpm}{\mathrm{EFX}^{+}_{-}}
\newcommand{\EFXzm}{\mathrm{EFX}^{0}_{-}}
\newcommand{\EFXpz}{\mathrm{EFX}^{+}_{0}}
\newcommand{\EFXzz}{\mathrm{EFX}^{0}_{0}}

\newcommand{\ind}[1]{\mathbf{1}\!\left[#1\right]}
\newcommand{\Ap}{A^{+}}
\newcommand{\Am}{A^{-}}

\hypersetup{pdftitle={Fair Division under Unnormalized Boolean Valuations},pdfauthor={}}

\title{Fair Division Under Boolean Valuations:\\Beyond Normalization}

\author{Nisarg Shah\\University of Toronto\\\texttt{nisarg@cs.toronto.edu} \and Paritosh Verma\\University of Toronto\\\texttt{paritosh.verma@utoronto.ca}}
\makeatletter
\let\@date\@empty
\makeatother

\begin{document}

\maketitle

\begin{abstract}
We study fair division of indivisible items when agents have arbitrary two-level preferences: the value of each agent for any set of items is Boolean, which need not be monotone or additive. Notably, we do not impose the standard assumption of \emph{normalization}, i.e., different agents may value the empty set at different Boolean levels. Since the preferences are nonmonotone, \emph{envy-freeness up to one item} (EF1) and \emph{envy-freeness up to any item} (EFX) each admit several variants, depending on which items are tested for removal and whether they are removed from the envious agent's bundle or the envied agent's bundle. This paper investigates the existence of these variants of EF1 and EFX, on their own and together with economic efficiency, incentive compatibility, feasibility constraints, and lottery-based randomization. Our results highlight that the existence landscape depends crucially on the number of normalized agents, who value the empty bundle at the lower Boolean level.

Specifically, we uncover the conditions under which fine-grained variants of EF1, and the four variants of EFX, given by B\'erczi et al.~\cite{berczi2024} exist. This includes the resolution of an open question posed by them: a variant of EFX, namely \(\EFXpz\), always exists for negative-Boolean valuations. For the setting where all agents value the empty bundle at zero, we show that fairness can be achieved along with other guarantees: there always exists a lottery over indivisible allocations that is ex-ante \emph{envy-free} and ex-post \emph{Pareto optimal} (PO) and EF1; additionally, there is a weakly group-strategyproof mechanism that outputs EF1 and PO allocations.
Finally, we study a setting where agents have a matroid feasibility constraint on bundles (the same constraint applies to every agent); we characterize exactly when two agents are guaranteed a feasible EF1 allocation. Notably, the characterizing condition relates to the reconfiguration conjectures of White~\cite{white1980} and Gabow~\cite{gabow1976}.

The authors used significant assistance from GPT-5.6-Sol for deriving theoretical results, verified any AI-generated proofs for correctness, and expanded on the exposition and simplified arguments, with the aid of GPT-5.6-Sol and Claude Opus 5.

\end{abstract}

\newpage
\setcounter{tocdepth}{2}
\tableofcontents
\medskip
\newpage

\section{Introduction}
 
Fair division studies how to allocate scarce resources --- indivisible items, chores, time slots, inheritance, course seats --- among agents with heterogeneous preferences, so that no agent feels that they were treated unfairly. Much of this theory, however, assumes that agents' preferences are \emph{monotone} and \emph{normalized}. A valuation is monotone nondecreasing if adding items always weakly increases utility, and monotone nonincreasing if adding items always weakly decreases utility. Normalization, which is an even more prevalent assumption, states that an agent's value for the empty set of items is zero. These assumptions primarily seem to serve the purpose of mathematical convenience. As one would expect, dropping monotonicity brings considerable nuance and complexity: well-known results for monotone nondecreasing valuations often fail outright or demand substantially new arguments for nonmonotone valuations~\cite{barman2026fair,bilo2026approximately}. In a similar vein, normalization plays a central role in enabling fundamental fair-division results, such as Su's existence theorem for envy-free cake divisions~\cite{edward1999rental}.

However, in many real-world settings, agents' preferences obey neither monotonicity nor normalization. Having more of something is not always better: while taking two or three courses might benefit a student, taking a dozen courses can become burdensome and counterproductive. Similarly, advising a few students might make a researcher more productive overall, but can become a chore if the number of advisees increases beyond a threshold. Free disposal of items, which leads to monotone nondecreasing preferences,\footnote{If items can be disposed of, then any new item that leads to a decrease in utility can be disposed at the agent's end, making preferences monotone nondecreasing.} is also unavailable in settings such as chore allocation. It is likewise natural for agents to assign a nonzero value to the empty bundle: a student might prefer taking no courses in a semester to taking an uninteresting course; if the items are perceived as chores, the empty bundle can be more valuable than nonempty bundles.

The existence of desirable allocations without any assumptions like normalization and monotonicity is therefore just as fundamental a question as its counterpart, yet this question remains underexplored. To isolate this question in its cleanest form, we study fair division of indivisible items under \emph{Boolean valuations}: every agent \(i\) has an arbitrary set valuation \(v_i:2^M\to\{0,1\}\), with no monotonicity, additivity, or normalization assumption. This is the most general two-level preference over bundles one can model using set functions. At the ordinal level, it is simply a preference with two indifference classes, and the numerical values we assign are immaterial. Previous work~\cite{berczi2024,bhaskar2025} has considered important subclasses of Boolean valuations in which every agent assigns the same value to the empty bundle, but not arbitrary mixed profiles containing agents of both types. Additionally, prior work focused primarily on fairness as the main objective; the interplay and tradeoffs between fairness and other desiderata remained underexplored.

The goal of this paper is to analyze the fairness guarantees that are possible for general Boolean valuations, with and without additional desiderata. Along with fairness, we consider economic efficiency, incentive compatibility for individuals and coalitions, structured feasibility constraints, and randomization over allocations.

As our results highlight, a single parameter turns out to govern much of the resulting landscape of attainable guarantees: the number of agents that place the empty bundle in the lower Boolean level. To intuitively understand the relevance of this parameter, consider an instance where only one agent values the empty bundle at zero, while the rest value the empty bundle at one. Here, giving all the items to the agent who values the empty bundle at zero results in an envy-free allocation.

We will refer to the setting where \(v_i(\varnothing)=0\) for all agents, as the \emph{normalized setting}. Otherwise, if \(v_i(\varnothing)=1\) for everyone, then a simple linear translation produces an equivalent valuation having range \(\{-1,0\}\) where the empty bundle is valued at zero; B\'erczi et al.~\cite{berczi2024} refer to this as the \emph{negative-Boolean} setting. The general setting we study allows different agents to place the empty bundle at possibly different Boolean levels.

\subsection*{Our contributions}
 
To the best of our knowledge, our paper is the first to consider fair division of indivisible items beyond the normalization assumption; dropping this assumption enables the model to capture subjectivity of agents' preferences regarding the value of the empty bundle. Our contributions span five related directions: we chart and strengthen the known existence landscape, resolve an open problem of B\'erczi et al.~\cite{berczi2024}, and establish positive and negative results on the interplay between fairness and other desiderata. In detail, our contributions are the following.
 
\begin{itemize}
\item \emph{The EF1 landscape} (\Cref{sec:existence,sec:onesided}). We begin by studying the variants of EF1 for nonmonotone valuations that are based on how the envy gets eliminated: removal from the envied bundle (\(\EFoneplus\)), from the envious agent's own bundle (\(\EFoneminus\)), or from either side (\(\EFone\)); formal definitions appear in \Cref{sec:efx-notions}. Our starting point is a simple observation combining two known results: every Boolean profile, with no assumption on normalization, admits a complete EF1 allocation (\Cref{thm:universal-ef1}). We then determine the tight conditions under which each stronger variant of EF1 exists, on its own and with Pareto optimality. Let \(Z\) denote the set of agents that value the empty bundle at zero. Allocations satisfying \(\EFoneplus\) are guaranteed to exist if and only if \(Z\ne\varnothing\), i.e., as long as there is at least one agent that values the empty bundle at zero. On the other hand, allocations satisfying \(\EFoneminus\) exist if and only if \(|Z|\le1\). Conditions characterizing the existence of fair and efficient allocations are also uncovered: \(\EFoneplus\) and PO allocations exist if and only if $Z \neq \emptyset$, and  \(\EFoneminus\) and PO allocations exist if and only if $|Z| =1$. These characterizations appear in \Cref{ithm:trich}. We find it rather surprising that the characterizing conditions depend on a single parameter, \(|Z|\), rather than on items being goods or chores.

\item \emph{The EFX landscape and an open question} (\Cref{sec:efx}). We determine the universal-existence status of the four EFX variants ($\EFXpm, \EFXzm, \EFXpz,$ and $\EFXzz$) defined by B\'erczi et al.~\cite{berczi2024}; their definitions appear in \Cref{sec:efx-notions}, and \Cref{fig:implications} maps their relationships with the three variants of EF1. The weakest notion, \(\EFXpm\), coincides with EF1 on the Boolean domain and hence always exists (\Cref{prop:efx-ef1}); \(\EFXzm\) is guaranteed whenever \(Z\ne\varnothing\), and this characterizing condition is tight because a profile with \(Z=\varnothing\) may fail to admit \(\EFXzm\) allocations (\Cref{ilem:plus,thm:efx0minus}); and the stronger notion, \(\EFXzz\), admits no universal guarantee~\cite{berczi2024}. For the fourth notion, \(\EFXpz\), we resolve an open problem, Question~16 posed by B\'erczi et al.~\cite{berczi2024}: every instance with negative-Boolean valuations admits a complete \(\EFXpz\) allocation (\Cref{thm:negative-efx}), where the existence was previously known only for identical valuations~\cite[Theorem~9]{berczi2024}. The proof, which is the technical highlight of the work, is based on a novel peel-and-match algorithm, in which candidate bundles are assigned to agents by a matching via Hall's theorem and, crucially, a failure of Hall's condition forms the basis of an induction argument rather than a dead end.

\item \emph{Fair, efficient, and weakly group-strategyproof mechanisms} (\Cref{sec:mechanisms}). We strengthen B\'erczi et al.'s result on the existence of \(\EFXzm\) allocations~\cite[Theorem~7]{berczi2024} for the normalized setting.\footnote{Recall, in the normalized setting every agent values the empty bundle at zero.} We give a simple weakly group-strategyproof mechanism that always outputs an \(\EFXzm\) and PO allocation. Given the reported valuations, the mechanism first maximizes social welfare (the sum of the values that the agents get in an allocation), subject to that, then minimizes the total number of items allocated to agents that receive a value of one, and finally uses a fixed tie-breaking rule. 

\item \emph{Best-of-both-worlds guarantees} (\Cref{sec:bobw}). We show that if $Z \neq \emptyset$, i.e., at least one agent values the empty set at zero, then there exists a lottery over allocations that is ex-ante envy-free, and every allocation in its support is Pareto optimal, \(\EFoneplus\), and \(\EFXzm\) (\Cref{thm:bobwZ}).\footnote{This can be viewed as another extension of the existence of \(\EFXzm\) allocations~\cite[Theorem~7]{berczi2024}.}
 
\item \emph{Matroid constraints} (\Cref{sec:matroid}). We consider a fair division setting where allocated bundles must be independent sets of a matroid. We characterize, for two agents, exactly which matroids guarantee a feasible EF1 allocation for all Boolean profiles (\Cref{thm:matroid-characterization}). Notably, the characterizing condition (existence of a self-complementary component of the symmetric basis-pair exchange graph) connects the problem to the reconfiguration conjectures of White~\cite{white1980} and Gabow~\cite{gabow1976} in matroid theory. Known reconfiguration theorems yield feasible EF1 for strongly base-orderable, split, sparse paving, and regular matroids (\Cref{cor:matroid-classes}). A partition matroid, by contrast, makes EF1 and Pareto optimality incompatible (\Cref{prop:matroid-po}).
 
\end{itemize}
 
\section{Preliminaries}
\label{sec:preliminaries}
  
Let \(N=\{1,\ldots,n\}\) be a finite set of agents and \(M=\{1,\ldots,m\}\) a finite set of indivisible items. Subsets \(S \subseteq M\) are called \emph{bundles}. A complete allocation \(A=(A_1,\ldots,A_n)\) is a partition of \(M\): \(\bigcup_{i\in N}A_i=M\) and \(A_i\cap A_j=\varnothing\) for all distinct \(i,j\in N\). Each agent \(i\) has an arbitrary valuation \(v_i:2^M\to\{0,1\}\). Let \(\cV\) denote this domain, \(\cVz=\{v\in\cV:v(\varnothing)=0\}\), and \(\cVo=\{v\in\cV:v(\varnothing)=1\}\). For a given instance, we define \(Z=\{i\in N:v_i(\varnothing)=0\}\) and \(E=N\setminus Z\) as the sets of agents that value the empty bundle at zero and at one, respectively.

An allocation \(B\) \emph{Pareto dominates} an allocation \(A\) if \(v_i(B_i)\ge v_i(A_i)\) for every agent \(i\) and the inequality is strict for at least one agent. An allocation is \emph{Pareto optimal} (PO) if no allocation Pareto dominates it.
 
Only the order of the two values matters for every fairness notion considered here.\footnote{The numerical values become relevant only for cardinal properties, such as proportionality or welfare approximation.} Consequently, all results extend verbatim to arbitrary preferences with two indifference classes by mapping the lower class to zero and the higher class to one. The following perspective is often useful: for \(v_i\in\cV\), define the normalized translation \(u_i(S)=v_i(S)-v_i(\varnothing)\). If \(v_i\in\cVz\), then \(u_i\) has range \(\{0,1\}\); if \(v_i\in\cVo\), then \(u_i\) has range \(\{-1,0\}\). Since the transformation is an agent-specific additive translation, it preserves every comparison between two bundles and hence the set of fair (EF1 or EFX) allocations remains unchanged.

\subsection{Three versions of EF1}
We begin by defining variants of the envy-freeness up to one item (EF1) that we focus on. Note that, all these notions restore envy-freeness via a removal of an item from some agent's bundle.
\begin{definition}[EF1 variants]
We focus on the following three variants of EF1
\begin{itemize}
    \item  An allocation \(A\) satisfies \emph{goods-side envy-freeness up to one item} (\(\EFoneplus\)) if for all pairs of agents \(i,j \in N\), either \(v_i(A_i)\ge v_i(A_j)\), or there is an item \(e\in A_j\) such that \(v_i(A_i)\ge v_i(A_j\setminus\{e\})\). 
    \item  An allocation \(A\) satisfies \emph{chore-side envy-freeness up to one item} (\(\EFoneminus\)) if for all pairs of agents \(i,j \in N\), either \(v_i(A_i)\ge v_i(A_j)\), or there is an item \(e\in A_i\) such that \(v_i(A_i \setminus \{e\})\ge v_i(A_j)\). 
    \item An allocation $A$ is \emph{two-sided envy-freeness up to one item} (\(\EFone\)) if either removal is allowed, i.e., for every \(i,j\), either \(v_i(A_i)\ge v_i(A_j)\), or there is an \(e\in A_i\cup A_j\) such that \(v_i(A_i\setminus\{e\})\ge v_i(A_j\setminus\{e\})\).
\end{itemize}

\end{definition}
 
The superscript \(+\) records removal from the envied bundle and the subscript \(-\) removal from the envious agent's bundle, matching the EFX notation of B\'erczi et al. Both \(\EFoneplus\) and \(\EFoneminus\) imply \(\EFone\). We use ``EF1'' without decoration for \(\EFone\), the removal-from-either-side notion standard for mixed manna.
 
For a bundle \(S\subseteq M\), define its one-deletion lower star by
\[
\lowerstar S=\{S\}\cup\{S\setminus\{e\}:e\in S\}.
\]
For a Boolean valuation \(v\), call \(S\) \emph{unsafe} if \(v(T)=0\) for every \(T\in\lowerstar S\), and \emph{robust} if \(v(T)=1\) for every \(T\in\lowerstar S\). In particular, \(\varnothing\) is unsafe for every \(v\in\cVz\) and robust for every \(v\in\cVo\).
 
\begin{proposition}[Violation characterization]
\label{lem:violation}
For every Boolean profile, allocation \(A\), and ordered pair \(i,j\), the pair \(i,j\) violates EF1 if and only if \(A_i\) is unsafe for \(v_i\) and \(A_j\) is robust for \(v_i\).
\end{proposition}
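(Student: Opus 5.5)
Since $v_i$ is Boolean, $v_i(X)\ge v_i(Y)$ fails exactly when $v_i(X)=0$ and $v_i(Y)=1$. So EF1 fails for the pair $(i,j)$ exactly when every one of the following comparisons fails:
\begin{itemize}
\item $v_i(A_i)\ge v_i(A_j)$;
\item $v_i(A_i\setminus\{e\})\ge v_i(A_j\setminus\{e\})$ for each $e\in A_i\cup A_j$.
\end{itemize}
The plan is to rewrite each such comparison in terms of members of $\lowerstar A_i$ and $\lowerstar A_j$, and then read off both directions of the equivalence.

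The first step uses disjointness. Because the allocation is a partition, $A_i\cap A_j=\varnothing$. Hence for $e\in A_i$ we have $A_j\setminus\{e\}=A_j$, and for $e\in A_j$ we have $A_i\setminus\{e\}=A_i$. The family of tested pairs $(X,Y)$ is therefore exactly
\[
\{(A_i,A_j)\}\cup\{(A_i\setminus\{e\},A_j):e\in A_i\}\cup\{(A_i,A_j\setminus\{e\}):e\in A_j\}.
\]
Every first coordinate lies in $\lowerstar A_i$ and every second coordinate lies in $\lowerstar A_j$.

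For the ``if'' direction, assume $A_i$ is unsafe and $A_j$ is robust for $v_i$. Then every first coordinate has $v_i$-value $0$ and every second coordinate has $v_i$-value $1$. So every comparison fails and the pair violates EF1.

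For the ``only if'' direction, assume every comparison fails.
\begin{itemize}
\item The pair $(A_i,A_j)$ gives $v_i(A_i)=0$ and $v_i(A_j)=1$.
\item For each $e\in A_i$, the pair $(A_i\setminus\{e\},A_j)$ gives $v_i(A_i\setminus\{e\})=0$. Hence $A_i$ is unsafe.
\item For each $e\in A_j$, the pair $(A_i,A_j\setminus\{e\})$ gives $v_i(A_j\setminus\{e\})=1$. Hence $A_j$ is robust.
\end{itemize}

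The only delicate point is the reduction via disjointness. Without it, removing an item shared by both bundles would change both sides at once, and the correspondence with the lower stars would break. The empty-bundle edge cases need no separate treatment: $\lowerstar\varnothing=\{\varnothing\}$, and the condition then reduces to the single comparison $(A_i,A_j)$.
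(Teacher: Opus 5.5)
Your proof is correct and follows the same route as the paper. Both arguments reduce envy to $v_i(A_i)=0$, $v_i(A_j)=1$, then note that every one-item removal from $A_i$ (respectively $A_j$) fails exactly when $A_i$ is unsafe (respectively $A_j$ is robust). Your explicit use of $A_i\cap A_j=\varnothing$, which turns the simultaneous removal in $v_i(A_i\setminus\{e\})\ge v_i(A_j\setminus\{e\})$ into a one-sided one, is a step the paper leaves implicit.
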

 
\begin{proof}
Under Boolean valuations, envy means \(v_i(A_i)=0<1=v_i(A_j)\). Removing an item from \(A_i\) fails to eliminate this envy for every possible removal exactly when \(v_i(A_i\setminus\{e\})=0\) for every \(e\in A_i\). Removing an item from \(A_j\) fails for every possible removal exactly when \(v_i(A_j\setminus\{e\})=1\) for every \(e\in A_j\). These conditions say precisely that \(A_i\) is unsafe and \(A_j\) is robust.
\end{proof}
 
\subsection{Four nonmonotone EFX relaxations}
\label{sec:efx-notions}
 
B\'erczi et al.~\cite{berczi2024} define four variants of envy-freeness up to any item (EFX) based on which items must be removed to restore envy-freeness. The following definitions will be useful for defining these notions. For an agent \(i\) and a bundle \(X \subseteq M\) define,
\[
\begin{aligned}
S_i^+(X)&=\{e\in X:v_i(X)-v_i(X\setminus\{e\})>0\},\\
S_i^-(X)&=\{e\in X:v_i(X)-v_i(X\setminus\{e\})<0\},\\
S_i^0(X)&=\{e\in X:v_i(X)-v_i(X\setminus\{e\})=0\}.
\end{aligned}
\]

We will describe the notions of EFX defined in B\'erczi et al. using \Cref{tab:efx-notions}, where each row specifies a variant of EFX. An allocation $A$ satisfies a particular variant of EFX (corresponding to a row) if for all pairs of agents $i,j \in N$, either we have $v_i(A_i) \geq v_i(A_j)$, or two conditions hold: $(a)$ we have $v_i(A_i \setminus \{e\}) \geq v_i(A_j \setminus \{e\})$ \emph{for all} $e \in T_i \cup T_j$ where $T_i \subseteq A_i$ and $T_j \subseteq A_j$ are the items we test from the envious and the envied agent's bundles respectively, and $(b)$ $T_i \cup T_j$ is non-empty. \Cref{tab:efx-notions} below lists the set of items $T_i$ and $T_j$ that are tested for each of the four notions of EFX.

\begin{table}[H]
\centering
\small
\renewcommand{\arraystretch}{1.2}
\begin{tabular}{@{}ccc@{}}
\toprule
Notion & $T_j$ (from envied agent's bundle) & $T_i$ (from envious agent's bundle)\\
\midrule
\(\EFXzz\) & \(S_i^+(A_j)\cup S_i^0(A_j)\) & \(S_i^-(A_i)\cup S_i^0(A_i)\)\\
\(\EFXzm\) & \(S_i^+(A_j)\cup S_i^0(A_j)\) & \(S_i^-(A_i)\)\\
\(\EFXpz\) & \(S_i^+(A_j)\) & \(S_i^-(A_i)\cup S_i^0(A_i)\)\\
\(\EFXpm\) & \(S_i^+(A_j)\) & \(S_i^-(A_i)\)\\
\bottomrule
\end{tabular}
\caption{The four nonmonotone EFX relaxations of B\'erczi et al.~\cite{berczi2024}: for an envious ordered pair \(i,j\), the items tested in the envied bundle \(A_j\) and in the envious agent's own bundle \(A_i\).}
\label{tab:efx-notions}
\end{table}
 
Since the tested set of every notion must be nonempty and each tested item eliminates the envy from its side, each of the four notions implies \(\EFone\). The four notions are moreover partially ordered, as in \Cref{fig:implications}: for an envious pair, a tested item that does not change the bundle's value can never eliminate the envy from its side, so a notion that tests an \(S^0\)-set forces that set to be empty, and the tested sets of the stronger and the weaker notion then coincide. Hence \(\EFXzz\) implies both \(\EFXzm\) and \(\EFXpz\), each of which implies \(\EFXpm\). Two further implications hold under normalization: on \(\cVz^n\) no agent envies an empty bundle, so \(\EFXzm\) implies \(\EFoneplus\); and on \(\cVo^n\) an envious agent never holds an empty bundle, so \(\EFXpz\) implies \(\EFoneminus\). Neither implication holds in the mixed domain: a one-item instance with an empty \emph{envied} bundle (respectively, an empty \emph{envious} bundle) separates the notions. \Cref{fig:implications} collects the seven fairness notions of this section, together with the universal-existence status that this paper establishes for each of them on the Boolean domain.
 
\begin{figure}[htb!]
\centering
\begin{tikzpicture}[
  imp/.style={-{Implies},double distance=1.4pt,semithick},
  cimp/.style={-{Implies},double distance=1.4pt,semithick,dashed},
  equ/.style={{Implies}-{Implies},double distance=1.4pt,semithick},
  notion/.style={draw,rounded corners=2pt,align=center,inner sep=4pt,font=\small}
]
\node[notion] (zz) at (6.3,6.6)
  {\(\EFXzz\)\\[-1pt]{\scriptsize no universal guarantee~\cite{berczi2024}}};
\node[notion] (zm) at (3.2,4.4)
  {\(\EFXzm\)\\[-1pt]{\scriptsize exists iff \(Z\ne\varnothing\)}\\[-2pt]{\scriptsize(\Cref{ilem:plus,thm:efx0minus})}};
\node[notion] (pz) at (9.4,4.4)
  {\(\EFXpz\)\\[-1pt]{\scriptsize exists if \(\ Z = \emptyset\) (\Cref{thm:negative-efx});}\\[-2pt]{\scriptsize can fail if \(\ Z = N \)~\cite{berczi2024}}};
\node[notion] (efp) at (0.4,2.2)
  {\(\EFoneplus\)\\[-1pt]{\scriptsize exists iff \(Z\ne\varnothing\)}\\[-2pt]{\scriptsize(\Cref{ilem:plus,iprop:noplus})}};
\node[notion] (pm) at (6.3,2.2)
  {\(\EFXpm\)\\[-1pt]{\scriptsize always exists}\\[-2pt]{\scriptsize(\Cref{thm:universal-ef1,prop:efx-ef1})}};
\node[notion] (efm) at (12.2,2.2)
  {\(\EFoneminus\)\\[-1pt]{\scriptsize exists iff \(|Z|\le1\)}\\[-2pt]{\scriptsize(\Cref{ithm:trich})}};
\node[notion] (ef) at (6.3,0)
  {\(\EFone\)\\[-1pt]{\scriptsize always exists (\Cref{thm:universal-ef1})}};
\draw[imp] (zz) -- (zm);
\draw[imp] (zz) -- (pz);
\draw[imp] (zm) -- (pm);
\draw[imp] (pz) -- (pm);
\draw[cimp] (zm) -- (efp) node[midway,above left=-2pt,font=\scriptsize]{if \(Z = N\)};
\draw[cimp] (pz) -- (efm) node[midway,above right=-2pt,font=\scriptsize]{if \( Z = \emptyset \)};
\draw[imp] (efp) -- (ef);
\draw[imp] (efm) -- (ef);
\draw[equ] (pm) -- (ef) node[midway,right=2pt,font=\scriptsize,align=left]{\Cref{prop:efx-ef1}\\};
\end{tikzpicture}
\caption{Implications among the three EF1 versions and the four EFX relaxations. Solid double arrows hold for arbitrary set valuations; the equivalence \(\EFXpm\Leftrightarrow\EFone\) is specific to the Boolean domain (\Cref{prop:efx-ef1}); and the dashed double arrows hold only on the indicated normalization endpoints, where the relevant empty bundle cannot occur. Each vertex records the existence status of its notion on arbitrary Boolean profiles, with the results establishing it.}
\label{fig:implications}
\end{figure}

\subsection*{Related work}

Having stated the relevant definitions in the section above, we now mention some relevant work from the literature.

\paragraph{Existence of EF1.} The existence of EF1 allocations and its variants is known for various valuation classes; \Cref{tab:known} lists the known results and \Cref{sec:valuation-classes} lists the definitions of the concerned valuation classes. Envy-cycle elimination gives goods-side EF1 for monotone nondecreasing valuations~\cite{lipton2004}. EF1 also exists for additive mixed manna~\cite{aziz2022}, for doubly monotone valuations~\cite{bhaskar2021}, and for two agents with arbitrary set valuations~\cite{berczi2024}. Most directly relevant here, B\'erczi et al.\ \cite{berczi2024} prove \(\EFXzm\), and hence EF1, for normalized Boolean valuations~\cite[Theorem~7]{berczi2024}. They also prove the stronger \(\EFXpz\) guarantee for identical negative-Boolean valuations~\cite[Theorem~9]{berczi2024} and pose an open problem of whether the existence result holds for non-identical negative-Boolean valuations~\cite[Question~16]{berczi2024}. Bhaskar et al.\ subsequently prove EF1 for non-identical negative-Boolean valuations and an arbitrary number of agents~\cite[Theorem~3]{bhaskar2025}, but the EFX question remained unresolved; we resolve it here. 

\begin{table}[t]
\centering
\small
\renewcommand{\arraystretch}{1.2}
\begin{tabularx}{\textwidth}{@{}>{\raggedright\arraybackslash}p{0.28\textwidth}>{\raggedright\arraybackslash}p{0.22\textwidth}X@{}}
\toprule
Domain & Guarantee & Source\\
\midrule
Arbitrary valuations & \(\EFone\) & Open\\
Nonnegative (nonmonotone) valuations & \(\EFone\) & Open\\
Monotone nondecreasing set valuations & \(\EFoneplus\) & Lipton et al.~\cite{lipton2004}\\
Additive mixed manna & \(\EFone\) & Aziz et al.~\cite{aziz2022}\\
Doubly monotone valuations & \(\EFone\) & Bhaskar et al.~\cite{bhaskar2021}\\
Two agents, arbitrary valuations & \(\EFone\) & B\'erczi et al.~\cite{berczi2024}\\
Normalized Boolean valuations& \(\EFone\) & B\'erczi et al.~\cite[Theorem~7]{berczi2024}\\
Negative Boolean valuations & \(\EFone\) & Bhaskar et al.~\cite[Theorem~3]{bhaskar2025}\\
\bottomrule
\end{tabularx}
\caption{EF1 and EFX existence results for different valuation classes.}
\label{tab:known}
\end{table}

\paragraph{Dichotomous valuations.}
A large body of work concerns valuations whose \emph{marginals} are binary, rather than whose values are binary. For binary additive valuations, where every item is worth zero or one and values add up, Halpern et al.~\cite{halpern2020binary} show that maximum Nash welfare with a fixed tie-breaking rule is simultaneously weakly group-strategyproof, EF1, and Pareto optimal. Babaioff et al.~\cite{babaioff2021dichotomous} extend the positive result to submodular valuations with binary marginals, also known as matroid-rank valuations: their deterministic truthful mechanism returns a Lorenz-dominating allocation, which is in particular EFX and of maximum Nash welfare, and randomizing over agent priorities makes it envy-free ex ante while retaining these properties ex post. Barman and Verma~\cite{barman-verma-mms} prove that allocations that are both maximin fair and Pareto optimal exist for matroid-rank valuations, and subsequently~\cite{barman-verma-truthful} that the mechanism of Babaioff et al.\ is in fact weakly group-strategyproof for EF1. These papers call the property group strategyproofness and explicitly distinguish it from strong group strategyproofness. Viswanathan and Zick~\cite{viswanathan-zick-yankee} give simpler and faster mechanisms that are strategyproof and output welfare-maximizing allocations for a wide range of fairness objectives~\cite{viswanathan-zick-framework}; see also Benabbou et al.~\cite{benabbou2021matroid} for fairness and efficiency in this class.

These classes and ours share the feature of being characterized by two levels, but they are not comparable. A valuation with binary marginals is monotone nondecreasing and its range is \(\{0,1,\ldots,m\}\), whereas a Boolean valuation has range \(\{0,1\}\) and is subject to no monotonicity, submodularity, or additivity requirement.

\section{EF1 Existence for Unnormalized Boolean Valuations}
\label{sec:existence}
 
We begin by settling the existence of EF1 allocations. The following theorem is a short synthesis of two established results. It nevertheless serves as the starting point for the paper: in \Cref{sec:onesided,sec:efx} we prove stronger, fine-grained existence guarantees, and \Cref{sec:mechanisms,sec:bobw} combine fairness with economic efficiency and incentive compatibility.
 
\begin{theorem}[Universal Boolean EF1]
\label{thm:universal-ef1}
Every instance with arbitrary valuations \(v_i:2^M\to\{0,1\}\) admits a complete EF1 allocation. \end{theorem}

\begin{proof}
Let \(E=N\setminus Z\) be the agents that value the empty bundle at one. If \(Z\ne\varnothing\), give the empty bundle to every agent in \(E\), and allocate all items among the agents in \(Z\) using the \(\EFXzm\) existence result of B\'erczi et al.\ for heterogeneous normalized positive-Boolean valuations~\cite[Theorem~7]{berczi2024}. This gives the required guarantee within \(Z\). Every \(i\in E\) receives value \(v_i(\varnothing)=1\), so \(i\) envies no agent, and every \(i\in Z\) assigns value \(v_i(\varnothing)=0\) to the bundle of each agent in \(E\), which is no greater than \(v_i(A_i)\). Hence every comparison involving \(E\) is envy-free, and the full allocation is \(\EFXzm\), in particular EF1. If instead \(Z=\varnothing\), then the transformation \(u_i(S)=v_i(S)-1\) results in a heterogeneous negative-Boolean instance, which admits a complete EF1 allocation by Bhaskar et al.~\cite[Theorem~3]{bhaskar2025}; translation by a constant preserves every inequality in the definition of EF1, so the same allocation is EF1 for \(v\).
\end{proof}

\section{One-Sided EF1: A Fine-Grained Existence Analysis}
\label{sec:onesided}
 
\Cref{sec:existence} settles the existence of the two-sided notion, $\EFone$. The
one-sided notions behave very differently, with the size of the set $Z=\{i:v_i(\varnothing)=0\}$
governing their existence exactly. We show that the guarantees for $\EFoneplus$ and $\EFoneminus$, both alone and
in conjunction with Pareto optimality, are controlled by
whether $|Z|$ is $0$, $1$, or at least $2$. The resulting trichotomy is summarized in
\Cref{ithm:trich}; the middle column
$|Z|=1$ is the unique regime in which a single allocation is simultaneously $\EFoneplus$,
$\EFoneminus$, and Pareto optimal.

\begin{restatable}{theorem}{fineGrainedOneSided}\label{ithm:trich}
For Boolean profiles, universal existence of the one-sided notions is governed by $|Z|$:
\[
\renewcommand{\arraystretch}{1.25}
\begin{array}{lccc}
 & |Z|=0\ & |Z|=1 & |Z|\ge 2\\ \hline
\EFoneplus & \text{\ding{55}} & \checkmark & \checkmark\\
\EFoneminus & \checkmark & \checkmark & \text{\ding{55}}\\
\textup{PO}\wedge\EFoneplus & \text{\ding{55}} & \checkmark & \checkmark\\
\textup{PO}\wedge\EFoneminus & \text{\ding{55}} & \checkmark & \text{\ding{55}}
\end{array}
\]
where $\checkmark$ means every instance in that column admits such an allocation and \ding{55}\ means
some instance admits none. Equivalently: $\EFoneplus$ (with or without PO) is guaranteed iff
$Z\neq\varnothing$; $\EFoneminus$ is guaranteed iff $|Z|\le 1$; and a Pareto-optimal $\EFoneminus$
allocation is guaranteed iff $|Z|=1$.
\end{restatable}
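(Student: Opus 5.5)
We prove the twelve cells one at a time. Each $\checkmark$ is either a direct construction or an appeal to an earlier result. Each \ding{55} is a small counterexample. A \ding{55} in the first two rows automatically propagates to the PO row below it, so the PO rows need new counterexamples only in the cell $(\textup{PO}\wedge\EFoneminus,\,|Z|=0)$.

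\emph{Counterexamples.}
\begin{itemize}
\item $\EFoneplus$ with $Z=\varnothing$: take two agents and one item $e$, with $v_i(\varnothing)=1$ and $v_i(\{e\})=0$ for both. The holder of $e$ has value $0$, and $\{e\}$ is unsafe for her. The other bundle is $\varnothing$, which is robust for every $v\in\cVo$. So by \Cref{lem:violation} EF1 fails; in particular nothing can be removed from the empty envied bundle, and $\EFoneplus$ fails.
\item $\EFoneminus$ with $|Z|\ge 2$: take two agents in $Z$ and a single item valued $1$ by both, with all remaining agents in $E$ valuing every nonempty bundle at $0$. Some agent of $Z$ ends with $\varnothing$. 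She envies the $Z$-agent holding the item, unless that item went to an $E$-agent; in that case the other $Z$-agent also has $\varnothing$. Either way an envious agent holds the empty bundle and has nothing to remove.
\item $\textup{PO}\wedge\EFoneminus$ with $Z=\varnothing$: singleton bundles never fail $\EFoneminus$ here, since removing the item gives $\varnothing$ of value $1$. So the example must make every Pareto-optimal allocation give some envious $E$-agent an unsafe bundle of size at least $2$, while the envied bundle is valued $1$. I would build it with two agents and three items, for instance $v_1$ equal to $1$ only on $\varnothing$ and $M$, and $v_2$ tuned so that every allocation where both agents get value $1$ is blocked. I would then verify the instance by enumerating the eight allocations.
\end{itemize}

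\emph{Positive results.}
\begin{itemize}
\item $\EFoneplus$ with $Z\ne\varnothing$: reuse the proof of \Cref{thm:universal-ef1}. Give $E$ the empty bundles and split $M$ among $Z$ by the $\EFXzm$ result of B\'erczi et al. On $\cVz$, $\EFXzm$ implies $\EFoneplus$, since no agent of $Z$ envies an empty bundle. Agents of $E$ are envy-free.
\item $|Z|=1$, with $Z=\{z\}$: the allocation $A_z=M$ and $A_i=\varnothing$ for $i\in E$ is envy-free, hence both $\EFoneplus$ and $\EFoneminus$.
\item $\EFoneminus$ with $Z=\varnothing$: I would invoke the forthcoming \Cref{thm:negative-efx}, giving $\EFXpz$ for negative-Boolean valuations, together with the implication $\EFXpz\Rightarrow\EFoneminus$ on $\cVo^n$.
\end{itemize}

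\emph{PO cells with $Z\ne\varnothing$ (main obstacle).} Here envy-freeness alone does not give Pareto optimality. I would take an allocation that first maximizes social welfare. Among those, it minimizes the total number of items held by agents of value $1$, and then it maximizes the number of items held by agents of $Z$. It is PO because it maximizes welfare.

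Suppose $i$ envies $j$ with $A_i$ unsafe and $A_j$ robust for $v_i$. The plan is an exchange argument: give $A_j\setminus\{e\}$ to $i$, and hand $e$ together with $A_i$ to $j$ or to an agent of $Z$. We then check that welfare does not drop while the tie-breaking potential strictly improves, which is a contradiction. An agent of $E$ holding a bundle of value $0$ would strictly gain by dropping its items to a $Z$-agent only if that agent's value is unaffected; this is where the third tie-break enters.

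For $|Z|=1$ the same argument must also rule out $\EFoneminus$ violations. The envious agent's bundle is nonempty unless the envious agent is $z$. For $z$ with $A_z=\varnothing$, envy means some $A_j$ is robust for $v_z$, so $z$ could take $A_j$. I expect the delicate part to be making one potential function handle all these cases simultaneously. If it fails, I would fall back on a case split on whether an allocation giving everyone value $1$ exists.
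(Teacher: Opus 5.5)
Your proposal contains one step that fails outright and one step that is left as an unexecuted plan.

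\textbf{The cell $\textup{PO}\wedge\EFoneminus$ at $|Z|=0$.} Your suggested instance fails for every choice of $v_2$. Since $Z=\varnothing$ forces $v_2(\varnothing)=1$, setting $v_1(M)=1$ makes the allocation $(M,\varnothing)$ give both agents value one. That allocation is therefore envy-free and Pareto optimal. More generally, if $Z=\varnothing$ and some agent has $v_i(M)=1$, giving her $M$ and everyone else $\varnothing$ satisfies everyone. So any counterexample must have $v_i(M)=0$ for all $i$. The paper (\Cref{prop:po-impossibility}) takes $v_i(S)=\mathbf{1}[S=\varnothing]$ for every agent, with $m\ge2$. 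If two bundles are nonempty, moving one onto the other strictly helps its former owner and hurts nobody. Hence every Pareto-optimal allocation hands all of $M$ to one agent. Her bundle and all its one-item deletions are nonempty, hence unsafe, while the others hold the robust $\varnothing$. So even two-sided EF1 fails, by \Cref{lem:violation}.

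\textbf{The cells $\textup{PO}\wedge\EFoneplus$ with $Z\neq\varnothing$.} Here you give only a plan, and it aims at the wrong violation. By \Cref{ilem:viol}(a), you must refute an envied bundle that is robust for $v_i$, with no condition on $A_i$. This includes the case $A_j=\varnothing$, where your exchange has no item $e$ to move. The paper's \Cref{ilem:plus} shows that maximizing $W$ and then minimizing $C$ already suffices, so your third tie-break is unnecessary.
\begin{itemize}
\item If $A_j\neq\varnothing$, the exchange $A'_i=A_j\setminus\{e\}$, $A'_j=A_i\cup\{e\}$ either raises $W$ or keeps it unchanged. Keeping it unchanged forces $v_j(A_j)=1$ and $v_j(A_i\cup\{e\})=0$, and then $C$ drops by exactly one.
\item If $A_j=\varnothing$, envy forces $i\in E$ and $A_i\neq\varnothing$; this is exactly where $Z\neq\varnothing$ is needed. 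Dump $A_i$ onto some $z\in Z$. Either $W$ rises, or $W$ is unchanged because $z$ falls from one to zero. In the latter case $A_z\neq\varnothing$, since $v_z(\varnothing)=0$, so $C$ falls by $|A_z|\ge1$. The paper reaches the same contradiction by first dumping onto nonempty bundles and then swapping with an empty agent of $Z$.
\end{itemize}

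For $|Z|=1$, your fallback is essentially \Cref{ilem:minus-Z1}: give $M$ to $z$, and note that any allocation Pareto dominating it gives everyone value one. No potential-function argument is needed there. The remaining cells are fine. Deriving $\EFoneminus$ at $|Z|=0$ from \Cref{thm:negative-efx} via $\EFXpz\Rightarrow\EFoneminus$ on $\cVo^n$ is valid and not circular. It is, however, a much heavier tool than the paper's single-exchange \textsc{ChoresOpt} argument (\Cref{ilem:minus-all-E}).
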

 
This section is devoted to proving \Cref{ithm:trich}. To this end, we begin with a useful piece of notation: for every agent $i \in N$ and bundle $S \subseteq M$, let
\[
  v_i'(S)\;=\;\max_{T \in \lowerstar S} v_i(T),
\]
so that $v_i'(S)=0$ exactly when $S$ is unsafe for $v_i$, and $v_i'(S)=1$ otherwise. For agents in
$E=N\setminus Z$ (those with $v_i(\varnothing)=1$) every bundle $S$ of size at most one has
$v_i'(S) =1$, since its lower star contains $\varnothing$. We begin by characterizing one-sided EF1 violations; the following proposition immediately follows from the relevant definitions on the Boolean domain.
 
\begin{proposition}[One-sided violations]\label{ilem:viol}
Fix a Boolean profile, an allocation $A$, and an ordered pair $i,j$ with $v_i(A_i)=0$ and
$v_i(A_j)=1$.
\begin{enumerate}[label=\textup{(\alph*)}]
  \item The pair $i,j$ violates \textup{EF1}${}^{+}$ iff $A_j$ is robust for $v_i$.
  \item The pair $i,j$ violates \textup{EF1}${}^{-}$ iff $A_i$ is unsafe for $v_i$, i.e.\ $v_i'(A_i)=0$.
\end{enumerate}
\end{proposition}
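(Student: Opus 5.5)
The plan is to unfold the definitions of $\EFoneplus$ and $\EFoneminus$ on the Boolean domain, in the same way as the proof of \Cref{lem:violation}, but keeping each side separate.

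Throughout, fix the ordered pair $i,j$ with $v_i(A_i)=0$ and $v_i(A_j)=1$. By hypothesis the pair is envious, so the clause ``$v_i(A_i)\ge v_i(A_j)$'' fails. Whether the pair violates the notion therefore depends only on whether some single removal restores the inequality.

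For part (a), the pair satisfies $\EFoneplus$ iff there is $e\in A_j$ with $v_i(A_i)\ge v_i(A_j\setminus\{e\})$. Since $v_i(A_i)=0$ and values lie in $\{0,1\}$, this is the same as $v_i(A_j\setminus\{e\})=0$ for some $e\in A_j$. So a violation means $v_i(A_j\setminus\{e\})=1$ for every $e\in A_j$. Together with $v_i(A_j)=1$, this says exactly that every member of $\lowerstar A_j$ has value $1$, i.e., $A_j$ is robust for $v_i$. I will check the degenerate case $A_j=\varnothing$ explicitly. Here no $e$ exists, so the pair violates $\EFoneplus$. On the other side, $\lowerstar\varnothing=\{\varnothing\}$ and $v_i(\varnothing)=1$, so $\varnothing$ is robust. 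The equivalence therefore holds in this case as well.

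For part (b), the pair satisfies $\EFoneminus$ iff some $e\in A_i$ has $v_i(A_i\setminus\{e\})\ge v_i(A_j)=1$, i.e., $v_i(A_i\setminus\{e\})=1$. A violation means $v_i(A_i\setminus\{e\})=0$ for all $e\in A_i$. Combined with $v_i(A_i)=0$, this says every member of $\lowerstar A_i$ has value $0$, which is unsafety of $A_i$. By the definition of $v_i'$ as a maximum over $\lowerstar A_i$, this is in turn equivalent to $v_i'(A_i)=0$. The empty case $A_i=\varnothing$ again works out: no removal is available, and $\varnothing$ is unsafe since $v_i(\varnothing)=v_i(A_i)=0$.

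I do not expect a genuine obstacle here. The only care needed is the empty-bundle edge cases, where the quantifier over $e$ is vacuous, and these are handled by the lower star containing the bundle itself.
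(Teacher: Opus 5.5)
Your proof is correct and follows the same approach as the paper: the paper states that this proposition follows immediately from the definitions, and it unfolds them side by side exactly as in the proof of \Cref{lem:violation}. Your explicit treatment of the empty-bundle cases, where the quantifier over $e$ is vacuous and the lower star is just $\{\varnothing\}$, is a correct and harmless addition.
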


\subsection{Two score-based allocation rules}
\label{sec:optimizers}
 
We design two algorithms, one for each one-sided notion. Both are \emph{score-based}: each of them maximizes a primary score over all complete allocations and then, among the allocations attaining that maximum, optimizes a secondary score. For any profile of Boolean valuation functions $p=(p_1,\ldots,p_n) \in (\cVz \cup \cVo)^n$ and any allocation $A$, define the four scores
\[
  W_p(A)=\sum_{i} p_i(A_i),\quad C_p(A)=\sum_i p_i(A_i)\,|A_i|,\qquad
  W'_p(A)=\sum_i p_i'(A_i),\quad C'_p(A)=\sum_i p_i'(A_i)\,|A_i|,
\]
where $p_i'(S)=\max_{T\in\lowerstar S}p_i(T)$ as above. When the profile under consideration $p$ is unambiguous, we drop the subscript and simply write $W$, $C$, $W'$, and $C'$. Let $\cA$ denote the finite set of complete allocations.
 
\begin{algorithm}[H]
\caption{\textsc{GoodsOpt}: the goods-side rule}
\label{alg:goodsopt}
\KwIn{A Boolean profile $p$.}
\KwOut{A complete allocation.}
$\cW\leftarrow\arg\max_{A\in\cA} W_p(A)$\;
$\OPT^{+}(p)\leftarrow\arg\min_{A\in\cW} C_p(A)$\;
\Return{an arbitrary allocation in $\OPT^{+}(p)$}
\end{algorithm}
 
\begin{algorithm}[H]
\caption{\textsc{ChoresOpt}: the chores-side rule}
\label{alg:choresopt}
\KwIn{A Boolean profile $p$.}
\KwOut{A complete allocation.}
$\cW\leftarrow\arg\max_{A\in\cA} W'_p(A)$\;
$\OPT^{-}(p)\leftarrow\arg\max_{A\in\cW} C'_p(A)$\;
\Return{an arbitrary allocation in $\OPT^{-}(p)$}
\end{algorithm}
 
Thus \textsc{GoodsOpt} (\Cref{alg:goodsopt}) maximizes $W_p$ and, subject to that, \emph{minimizes} $C_p$; \textsc{ChoresOpt} (\Cref{alg:choresopt}) maximizes $W'_p$ and, subject to that, \emph{maximizes} $C'_p$. We write $\Ap$ and $\Am$ for their outputs on $v$. All fairness and efficiency statements below hold for \emph{every} allocation in the optimal sets $\OPT^{+}(v)$ and $\OPT^{-}(v)$, so the final arbitrary choice of the output allocation is immaterial.\footnote{Later, in \Cref{sec:mechanisms}, we show that \textsc{GoodsOpt} additionally underlies a weakly group-strategyproof and Pareto-optimal mechanism, and in \Cref{sec:bobw} that the uniform lottery over $\OPT^{+}(v)$ is a best-of-both-worlds solution.} Both algorithms optimize over all of $\cA$: they serve the purpose of establishing existence and do not lead to polynomial-time algorithms.
 
Before analyzing these two rules, we show an equivalence that upgrades every EF1 guarantee to the weakest variant of EFX for free: on the Boolean domain, the weakest of the four EFX relaxations of \Cref{sec:efx-notions} is not a strengthening of EF1 at all.
 
\begin{proposition}
\label{prop:efx-ef1}
For Boolean valuations, an allocation is \(\EFXpm\) if and only if it is EF1.
\end{proposition}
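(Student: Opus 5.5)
The plan is to prove the two directions separately, working pairwise on an envious ordered pair. Fix an allocation \(A\) and an ordered pair \(i,j\) with \(v_i(A_i)=0\) and \(v_i(A_j)=1\); non-envious pairs satisfy both notions trivially. Throughout we use that on the Boolean domain the relevant marginal sets are easy to describe. Since \(v_i(A_j)=1\), an item \(e\in A_j\) lies in \(S_i^+(A_j)\) exactly when \(v_i(A_j\setminus\{e\})=0\), and \(S_i^-(A_j)=\varnothing\). Since \(v_i(A_i)=0\), an item \(e\in A_i\) lies in \(S_i^-(A_i)\) exactly when \(v_i(A_i\setminus\{e\})=1\), and \(S_i^+(A_i)=\varnothing\).

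For \(\EFXpm\Rightarrow\EFone\), this is already noted in \Cref{sec:efx-notions}, where every EFX variant implies EF1. Concretely, the tested set \(T_i\cup T_j=S_i^-(A_i)\cup S_i^+(A_j)\) is nonempty, so pick any \(e\) in it. If \(e\in S_i^+(A_j)\), then \(v_i(A_j\setminus\{e\})=0\le v_i(A_i)\). If \(e\in S_i^-(A_i)\), then \(v_i(A_i\setminus\{e\})=1\ge v_i(A_j)\). Either way the EF1 condition for \(i,j\) holds.

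For \(\EFone\Rightarrow\EFXpm\), we need two things: that \(S_i^-(A_i)\cup S_i^+(A_j)\) is nonempty, and that every \(e\) in it satisfies \(v_i(A_i\setminus\{e\})\ge v_i(A_j\setminus\{e\})\).

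\emph{Nonemptiness.} By \Cref{lem:violation}, the pair \(i,j\) satisfying EF1 means that \(A_i\) is not unsafe or \(A_j\) is not robust for \(v_i\). Since \(v_i(A_i)=0\), "not unsafe" gives some \(e\in A_i\) with \(v_i(A_i\setminus\{e\})=1\), i.e.\ \(e\in S_i^-(A_i)\). Since \(v_i(A_j)=1\), "not robust" gives some \(e\in A_j\) with \(v_i(A_j\setminus\{e\})=0\), i.e.\ \(e\in S_i^+(A_j)\).

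\emph{Every tested item works.} This step needs care with the convention that \(A_j\setminus\{e\}=A_j\) when \(e\notin A_j\), and symmetrically for \(A_i\). Because \(A_i\) and \(A_j\) are disjoint, a tested item belongs to exactly one of them.
\begin{itemize}
\item For \(e\in S_i^-(A_i)\), the left side is \(v_i(A_i\setminus\{e\})=1\), which is at least anything, so the inequality holds.
\item For \(e\in S_i^+(A_j)\), the right side is \(v_i(A_j\setminus\{e\})=0\), so the inequality holds.
\end{itemize}
Thus the universal quantifier in the EFX definition costs nothing: in the Boolean domain every strictly-signed marginal item already fully eliminates envy. This is the only conceptual point of the proof, and it is immediate from the two-valued range.
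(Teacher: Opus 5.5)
Your proof is correct and takes essentially the same route as the paper. Both arguments fix an envious pair and use the fact that, on the Boolean domain, every item of \(S_i^+(A_j)\) or \(S_i^-(A_i)\) eliminates the envy by itself. The only difference is cosmetic: you obtain nonemptiness of the tested set from the violation characterization (\Cref{lem:violation}), whereas the paper directly classifies the EF1 witness as an element of \(S_i^+(A_j)\) or \(S_i^-(A_i)\), which amounts to the same observation.
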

 
\begin{proof}
Suppose first that \(A\) is \(\EFXpm\). For an envious pair \(i,j\), the set \(S_i^+(A_j)\cup S_i^-(A_i)\) is nonempty, and removal of any item in this set eliminates the envy from the corresponding side. Such an item is an EF1 witness.
 
Conversely, suppose \(A\) is EF1 and \(i\) envies \(j\). Boolean valuations force \(v_i(A_i)=0\) and \(v_i(A_j)=1\). If an EF1 witness \(e\) lies in \(A_j\), then \(v_i(A_j\setminus\{e\})=0\), so \(e\in S_i^+(A_j)\). If it lies in \(A_i\), then \(v_i(A_i\setminus\{e\})=1\), so \(e\in S_i^-(A_i)\). Thus the tested set for \(\EFXpm\) is nonempty. Moreover, every item in \(S_i^+(A_j)\) changes the Boolean value of \(A_j\) from one to zero upon deletion, and every item in \(S_i^-(A_i)\) changes the value of \(A_i\) from zero to one upon deletion. Hence every tested item eliminates the envy, as required.
\end{proof}
 
In particular, \Cref{thm:universal-ef1} already yields a complete \(\EFXpm\) allocation for every unnormalized Boolean instance, and each one-sided guarantee established below is automatically an \(\EFXpm\) guarantee.
 
\subsection{\texorpdfstring{The goods side: $\EFoneplus$}{The goods side: EF1+}}

We now prove the fairness and efficiency guarantees of the allocations returned by \Cref{alg:goodsopt}.
\begin{lemma}\label{ilem:plus}
Every allocation $A$ returned by \Cref{alg:goodsopt} is Pareto optimal. If moreover $Z\neq\varnothing$, then $A$ is both $\EFoneplus$ and $\EFXzm$ (which further imply EF1 and $\EFXpm$).
\end{lemma}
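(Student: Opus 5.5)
Pareto optimality is immediate from the first stage. If $B$ Pareto dominated $A$, then $v_i(B_i)\ge v_i(A_i)$ for all $i$ with one strict inequality. Since values are Boolean, this gives $W(B)>W(A)$, contradicting $A\in\cW$.

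For fairness, assume $Z\neq\varnothing$ and suppose $i$ envies $j$, i.e.\ $v_i(A_i)=0$ and $v_i(A_j)=1$. I would argue by exchange, building from $A$ a new allocation $B$ that either has larger welfare or equal welfare and strictly smaller $C$. The basic move is to hand $A_j$ to $i$ (or a one-item-smaller version of it) and give $j$ something cheap.

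\textbf{Step 1: $v_j(A_j)=1$.} Suppose instead $v_j(A_j)=0$. Let $B$ give $A_j$ to $i$ and $A_i$ to $j$, all others unchanged. Agent $i$ gains one unit and $j$ changes by $v_j(A_i)-0\ge 0$, so $W(B)>W(A)$, a contradiction. Hence $v_j(A_j)=1$, and so $A_j\neq\varnothing$, because $j$ values $A_j$ and $i$ values it at $1$ while $v_i(\varnothing)$ could be $1$. (I will handle the empty case in Step 3.)

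\textbf{Step 2: $\EFoneplus$.} By \Cref{ilem:viol}(a), a violation means $A_j$ is robust for $v_i$. Then $A_j\neq\varnothing$: if $A_j=\varnothing$, robustness gives $v_i(\varnothing)=1$, so $i\in E$; but then transferring $A_i$ to some $k\in Z$ and leaving $i$ empty raises $i$ to $1$ without lowering anyone, since $k$'s value over $A_k\cup A_i$ is unchanged only if we are careful. This last point is delicate, so I would avoid it by instead swapping bundles of $i$ and $j$, which raises $W$ when $v_j(A_i)$ ties.

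For $A_j\neq\varnothing$, pick $e\in A_j$ and set $B_i=A_j\setminus\{e\}$, which has $v_i(B_i)=1$ by robustness. Give $A_i\cup\{e\}$ to $j$. Welfare stays at least the same: $i$ gains $1$, $j$ loses at most $1$. So $W(B)=W(A)$ and $v_j(A_i\cup\{e\})=0$. Now compare $C$:
\[
C(B)-C(A)=(|A_j|-1)-|A_j|=-1<0,
\]
contradicting the minimality of $C$. This is where minimizing $C$ is used. I expect the main obstacle to be the remaining edge case, where robustness involves the empty set.

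\textbf{Step 3: $\EFXzm$.} If $A_j=\varnothing$ is envied by $i$, then $i\in E$. Moving all of $A_i$ to an agent $k\in Z$ risks lowering $v_k$. Instead I would set $B_i=\varnothing$ and $B_j=A_i$. If $v_j(A_i)=1$, welfare rises; otherwise welfare is equal and
\[
C(B)=C(A)-v_j(\varnothing)\cdot 0,
\]
so $C$ is unchanged and this gives no contradiction. Here $Z\neq\varnothing$ must be used, presumably via the $\EFXzm$ argument. For $\EFXzm$ with $A_j\neq\varnothing$, I must show every $e\in S_i^+(A_j)\cup S_i^0(A_j)$ kills envy, and that there is no $e\in S^0$. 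That is, every $e\in A_j$ has $v_i(A_j\setminus\{e\})=0$, and $S_i^-(A_i)$-items work automatically by the Boolean argument of \Cref{prop:efx-ef1}. If some $e$ had $v_i(A_j\setminus\{e\})=1$, the exchange in Step 2 decreases $C$, a contradiction. Applying this exchange to each single $e$, rather than requiring robustness, gives $\EFXzm$ in one stroke and also implies $\EFoneplus$ whenever $A_j\neq\varnothing$.

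The empty-envied-bundle case is the crux. I would try to show that an optimal $A$ never leaves an $E$-agent holding a value-$0$ bundle while some agent holds an empty bundle, by moving $A_i$ onto an agent in $Z$ whose current value is $0$, or else combining it with the $C$-tiebreak.
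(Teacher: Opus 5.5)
\textbf{There is a genuine gap: you never prove that an envied bundle is nonempty.} Your Pareto-optimality argument is correct and matches the paper's. So is your exchange for a nonempty envied bundle: giving $A_j\setminus\{e\}$ to $i$ and $A_i\cup\{e\}$ to $j$ either raises $W$, or ties $W$ and drops $C$ by one. That is exactly the paper's Claim~(i).

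Nonemptiness of $A_j$ is the only place where $Z\neq\varnothing$ is used, and without it the lemma is false (\Cref{iprop:noplus}). None of your attempts at it goes through:
\begin{itemize}
\item In Step~1, the inference from $v_j(A_j)=1$ to $A_j\neq\varnothing$ fails whenever $j\in E$.
\item The $i$--$j$ swap suggested in Step~2 leaves both $W$ and $C$ unchanged when $v_j(A_i)=0$, as your own Step~3 computation shows.
\item Step~3 ends with ``presumably'' and ``I would try''.
\end{itemize}
So when $A_j=\varnothing$ you have no $\EFoneplus$ witness, and no guarantee that the $\EFXzm$ tested set is nonempty. As written, the proposal does not prove the lemma.

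The missing step is the move you rejected, completed by the $C$ tie-break. Suppose $A_j=\varnothing$. Then $v_i(\varnothing)=v_i(A_j)=1$, so $i\in E$ and $A_i\neq\varnothing$. Take any $k\in Z$; necessarily $k\neq i$. Let $B_i=\varnothing$ and $B_k=A_k\cup A_i$, with all other bundles fixed. Then $W(B)-W(A)=1+v_k(A_k\cup A_i)-v_k(A_k)$, which is positive unless $v_k(A_k)=1$ and $v_k(A_k\cup A_i)=0$. Your worry that $v_k$ may drop is exactly this remaining case, and the tie-break handles it. Since $v_k(A_k)=1\neq v_k(\varnothing)$, we have $A_k\neq\varnothing$. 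The newly satisfied agent $i$ holds the empty bundle and so contributes nothing to $C$, while $k$ stops contributing $|A_k|$. Hence $C(B)-C(A)=-|A_k|<0$. In every case $A\notin\OPT^{+}(v)$, a contradiction.

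The paper reaches the same contradiction in two stages. It first dumps $A_i$ onto any agent with a nonempty bundle; the same computation shows every other bundle must be empty. It then swaps $i$ with an agent of $Z$, which raises $W$ by at least one.

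Once nonemptiness is established, your exchange applied to every $e\in A_j$ gives $S_i^{+}(A_j)=A_j\neq\varnothing$, with every tested item eliminating the envy. That yields both $\EFoneplus$ and $\EFXzm$.
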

 
\begin{proof}
Throughout, let $A$ be an allocation returned by \Cref{alg:goodsopt}.
 
\emph{Pareto optimality.} If $B$ Pareto dominates $A$, then $v_i(B_i)\ge v_i(A_i)$ for all $i$
with one strict inequality, so $W(B)>W(A)$, contradicting the definition of $\OPT^{+}(v)$.
 
\emph{Fairness.} Suppose $Z\neq\varnothing$ and let $(a,j)$ be an envious ordered pair, so that
\begin{equation}\label{eq:iviol}
  v_a(A_a)=0\qquad\text{and}\qquad v_a(A_j)=1.
\end{equation}
We establish two claims: (i) $v_a(A_j\setminus\{e\})=0$ for \emph{every} $e\in A_j$, and (ii) $A_j\neq\varnothing$. Each claim, when violated, produces an allocation that \Cref{alg:goodsopt} would strictly prefer to $A$, that is, an allocation with larger $W$, or with the same $W$ and smaller $C$; the only exception is one configuration, which forces $Z=\varnothing$.
 
\emph{Claim (i): deleting any item of the envied bundle eliminates the envy.} Suppose some
$e\in A_j$ has $v_a(A_j\setminus\{e\})=1$. Reallocate
$A'_a=A_j\setminus\{e\}$, $A'_j=A_a\cup\{e\}$, other bundles fixed. Then
$v_a(A'_a)=1$, so by \Cref{eq:iviol},
$W(A')-W(A)=1+v_j(A_a\cup\{e\})-v_j(A_j)$. If $v_j(A_j)=0$, or if $v_j(A_j)=v_j(A_a\cup\{e\})=1$,
the change in $W$ is positive, contradicting maximality of $W$ achieved by $A$. In the only other case, $v_j(A_j)=1$ and
$v_j(A_a\cup\{e\})=0$, welfare is unchanged while the joint contribution of $a,j$ to $C$ changes by
$(|A_j|-1)+0-(0+|A_j|)=-1$, so $C$ strictly drops, again a contradiction.
 
\emph{Claim (ii): the envied bundle is nonempty.} Suppose $A_j=\varnothing$. We first locate agent $a$
in $E$ with a nonempty bundle, then show that every other agent holds the empty bundle, and finally
conclude that $Z=\varnothing$, contrary to the hypothesis.

From \Cref{eq:iviol},
$v_a(\varnothing)=v_a(A_j)=1$, so $a\in E$; and $v_a(A_a)=0$ gives $A_a\neq\varnothing$.
 
Now let $k\neq a$. If $A_k\neq\varnothing$, dump $A_a$ onto $k$:
$A'_a=\varnothing$, $A'_k=A_k\cup A_a$. As $a\in E$, $v_a(A'_a)=1$, so
$W(A')-W(A)=1+v_k(A_k\cup A_a)-v_k(A_k)$. This is positive unless $v_k(A_k)=1$ and
$v_k(A_k\cup A_a)=0$, in which case welfare is unchanged and the joint contribution of $a,k$ to $C$
changes by $0-|A_k|<0$. Either way we contradict the optimality of $A$. Hence $A_k=\varnothing$, for all $k\neq a$.
 
Finally, fix any $k\neq a$, so that $A_k=\varnothing$ by the previous paragraph. If $v_k(\varnothing)=0$, swap $a$
and $k$: $A'_a=\varnothing$, $A'_k=A_a$. Then $v_a(A'_a)=1$ (up from $0$) and
$v_k(A_a)\ge0=v_k(\varnothing)$, so $W(A')-W(A)\ge1>0$, a contradiction. Thus every $k\neq a$ lies in
$E$, and with $a\in E$ this gives $Z=\varnothing$, contradicting the hypothesis. This proves Claim (ii).
 
By Claim (ii) the envied bundle $A_j$ is nonempty, and by Claim (i) any one of
its items is an $\EFoneplus$ witness; hence $A$ is $\EFoneplus$. For $\EFXzm$, the tested sets of the
envious pair $(a,j)$ are $P_{aj}=S_a^+(A_j)\cup S_a^0(A_j)$ and $O_a=S_a^-(A_a)$. By Claim (i)
every item of $A_j$ has deletion marginal $+1$ for $a$, so $P_{aj}=A_j$, which is nonempty by
Claim (ii), and every $e\in P_{aj}$ eliminates the envy by Claim (i). Every $e\in O_a$ eliminates
it automatically: $e\in S_a^-(A_a)$ means $v_a(A_a\setminus\{e\})=1\ge v_a(A_j)$. Hence $A$ is
$\EFXzm$, and by \Cref{prop:efx-ef1} also EF1 and $\EFXpm$.
\end{proof}
 
The role of the assumption $Z\neq\varnothing$ is to
exclude an empty envied bundle, which the proof handles using one agent of $Z$.
Note that \Cref{ilem:plus} recovers the $\EFXzm$ guarantee of B\'erczi et al.~\cite[Theorem~7]{berczi2024} for the
normalized domain $\cVz^n$, and extends it to every profile with $Z\ne\varnothing$.

Next we show that the assumption $Z\neq\varnothing$ in \Cref{ilem:plus} cannot be dropped: on the $Z = \varnothing$
domain, the $\EFoneplus$ guarantee fails not merely for the allocation returned by \Cref{alg:goodsopt} but for every allocation.
 
\begin{proposition}\label{iprop:noplus}
If $Z=\varnothing$, there are instances with no $\EFoneplus$ allocation.
\end{proposition}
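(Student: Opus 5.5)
The plan is to exhibit the smallest possible instance in which every complete allocation must leave some agent holding the empty bundle, and in which that bundle is envied. The key observation is that under $Z=\varnothing$ the empty bundle is robust for every agent's valuation, since $\lowerstar\varnothing=\{\varnothing\}$ and $v_i(\varnothing)=1$. By \Cref{ilem:viol}(a), any agent $i$ with $v_i(A_i)=0$ who faces an empty bundle $A_j$ therefore violates $\EFoneplus$. Directly from the definition, the reason is that there is no item $e\in A_j=\varnothing$ whose removal could eliminate the envy.

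Concretely, I will take $N=\{1,2\}$ and $M=\{g\}$, with $v_1=v_2=v$ where $v(\varnothing)=1$ and $v(\{g\})=0$. Then $Z=\varnothing$. There are exactly two complete allocations, and in each of them one agent, say $i$, receives $\{g\}$ while the other agent $j$ receives $\varnothing$.

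For the verification I will argue as follows. In either allocation, $v_i(A_i)=v(\{g\})=0<1=v_i(\varnothing)=v_i(A_j)$, so $i$ envies $j$. Since $A_j=\varnothing$, the $\EFoneplus$ condition requires an item $e\in A_j$, and no such item exists. Hence neither allocation is $\EFoneplus$.

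Optionally, I will remark that the construction extends to any $n\ge2$ and any $m\ge1$. Take $m<n$ items (for instance a single item) and give every agent a valuation that is $1$ only on $\varnothing$. Some agent must receive an item, and since $m<n$, some agent is left with $\varnothing$; the former agent envies the latter without recourse. I do not expect any real obstacle here. The only point needing care is confirming that the EF1${}^{+}$ definition offers no escape when the envied bundle is empty, and it does not, since it quantifies over $e\in A_j$.
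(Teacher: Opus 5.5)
Your proof is correct and is exactly the paper's: the same two-agent, one-item instance with $v_1=v_2=\mathbf 1[S=\varnothing]$, where the empty bundle is robust for the agent holding the item, so \Cref{ilem:viol}(a) gives a violation in every allocation. One small caveat on your optional remark: with $v=\mathbf 1[S=\varnothing]$ the extension works only for $1\le m<n$, because for $m\ge n$ giving every agent a nonempty bundle is envy-free, so the phrase ``any $m\ge1$'' should be dropped.
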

 
\begin{proof}
Take $n=2$, $m=1$, and $v_i(S)=\mathbf 1[S=\varnothing]$ for both agents (so both lie in $E$). The single
item goes to some agent $k$, giving $v_k(A_k)=0$, while the other agent holds $\varnothing$, which
is robust for $v_k$ and valued $1$ by $v_k$. By \Cref{ilem:viol}(a), this ordered pair violates
$\EFoneplus$, and this holds for every allocation.
\end{proof}
 
\subsection{\texorpdfstring{The chores side: $\EFoneminus$}{The chores side: EF1-}}

\Cref{ilem:minus-all-E} establishes the fairness guarantees of the allocations returned by \Cref{alg:choresopt}.
 
\begin{lemma}\label{ilem:minus-all-E}
If $Z=\varnothing$, then every allocation $A$ returned by \Cref{alg:choresopt} is $\EFoneminus$, and hence also $\EFXpm$.
\end{lemma}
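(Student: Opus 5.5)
The plan is a direct exchange argument. I assume an $\EFoneminus$ violation in an allocation $A\in\OPT^{-}(v)$ and exhibit an allocation $A'$ that \Cref{alg:choresopt} strictly prefers. That means either $W'(A')>W'(A)$, or $W'(A')=W'(A)$ together with $C'(A')>C'(A)$. The $\EFXpm$ conclusion then follows at once from \Cref{prop:efx-ef1}, since $\EFoneminus$ implies EF1.

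\emph{Setting up the violation.} Suppose $Z=\varnothing$ and let $(a,j)$ be an ordered pair with $v_a(A_a)=0$ and $v_a(A_j)=1$ that violates $\EFoneminus$. By \Cref{ilem:viol}(b), $A_a$ is unsafe for $v_a$, so $v'_a(A_a)=0$. Since $a\in E$, every bundle of size at most one has $v'_a$-value $1$. Hence $|A_a|\ge 2$, and in particular we may fix some item $e\in A_a$.

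\emph{The exchange.} Define $A'_a=A_j\cup\{e\}$ and $A'_j=A_a\setminus\{e\}$, keeping all other bundles fixed. The key observation is that $A_j\in\lowerstar(A_j\cup\{e\})$, so $v'_a(A'_a)\ge v_a(A_j)=1$. Agent $a$ therefore moves from $v'$-value $0$ to $1$. The change in the primary score is
\[
W'(A')-W'(A)=1+v'_j(A_a\setminus\{e\})-v'_j(A_j)\ \ge\ 0 .
\]
If this change is positive, we contradict the maximality of $W'$. Otherwise $v'_j(A_j)=1$ and $v'_j(A_a\setminus\{e\})=0$, and the joint contribution of $a$ and $j$ to $C'$ changes from $0+|A_j|$ to $(|A_j|+1)+0$. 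So $C'$ strictly increases while $W'$ is unchanged, contradicting the secondary maximization. Either way $A\notin\OPT^{-}(v)$, so no violation exists.

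\emph{Main obstacle.} The delicate point is choosing the exchange so that the tie case in $W'$ is broken in the right direction. A plain swap of $A_a$ and $A_j$ leaves $C'$ unchanged in the bad case, so it does not work. Moving the single item $e$ along with $A_j$ to agent $a$ is exactly what makes $C'$ increase by one, which is why the rule maximizes (rather than minimizes) $C'$. The hypothesis $Z=\varnothing$ enters only to guarantee that the unsafe bundle $A_a$ is nonempty, which is needed to pick $e$.
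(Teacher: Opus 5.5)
Your proposal is correct and follows essentially the same route as the paper's proof. Both use the exchange $A'_a=A_j\cup\{e\}$, $A'_j=A_a\setminus\{e\}$, which raises $W'$ or, in the tie case, raises $C'$ by exactly one. Both then deduce $\EFXpm$ from \Cref{prop:efx-ef1}.
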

 
\begin{proof}
Suppose the pair $(a,j)$ violates $\EFoneminus$ in allocation $A$. By \Cref{ilem:viol}(b), $v_a'(A_a)=0$ and
$v_a(A_j)=1$. Since $a\in E$ and every bundle $S$ of size at most $1$ has $v_a'(S)=1$, the
condition $v_a'(A_a)=0$ forces $|A_a|\ge 2$. 
 
Fix $c\in A_a$. Reallocate agent $a$'s bundle to $A_j\cup\{c\}$ and give the remainder to $j$:
\[
  A'_a=A_j\cup\{c\},\qquad A'_j=A_a\setminus\{c\},\qquad\text{other bundles fixed.}
\]
Deleting $c$ from $A'_a$ leaves $A_j$, which $a$ values, so $v_a'(A'_a)=1$; thus,\[
  W'(A')-W'(A)=1+v_j'(A'_j)-v_j'(A_j).
\]
If $v_j'(A_j)=0$, or if $v_j'(A_j)=1=v_j'(A'_j)$, the increase in $W'$ is positive, contradicting maximality of
$W'$. In the remaining case $v_j'(A_j)=1$ and $v_j'(A'_j)=0$, welfare $W'$ is unchanged and we
compare $C'$, which $A$ maximizes. The joint contribution of $a,j$ to $C'$ changes by
\[
  \underbrace{|A_j|+1}_{v_a'(A'_a)=1,\ |A'_a|=|A_j|+1}+\underbrace{0}_{v_j'(A'_j)=0}
  \;-\;\Bigl(\underbrace{0}_{v_a'(A_a)=0}+\underbrace{|A_j|}_{ v_j'(A_j)=1}\Bigr)=1,
\]
so $C'$ strictly increases, again a contradiction. Hence no $\EFoneminus$ violation exists. Finally, $\EFoneminus$ implies EF1, which coincides with $\EFXpm$ by \Cref{prop:efx-ef1}.
\end{proof}

\begin{remark}
\label{rem:choresopt-no-efx}
Unlike its goods-side counterpart (\Cref{ilem:plus}), the guarantee of \Cref{ilem:minus-all-E} does
not upgrade to either of the two stronger EFX notions. Take $M=\{a,b,c\}$, let $v_2(S) = 1$ for all $S$, and let
$v_1(S)=1$ exactly for $S\in\{\varnothing,\{b\},\{c\}\}$, so $Z=\varnothing$. The allocation
$A=(\{a,b\},\{c\})$ lies in $\OPT^{-}(v)$ (\Cref{alg:choresopt}): both agents have $v_i'(A_i)=1$ (deleting $a$ from
$\{a,b\}$ leaves $\{b\}$), so $W'=n$ and $C'=m$ are both maximal. Agent $1$ envies agent $2$, and
the pair fails both notions. For $\EFXpz$, the item $b$ has deletion marginal $0$ in $A_1$, so it is
tested, yet $v_1(\{a\})=0<v_1(A_2)$. For $\EFXzm$, the item $c$ has deletion marginal $0$ in $A_2$,
so it is tested, yet $v_1(A_1)=0<v_1(\varnothing)$. For $\EFXzm$ this is no loss, since the notion
can fail to exist when $Z=\varnothing$ (\Cref{thm:efx0minus}). But $\EFXpz$ does
exist when $Z=\varnothing$, as we establish in \Cref{thm:negative-efx}.

\end{remark}
 
\begin{lemma}\label{ilem:minus-Z1}
If $|Z|=1$, there is a complete allocation that is Pareto optimal and envy-free; in particular it is
both $\EFoneplus$ and $\EFoneminus$.
\end{lemma}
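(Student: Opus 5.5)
Let $Z=\{z\}$, so every agent other than $z$ lies in $E$. The natural candidate is the allocation $A$ that gives all of $M$ to $z$ and the empty bundle to everyone else. It is envy-free. Each $i\in E$ receives $v_i(\varnothing)=1$, which is maximal, so $i$ envies no one. Agent $z$ values every other bundle at $v_z(\varnothing)=0\le v_z(A_z)$, so $z$ envies no one either. Envy-freeness trivially gives both $\EFoneplus$ and $\EFoneminus$, since the first disjunct of each definition already holds.

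This allocation need not be Pareto optimal, because $v_z(M)$ may be $0$ while some other split gives $z$ value $1$. So I would refine the choice: among all complete allocations, pick one that maximizes welfare $W$ subject to envy-freeness. A cleaner route is to pick a bundle $S\subseteq M$ with $v_z(S)=1$, if one exists, and then distribute $M\setminus S$ among $E$ so that every agent of $E$ ends with value $1$. That can fail, since an $E$-agent may dislike every nonempty bundle it could receive.

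The plan I would actually carry out is this. Let $A$ maximize $W$ over all complete allocations; it is PO, as in \Cref{ilem:plus}. Among all such maximizers, choose $A$ to maximize the number of items held by $z$, or equivalently to minimize $\sum_{i\in E}|A_i|$. I would then show $A$ is envy-free.

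First, no $i\in E$ with $v_i(A_i)=0$ can exist. Otherwise move $A_i$ to $z$, giving $A'_i=\varnothing$ and $A'_z=A_z\cup A_i$. Agent $i$ gains $1$, and $z$ loses at most $1$. If $z$ does not lose, $W$ strictly increases, a contradiction. If $z$ does lose, then $W$ is unchanged while $z$'s item count strictly grows, since $A_i\neq\varnothing$ because $v_i(\varnothing)=1$. This contradicts the tie-break.

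Second, consider $z$ when $v_z(A_z)=0$. Any envy must be toward a bundle $A_j$, $j\in E$, with $v_z(A_j)=1$, so $A_j\ne\varnothing$. Swap to $A'_z=A_j$ and $A'_j=\varnothing$, with $A_z$ dumped onto $j$ as well. Concretely, set $A'_z=A_j$ and $A'_j=A_z$, and then also try giving $A_z$ to $z$ while $j$ takes $\varnothing$.

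The cleanest variant I would use is $A'_z=A_z\cup A_j$ versus $A''$ with $A''_z=A_j$ and $A''_j=\varnothing$ plus $A_z$ placed with $z$. Rather than this, I would handle it by giving $j$ the empty bundle (value $1$) and giving $z$ the bundle $A_j$ while merging $A_z$ into $z$ only if it keeps value $1$. This step is the main obstacle: the leftover items $A_z$ must go somewhere without destroying someone's value.

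I expect the fix to be an exchange. Set $A'_z=A_j$ and $A'_j=A_z$. Then $z$ gains $1$, and $j$ goes from $1$ to $v_j(A_z)\ge0$, so $W$ does not drop. If $W$ strictly increases we have a contradiction. If not, $W$ is equal and $z$'s item count changes from $|A_z|$ to $|A_j|$. That comparison alone is not monotone, so I would strengthen the tie-break to a lexicographic one: maximize $W$, then maximize $v_z(A_z)$, then maximize $|A_z|$. Under this order the exchange strictly improves $v_z$ at equal $W$, giving the contradiction. The first step remains valid because it does not change $v_z$ unless $W$ strictly rises or $z$ loses, and I would recheck that case carefully under the lexicographic order.
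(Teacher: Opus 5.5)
\textbf{Gap: the lexicographic tie-break.} Your final plan fails. Ranking allocations by $W$, then $v_z(A_z)$, then $|A_z|$ invalidates your first step, and the allocation it selects need not be envy-free. The problem is exactly the case you postponed: $v_z(A_z)=1$ and $v_z(A_z\cup A_i)=0$. There, moving $A_i$ to $z$ leaves $W$ unchanged but lowers $v_z$, so your order gives no contradiction. This case really occurs. Take $N=\{z,e\}$ and $M=\{a,b\}$, with $v_z(S)=1$ iff $S=\{a\}$, and with $v_e(\varnothing)=v_e(\{a\})=1$ and $v_e(\{b\})=v_e(\{a,b\})=0$, so that $Z=\{z\}$.
\begin{itemize}
\item No split gives both agents value one, so the maximum welfare is $1$.
\item It is attained by $(\{a\},\{b\})$, $(\{b\},\{a\})$ and $(\{a,b\},\varnothing)$.
\item Your order selects $(\{a\},\{b\})$, the only one with $v_z=1$.
\item In that allocation $e$ holds a bundle she values at zero and values $z$'s bundle $\{a\}$ at one, so $e$ envies $z$; the allocation is not even $\EFoneplus$.
\end{itemize}
The allocation the lemma needs here is $(\{a,b\},\varnothing)$, which your rule ranks below $(\{a\},\{b\})$.

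\textbf{The missing idea.} You already had the right benchmark, the all-to-$z$ allocation $A^0$, but did not push it through. Every agent of $E$ already attains the maximum value $1$ under $A^0$. Hence any allocation $B$ that Pareto dominates $A^0$ must keep all of $E$ at value $1$ and raise $z$ to value $1$, so $B$ gives every agent value $1$. Such a $B$ is trivially envy-free and Pareto optimal. That is the paper's entire proof.

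\textbf{Your first tie-break also works.} Maximizing $W$ and then $|A_z|$ was sound, and your Step 1 is valid under it. For agent $z$, no local exchange is needed. If $v_z(A_z)=0$, Step 1 gives $W(A)=n-1$, which must be the maximum welfare. Since $A^0$ has $W(A^0)\ge n-1$, it is also a welfare maximizer, with $|A^0_z|=m$. The tie-break then forces $A=A^0$, where $z$ sees only empty bundles, values them at $0$, and envies no one.
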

 
\begin{proof}
Let $Z=\{z\}$ and give all items to $z$: $A_z=M$ and $A_i=\varnothing$ for $i\neq z$. Every $i\neq z$
lies in $E$ and holds $\varnothing$ at value $1$, the maximum, so envies no one; and $z$ has
$v_z(\varnothing)=0\le v_z(M)$, so $z$ envies no one. Thus $A$ is envy-free. If $A$ is Pareto optimal,
we are done. Otherwise some $B$ Pareto dominates $A$; every $i\neq z$ already attains value $1$, so
$v_i(B_i)=1$ there, and the strict gain must come from $z$, giving $v_z(B_z)=1$. Then all agents
attain value $1$ under $B$, so $B$ is envy-free, and it is Pareto optimal because no value can
exceed $1$.
\end{proof}
 
\begin{remark}
\label{rem:Aplus-Z1}
When $|Z|=1$, every allocation in $\OPT^{+}(v)$, the optimal set of \Cref{alg:goodsopt}, is in fact envy-free. Let $A\in\OPT^{+}(v)$ and $Z=\{z\}$. Giving $M$ to $z$ and $\varnothing$ to every agent in $E$ yields welfare at least $n-1$. If the maximum welfare is $n$, then $A$ gives every agent value one and is envy-free. Otherwise, the maximum welfare is $n-1$, so $v_z(M)=0$ and the benchmark allocation attains the pair $(n-1,0)$; the secondary objective then gives $C(A)=0$, so every satisfied agent has an empty bundle. Because $v_z(\varnothing)=0$, agent $z$ is the unique unsatisfied agent; hence every agent in $E$ is satisfied with the empty bundle and completeness forces $A_z=M$. Agent $z$ values her own bundle and every other bundle at zero, while every agent in $E$ already attains value one. Thus no agent envies another.
\end{remark}

The next proposition identifies the domain where $\EFoneminus$ fails to exist.
 
\begin{proposition}\label{iprop:nominus}
If $|Z|\ge 2$, there are instances with no $\EFoneminus$ allocation.
\end{proposition}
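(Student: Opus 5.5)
We will exhibit the simplest possible instance in which every allocation leaves some agent of $Z$ with an empty bundle while another agent holds something that agent values. The key observation is that for an agent $i\in Z$ the empty bundle is unsafe, since $\lowerstar\varnothing=\{\varnothing\}$ and $v_i(\varnothing)=0$. So by \Cref{ilem:viol}(b), an agent $i\in Z$ holding $\varnothing$ can never have its envy removed from its own side. Any envy it feels is therefore an $\EFoneminus$ violation.

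\emph{Construction.} We fix any $n$ and any target $|Z|=k\ge 2$, and take a single item, $M=\{g\}$. For each $i\in Z$ we set $v_i(S)=\ind{S=\{g\}}$, so $v_i(\varnothing)=0$ and indeed $i\in Z$. For each $i\in E$ we set $v_i\equiv 1$, so $i\in E$. This realizes every admissible pair $(n,k)$ with $k\ge2$; for the bare existence claim, $n=k=2$ already suffices.

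\emph{Every allocation fails.} In any complete allocation $A$, the item $g$ lies in exactly one bundle $A_j$, and all other bundles are empty. Since $|Z|\ge 2$, there is some $i\in Z$ with $i\ne j$, hence $A_i=\varnothing$. Then $v_i(A_i)=0$ and $v_i(A_j)=v_i(\{g\})=1$, so $i$ envies $j$. This holds whether $j$ lies in $Z$ or in $E$, because $v_i$ depends only on the bundle. Since $A_i=\varnothing$ is unsafe for $v_i$, \Cref{ilem:viol}(b) shows that the ordered pair $(i,j)$ violates $\EFoneminus$. Hence no allocation is $\EFoneminus$.

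There is no real obstacle here. The only point needing care is making sure the construction is valid for every instance size with $|Z|\ge2$, including profiles that also contain agents of $E$. The constant valuation for $E$ agents handles this: they never envy, and the argument uses only two distinct agents of $Z$, which the pigeonhole step (one item, at least two $Z$ agents) guarantees.
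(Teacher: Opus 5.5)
Your proposal is correct and uses essentially the same construction as the paper: a single item $g$, two agents of $Z$ who value only the singleton $\{g\}$, so in every allocation one of them holds $\varnothing$, envies the holder of $g$, and has no item of her own to remove. The only difference is cosmetic: you fix the remaining valuations (constant $1$ on $E$, the same indicator on all of $Z$), whereas the paper leaves them arbitrary.
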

 
\begin{proof}
Let $z_1,z_2\in Z$ be two agent that value the empty bundle at zero and take $m=1$ such that $v_{z_1}=v_{z_2}$ equal to $\mathbf 1[S=\{g\}]$, pick the valuation functions of the other agents arbitrarily. Under any allocation the single item is
held by one agent, so at least one of $z_1,z_2$ holds $\varnothing$. This agent's envy towards the agent having the single item, cannot be eliminated via removal of an item from its own bundle (which is empty). Hence no allocation is $\EFoneminus$.
\end{proof}
 
\subsection{The trichotomy and its interaction with Pareto optimality}

In this section, we prove \Cref{ithm:trich}. Before proving that, in the following proposition, we identify the condition where Pareto optimality is incompatible with two-sided EF1.
 
\begin{proposition}
\label{prop:po-impossibility}
For every \(n\ge2\) and \(m\ge2\), there is a profile in \(\cVo^n\) at which no allocation is both Pareto optimal and EF1.
\end{proposition}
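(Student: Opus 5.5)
We use the simplest profile in \(\cVo^n\): every agent dislikes every nonempty bundle. Pareto optimality will force all items onto a single agent, and \Cref{lem:violation} will then show that this agent's envy cannot be removed.

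\emph{Construction.} Fix \(n\ge2\) and \(m\ge2\), and give every agent the same valuation
\[
v_i(S)=\mathbf 1[S=\varnothing]\qquad\text{for all } i\in N .
\]
Then \(v_i(\varnothing)=1\) for every \(i\), so the profile lies in \(\cVo^n\).

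\emph{Step 1: Pareto optimal allocations.} Let \(A\) be any complete allocation and let \(K=\{i:A_i\neq\varnothing\}\). Since \(m\ge1\), the set \(K\) is nonempty. Every agent in \(K\) has value \(0\) and every agent outside \(K\) has value \(1\).

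Suppose \(|K|\ge2\). Pick \(k\in K\) and move all items to \(k\). Each agent in \(K\setminus\{k\}\) rises from \(0\) to \(1\), while \(k\) stays at \(0\) and everyone else stays at \(1\). This is a Pareto improvement, so \(A\) is not PO.

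Hence every PO allocation has \(|K|=1\), meaning one agent \(k\) receives \(M\) and all others receive \(\varnothing\). Conversely, such allocations have welfare \(n-1\). This is the maximum possible, because some agent must receive a nonempty bundle and so have value \(0\). Any Pareto improvement would raise welfare, so these allocations are exactly the PO ones.

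\emph{Step 2: every PO allocation violates EF1.} Take a PO allocation with \(A_k=M\), and any \(j\neq k\); such a \(j\) exists because \(n\ge2\), and \(A_j=\varnothing\).
\begin{itemize}
\item Since \(|M|=m\ge2\), every member of \(\lowerstar M\) is nonempty. So \(v_k\) is \(0\) on all of \(\lowerstar M\), and \(A_k\) is unsafe for \(v_k\).
\item The bundle \(\varnothing\) is robust for \(v_k\in\cVo\), since \(\lowerstar\varnothing=\{\varnothing\}\) and \(v_k(\varnothing)=1\).
\end{itemize}
By \Cref{lem:violation}, the pair \((k,j)\) violates EF1. Therefore no allocation is both PO and EF1.

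The only step that needs care is the PO characterization in Step 1, in particular showing that concentrating the items is a strict Pareto improvement whenever \(|K|\ge2\). The hypothesis \(m\ge2\) is used exactly once, to make \(M\) unsafe for \(v_k\): with a single item, removing it would leave \(\varnothing\), which \(k\) values at \(1\), and the envy would be eliminated.
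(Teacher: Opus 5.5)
Your proposal is correct and follows the paper's proof: the same profile \(v(S)=\ind{S=\varnothing}\), the same reduction showing that Pareto optimality forces a single agent to hold \(M\), and the same appeal to \Cref{lem:violation} with \(m\ge2\) making \(M\) unsafe. The differences are minor: you move all items onto one agent rather than merging two bundles, and you prove the (unneeded) converse by a welfare count rather than directly.
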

 
\begin{proof}
Let every agent have the identical valuation \(v(S)=\ind{S=\varnothing}\). We first characterize the Pareto-optimal allocations. If two agents hold nonempty bundles, moving one entire bundle to the other agent makes its former owner empty and therefore strictly better off, while the recipient remains at value zero and every other agent is unaffected. Thus a Pareto-optimal allocation has exactly one nonempty bundle, which by completeness equals \(M\). Conversely, such an allocation is Pareto optimal: improving the unique holder requires making her bundle empty, which forces some currently empty agent to receive an item and fall from value one to zero.
 
Fix a Pareto-optimal allocation, and let \(k\) hold \(M\). Since \(m\ge2\), the bundle \(M\) and every one-item deletion from it are nonempty, so \(M\) is unsafe for \(v_k\). Every other agent holds \(\varnothing\), which is robust for \(v_k\). \Cref{lem:violation} therefore gives an EF1 violation from \(k\) to every other agent.
\end{proof}

 Finally, we combine the lemmas and propositions presented above to prove \Cref{ithm:trich}.

\subsection{Proof of \Cref{ithm:trich}} 
\begin{proof}
Each entry of the table is established by one of the preceding results.
 
\emph{$\EFoneplus$ and $\textup{PO}\wedge\EFoneplus$.} If $Z\neq\varnothing$, then
\Cref{ilem:plus} provides an allocation that is simultaneously Pareto optimal and
$\EFoneplus$; this establishes both rows in the columns $|Z|=1$ and $|Z|\ge2$. If
$Z=\varnothing$, then \Cref{iprop:noplus} exhibits an instance admitting no $\EFoneplus$, and therefore no Pareto-optimal $\EFoneplus$ allocation; this establishes both
rows in the column $|Z|=0$.
 
\emph{$\EFoneminus$.} If $|Z|=0$, then \Cref{ilem:minus-all-E} shows the existence of an
$\EFoneminus$ allocation; if $|Z|=1$, then \Cref{ilem:minus-Z1} provides an $\EFoneminus$
(indeed envy-free) allocation. If $|Z|\ge2$, then \Cref{iprop:nominus} exhibits an
instance admitting no $\EFoneminus$ allocation.
 
\emph{$\textup{PO}\wedge\EFoneminus$.} If $|Z|=1$, then the allocation of \Cref{ilem:minus-Z1} is Pareto
optimal and envy-free, hence $\EFoneminus$. If $|Z|=0$, then \Cref{prop:po-impossibility} shows that on $\cVo^n$ no
Pareto-optimal allocation is even $\EFone$; since $\EFoneminus$ implies $\EFone$, no
Pareto-optimal allocation is $\EFoneminus$ either. If $|Z|\ge2$, then $\EFoneminus$ itself may not exist by
\Cref{iprop:nominus}.
\end{proof}

The middle column of the table in \Cref{ithm:trich} is the sweet spot. When exactly one agent values the empty bundle at zero, the allocation of
\Cref{ilem:minus-Z1} is at once Pareto optimal, $\EFoneplus$, and $\EFoneminus$. The two one-sided guarantees therefore
overlap only at $|Z|=1$, while $\EFone$ (\Cref{sec:existence}) is the fairness notion that exists across all three columns.

\section{The EFX Landscape}
\label{sec:efx}
 
\begin{table}[H]
\centering
\small
\renewcommand{\arraystretch}{1.2}
\begin{tabularx}{\textwidth}{@{}p{0.16\textwidth}p{0.36\textwidth}X@{}}
\toprule
Notion & Status on arbitrary Boolean profiles & Source of the conclusion\\
\midrule
\(\EFXpm\) & Always exists & \Cref{thm:universal-ef1,prop:efx-ef1}\\
\(\EFXzm\) & Exists whenever \(Z\ne\varnothing\), but not universally & \Cref{ilem:plus,thm:efx0minus}\\
\(\EFXpz\) & Exists if \(\ Z = \emptyset \), but not universally & \Cref{thm:negative-efx}; counterexample by B\'erczi et al.~\cite{berczi2024}\\
\(\EFXzz\) & No universal guarantee & Counterexample by B\'erczi et al.~\cite{berczi2024}\\
\bottomrule
\end{tabularx}
\caption{The EFX landscape on arbitrary Boolean profiles.}
\label{tab:efx-status}
\end{table}
 
This section studies the existence of the four variants of EFX for Boolean valuations. In particular, we show that \(\EFXpz\) exists for negative-Boolean valuations, and show that \(\EFXzm\) exists if and only if \(Z\) is nonempty. The weakest notion is already resolved: by \Cref{prop:efx-ef1}, \(\EFXpm\) coincides with EF1, so \Cref{thm:universal-ef1} yields a complete \(\EFXpm\) allocation for every Boolean instance. Moreover, \Cref{ilem:plus} already provides a complete \(\EFXzm\) allocation whenever \(Z\ne\varnothing\), via \Cref{alg:goodsopt}.
 
\subsection{\texorpdfstring{\(\EFXpz\)}{EFX+0} for negative-Boolean valuations}
 
B\'erczi et al.\ prove \(\EFXpz\) existence for identical negative-Boolean valuations~\cite[Theorem~9]{berczi2024} and explicitly pose as an open question whether the assumption of identical valuations can be dropped~\cite[Question~16]{berczi2024}. The question remained open: Bhaskar et al.\ later obtained the weaker EF1 guarantee for the negative-Boolean valuations~\cite[Theorem~3]{bhaskar2025}, but not \(\EFXpz\). \Cref{thm:negative-efx} below answers the question affirmatively.
 
The algorithm of B\'erczi et al.\ repairs a violation by transferring an item from the envious agent to the envied agent. Under nonidentical valuations, however, a repair for one agent can immediately create the reverse violation for the other. Our proof proceeds differently. We fix one reference agent and split the ground set into blocks that are minimal for that agent, in the sense that deleting any single item from a block raises its value; a block can then be given safely to any agent who values it at one, so we look for a matching that assigns every block to such an agent. If the matching exists, the resulting allocation is already \(\EFXpz\). If it does not exist, then Hall's condition fails, and the set of blocks witnessing the failure has a smaller neighbourhood than itself; these blocks can be distributed among the reference agent and the agents in that neighbourhood, and every remaining agent values each of these blocks at zero. The remaining agents and items therefore form a strictly smaller instance whose solution can be combined with the partial allocation without creating any new violation, which is what drives the induction.

We keep the convention of the rest of the paper: the valuations are Boolean, and the domain of this subsection is \(\cVo^n\), that is, \(v_i(\varnothing)=1\) for every agent, or equivalently \(Z=\varnothing\). This is exactly the heterogeneous negative-Boolean domain of B\'erczi et al.\ after the translation \(u_i=v_i-1\), which sends our values \(1\) and \(0\) to their values \(0\) and \(-1\); as noted in \Cref{sec:preliminaries}, such a translation preserves every bundle comparison and every deletion marginal, and hence every notion of \Cref{sec:efx-notions}.

For \(v_i\in\cVo\), call a bundle \(S\) \emph{deletion-secure for agent \(i\)} if either \(v_i(S)=1\), or \(v_i(S)=0\) and \(v_i(S\setminus\{e\})=1\) for every \(e\in S\). Thus every value-one bundle is deletion-secure, as is every zero-valued bundle whose every one-item deletion has value one; in particular, every robust bundle is deletion-secure. The following lemma characterizes \(\EFXpz\) on this domain in terms of deletion-secure bundles.

\begin{lemma}
\label{lem:negative-efx-characterization}
An allocation \(A\) of an instance with \(v_i\in\cVo\) for every agent \(i\) is \(\EFXpz\) if and only if, for every agent \(i\), either \(A_i\) is deletion-secure for \(i\), or \(v_i(A_j)=0\) for every \(j\ne i\).
\end{lemma}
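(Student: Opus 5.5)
The plan is to unfold the definition of \(\EFXpz\) for a single envious ordered pair and show that its condition depends only on the envious agent's own bundle \(A_i\). The empty-envied-bundle case is then absorbed by the clause ``\(v_i(A_j)=0\) for every \(j\ne i\)''.

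Fix agents \(i\ne j\) with \(v_i(A_i)=0\) and \(v_i(A_j)=1\); otherwise the pair imposes no constraint. For this pair the tested sets are \(T_j=S_i^+(A_j)\) and \(T_i=S_i^-(A_i)\cup S_i^0(A_i)\). Since \(v_i(A_i)=0\), every \(e\in A_i\) has marginal \(0-v_i(A_i\setminus\{e\})\le 0\), so \(T_i=A_i\). Since \(v_i(A_j)=1\), every item of \(A_j\) has marginal \(\ge 0\), and \(S_i^+(A_j)=\{e\in A_j: v_i(A_j\setminus\{e\})=0\}\).

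Now check the test condition for each kind of item. An item \(e\in T_j\) satisfies \(v_i(A_j\setminus\{e\})=0\le v_i(A_i)\), so it always passes. An item \(e\in A_i\) passes exactly when \(v_i(A_i\setminus\{e\})\ge v_i(A_j)=1\). Hence every test succeeds iff every one-item deletion of \(A_i\) has value one. Combined with \(v_i(A_i)=0\), this says \(A_i\) is deletion-secure. When \(A_i\ne\varnothing\), nonemptiness of \(T_i\cup T_j\) is automatic.

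The delicate point is \(A_i=\varnothing\). Then \(v_i(A_i)=1\) because \(v_i\in\cVo\), so \(i\) is not envious and the empty bundle is deletion-secure in the trivial way. So for any envious pair, \(A_i\) is nonempty, \(T_i\) is nonempty, and the pair satisfies \(\EFXpz\) iff \(A_i\) is deletion-secure for \(i\).

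Next I reassemble the conditions over all \(j\). Agent \(i\) imposes no constraint if \(v_i(A_i)=1\), which is one way of being deletion-secure. If \(v_i(A_i)=0\), the constraint is vacuous when no \(j\ne i\) has \(v_i(A_j)=1\). Otherwise the constraint is exactly that \(A_i\) is deletion-secure. This gives the claimed equivalence in both directions: \(A\) is \(\EFXpz\) iff for every \(i\), either \(A_i\) is deletion-secure for \(i\) or \(v_i(A_j)=0\) for all \(j\ne i\).

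I do not expect a real obstacle. The only care needed is the sign bookkeeping that identifies \(T_i\) with all of \(A_i\), and confirming that the nonemptiness requirement on \(T_i\cup T_j\) never bites.
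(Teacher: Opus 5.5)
Your proposal is correct and follows essentially the same route as the paper. Both arguments reduce to a single envious pair, observe that \(T_i=A_i\) is nonempty (because \(v_i(\varnothing)=1\)) and that every item of \(S_i^+(A_j)\) passes automatically, conclude that the pair's condition is exactly deletion-security of \(A_i\) independently of \(j\), and then reassemble over all \(j\) using the ``envies nobody'' alternative.
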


\begin{proof}
Fix an agent \(i\). If \(v_i(A_i)=1\), then \(i\) envies no agent, and \(A_i\) is deletion-secure for \(i\); both sides of the stated equivalence hold for \(i\). So assume \(v_i(A_i)=0\). If \(v_i(A_j)=0\) for every \(j\ne i\), then again \(i\) envies no agent and contributes no violation, matching the second alternative.

It remains to treat the case \(v_i(A_i)=0\) and \(v_i(A_j)=1\) for some \(j\ne i\), that is, \(i\) envies \(j\). For every \(e\in A_i\), the deletion marginal \(v_i(A_i)-v_i(A_i\setminus\{e\})\) is \(0\) or \(-1\), so \(A_i=S_i^-(A_i)\cup S_i^0(A_i)\) and every item of \(A_i\) is tested by \(\EFXpz\); such an item eliminates the envy precisely when \(v_i(A_i\setminus\{e\})=1\). Moreover, \(A_i\ne\varnothing\), because \(v_i(\varnothing)=1\ne0=v_i(A_i)\), so the tested set is nonempty. Finally, every \(e\in S_i^+(A_j)\) satisfies \(v_i(A_j\setminus\{e\})=0=v_i(A_i)\), so every tested item of \(A_j\) eliminates the envy automatically. Consequently the pair \((i,j)\) satisfies the requirement of \(\EFXpz\) if and only if \(v_i(A_i\setminus\{e\})=1\) for every \(e\in A_i\), which, since \(v_i(A_i)=0\), is exactly the statement that \(A_i\) is deletion-secure for \(i\). As this condition does not depend on \(j\), the agent \(i\) contributes no violation if and only if \(A_i\) is deletion-secure for \(i\), or \(i\) envies nobody, i.e.\ \(v_i(A_j)=0\) for every \(j\ne i\).
\end{proof}

\begin{theorem}\label{thm:negative-efx}
Every instance with \(v_i\in\cVo\) for every agent \(i\), that is, every instance with negative-Boolean valuations, admits a complete \(\EFXpz\) allocation. Such an allocation can be computed in polynomial time in the value-oracle model.\footnote{In the value-oracle model, we can query an agent $i$ with a bundle $S \subseteq M$ to get $v_i(S)$ in $O(1)$ time.}
\end{theorem}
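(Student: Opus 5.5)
We proceed by strong induction on the number of agents \(n\), working throughout with the characterization of \Cref{lem:negative-efx-characterization}. An allocation is \(\EFXpz\) as soon as every agent \(i\) either holds a bundle that is deletion-secure for \(i\), or values every other bundle at zero. For \(n=1\), giving \(M\) to the single agent is trivially fine. For \(n\ge2\), we fix a reference agent \(r\) and \emph{peel} \(M\) into blocks that are deletion-secure for \(r\), as follows.

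While items remain in a residual set \(R\), there are two cases. If \(v_r(R)=1\), we make \(R\) the last block. Otherwise we shrink \(T\leftarrow R\), deleting an item \(e\) whenever \(v_r(T\setminus\{e\})=0\). Since \(v_r(\varnothing)=1\), this stops at a nonempty \(T\) with \(v_r(T)=0\) and \(v_r(T\setminus\{e\})=1\) for all \(e\in T\). That \(T\) is deletion-secure for \(r\), and we peel it off. This produces blocks \(B_1,\dots,B_k\) partitioning \(M\), each deletion-secure for \(r\), using polynomially many oracle calls.

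Next we build the bipartite graph \(G\) between the blocks and \(N\setminus\{r\}\), with \(B\sim i\) iff \(v_i(B)=1\). There are two cases.

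\emph{Case 1: Hall's condition holds for all blocks.} Some matching saturates every block, and we give each block to its matched agent. Each such agent values its bundle at one, so the bundle is deletion-secure. Every other agent, \(r\) included, gets \(\varnothing\), of value one. By \Cref{lem:negative-efx-characterization} the allocation is \(\EFXpz\).

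\emph{Case 2: Hall's condition fails.} We take an inclusion-minimal violating set \(X\). Minimality gives two facts. First, \(|N_G(X)|=|X|-1\). Second, for every \(B\in X\), the family \(X\setminus\{B\}\) satisfies Hall's condition into \(N_G(X)\); since the two sides have equal size, it is perfectly matched onto \(N_G(X)\). We fix any \(B^\ast\in X\), give it to \(r\), and give each agent of \(N_G(X)\) its matched block from \(X\setminus\{B^\ast\}\). The agent \(r\) then holds a single peeled block, which is deletion-secure for \(r\). Each agent of \(N_G(X)\) values its block at one. Every remaining agent \(i\in N':=N\setminus(N_G(X)\cup\{r\})\) satisfies \(v_i(B)=0\) for all \(B\in X\), by the definition of neighbourhood.

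We then recurse on the agents \(N'\) and the items \(M':=M\setminus\bigcup X\). This instance is again negative-Boolean and has fewer agents, since \(r\) was removed. Combining the two partial allocations, agents of \(N_G(X)\cup\{r\}\) are safe by deletion-security. For \(i\in N'\) there are two possibilities. If its recursive bundle is deletion-secure, \(i\) is fine. Otherwise, by induction, \(i\) values every other bundle of the subinstance at zero; it also values every bundle in \(X\) at zero. In both cases \(i\) satisfies \Cref{lem:negative-efx-characterization}. Each round runs one maximum-matching computation, and the minimal violator can be extracted from the alternating-path (König) structure, giving polynomial time overall.

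The step I expect to be the main obstacle is the boundary case \(N'=\varnothing\) while \(M'\ne\varnothing\), where the leftover blocks have no one to receive them. Here I would choose \(X\) more carefully. Among the violators, I would pick one that is inclusion-minimal subject to leaving a nonempty \(N'\) whenever blocks remain outside \(X\). If no such violator exists, I would argue that the set of all blocks is itself a violator whose minimal sub-violators can be enlarged. Failing that, I would argue that the leftover blocks can be added to the blocks of \(N_G(X)\)-agents without destroying value one. Concretely, I would show that if \(N_G(X)=N\setminus\{r\}\), then the whole family of blocks has a neighbourhood of size at most \(n-1\), and apply the deficiency version of Hall's theorem. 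This yields a matching of size \(n-1\) together with a single unmatched block, which is absorbed by \(r\). Making this absorption compatible with \(r\)'s deletion-security, for example by re-running the peeling on \(r\)'s share, is where I would concentrate the detailed argument.
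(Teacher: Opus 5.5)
Your peel-and-match induction is the paper's strategy, and everything up to the boundary case is sound: minimality of \(X\) does give \(|N_G(X)|=|X|-1\) and a perfect matching of \(X\setminus\{B\}\) onto \(N_G(X)\) for every \(B\in X\), and the recombination argument for \(N'\neq\varnothing\) is correct. But the case \(N'=\varnothing\neq M'\) is left open, and it genuinely occurs. Take \(n=2\), \(v_r(S)=\mathbf{1}[S=\varnothing]\), \(v_2(S)=1\) iff \(|S|\le1\), and \(M=\{a,b,c\}\). Your peeling yields the three singleton blocks, and agent \(2\) is adjacent to all of them. So every Hall violator has neighbourhood \(\{2\}=N\setminus\{r\}\), and no careful choice of \(X\) leaves \(N'\) nonempty.

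Your fallback also points the wrong way. After matching \(n-1\) blocks there are \(k-(n-1)\ge2\) leftover blocks, not a single one. Moreover, \(r\)'s share (here \(\{b,c\}\)) is in general not deletion-secure for \(r\). Re-peeling it only produces more blocks with nobody left to receive them.

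The missing idea is that \(r\) does not need deletion-security at all. By \Cref{lem:negative-efx-characterization}, it suffices that \(v_r(A_j)=0\) for every \(j\neq r\), and this is automatic if every other agent holds one of \(r\)'s minimal zero-valued blocks. In your peeling, the only block that \(r\) can value at one is the possible final block. So when \(N_G(X)=N\setminus\{r\}\), proceed as follows:
\begin{itemize}
\item choose \(B^\ast\) to be that final block if it lies in \(X\), and any block of \(X\) otherwise;
\item match \(X\setminus\{B^\ast\}\) perfectly onto \(N\setminus\{r\}\);
\item give \(r\) the set \(B^\ast\cup M'\).
\end{itemize}
Every other agent then values its own block at one, and \(r\) envies nobody. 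In the example this gives \((\{b,c\},\{a\})\), which is \(\EFXpz\) even though \(\{b,c\}\) is not deletion-secure for \(r\).

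The paper builds this in from the start. It places only the first \(p=\min\{k,n-1\}\) peeled blocks in the bipartite graph; all of these are zero-valued for the reference agent. It then gives the residual \(R_p\) to the reference agent. If the matching saturates and \(p=n-1<k\), the reference agent envies nobody. If Hall fails, then \(|X|\le p\le n-1\) forces \(|N'|\ge1\), so the boundary case never arises.
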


\begin{proof}
We argue by induction on the number of agents \(n\). If \(n=1\), give all items to the single agent; there is no second agent, so the allocation is trivially \(\EFXpz\). Let \(n\ge2\) and fix an agent \(d\in N\), called the \emph{reference agent}.

We first cut the ground set into blocks that are minimal for the reference agent, in the sense that deleting any single item from a block raises its value for \(d\). Formally, we construct pairwise disjoint nonempty blocks \(C_1,\ldots,C_k\subseteq M\) and residual sets \(M=R_0\supsetneq R_1\supsetneq\cdots\supsetneq R_k\) for a suitable $k$, using an iterative procedure defined below.

We start with $R_0 = M$. For \(t=1,2,\ldots\): if \(R_{t-1}\) is deletion-secure for \(d\), stop and set \(k=t-1\). Otherwise \(R_{t-1}\) is not deletion-secure for \(d\), so \(v_d(R_{t-1})=0\) and there is an item \(e\in R_{t-1}\) with \(v_d(R_{t-1}\setminus\{e\})=0\). Starting from \(S=R_{t-1}\setminus\{e\}\), repeatedly replace \(S\) by \(S\setminus\{f\}\) for some \(f\in S\) with \(v_d(S\setminus\{f\})=0\), for as long as such an item \(f\) exists. The loop ends at a set \(C_t\subseteq R_{t-1}\setminus\{e\}\) with
\[
v_d(C_t)=0
\qquad\text{and}\qquad
v_d(C_t\setminus\{f\})=1\quad\text{for every }f\in C_t,
\]
so that \(C_t\) is deletion-secure for \(d\); note that \(C_t\ne\varnothing\), since \(v_d(\varnothing)=1\ne0\). Set \(R_t=R_{t-1}\setminus C_t\) and continue.

Each block is nonempty, so each iteration strictly shrinks the residue and the loop terminates after \(k\le m\) iterations. On termination, \(M\) is the disjoint union of \(C_1,\ldots,C_k\) and \(R_k\), we have \(v_d(C_t)=0\) for every \(t\le k\), and \(R_k\) is deletion-secure for \(d\).

We next record which agents can safely receive which blocks. Put \(p=\min\{k,n-1\}\), and let \(H\) be the bipartite graph with parts \(\mathcal L=\{C_1,\ldots,C_p\}\) and \(N\setminus\{d\}\), where a block \(C_t\) and an agent \(i\) are adjacent exactly when \(v_i(C_t)=1\). Whether \(H\) admits a matching that saturates \(\mathcal L\) decides how we proceed, and we treat the two possibilities in turn.

Suppose first that some matching of \(H\) saturates \(\mathcal L\). Fix such a matching, give each block \(C_t\in\mathcal L\) to the agent matched to it, give \(R_p\) to \(d\), and give the empty bundle to every remaining agent. The bundles \(C_1,\ldots,C_p,R_p\) are pairwise disjoint with union \(M\), so the allocation is complete. Every agent matched to a block \(C_t\) has \(v_i(C_t)=1\), and every agent receiving \(\varnothing\) has \(v_i(\varnothing)=1\); in both cases the bundle is deletion-secure for its owner. For the reference agent there are two possibilities. If \(p=k\), then \(d\) receives \(R_k\), which is deletion-secure for \(d\) by the construction above. If \(p=n-1<k\), then every agent other than \(d\) is matched to a block, and \(v_d(C_t)=0\) for every block, so \(v_d(A_j)=0\) for every \(j\ne d\) and the second alternative of \Cref{lem:negative-efx-characterization} applies to \(d\) even when \(R_p\) is not deletion-secure. In either case \Cref{lem:negative-efx-characterization} shows that the allocation is \(\EFXpz\).

It remains to treat the case in which no matching of \(H\) saturates \(\mathcal L\). Here the blocks that witness the failure themselves split the instance into a part we can allocate at once and a strictly smaller part that we solve recursively. Fix a maximum matching \(F\) of \(H\) and a block \(C^\star\in\mathcal L\) that \(F\) leaves unmatched. Call a path of \(H\) \emph{\(F\)-alternating from \(C^\star\)} if it starts at \(C^\star\), its first edge lies outside \(F\), and its edges thereafter lie alternately in \(F\) and outside \(F\); the one-vertex path \(C^\star\) is \(F\)-alternating from \(C^\star\) as well. Let
\begin{align*}
X&=\{C\in\mathcal L: \text{some \(F\)-alternating path from \(C^\star\) ends at } C\},\\
Q&=\{i\in N\setminus\{d\}: \text{some \(F\)-alternating path from \(C^\star\) ends at } i\},
\end{align*}
so that \(C^\star\in X\). Every agent of \(Q\) is matched by \(F\): an \(F\)-alternating path from \(C^\star\) ending at an unmatched agent would be an augmenting path, contradicting the maximality of \(F\).

We claim that \(F\) matches the agents of \(Q\) with the blocks of \(X\setminus\{C^\star\}\), one to one. Let \(i\in Q\), and let \(C\) be the block matched to \(i\) by \(F\), which exists by the previous paragraph. Appending the edge \(\{i,C\}\in F\) to an \(F\)-alternating path from \(C^\star\) ending at \(i\) gives an \(F\)-alternating path from \(C^\star\) ending at \(C\), so \(C\in X\); and \(C\ne C^\star\) because \(C^\star\) is unmatched. This assigns to every \(i\in Q\) a block of \(X\setminus\{C^\star\}\), and distinct agents receive distinct blocks because \(F\) is a matching. Conversely, let \(C\in X\setminus\{C^\star\}\) and consider an \(F\)-alternating path from \(C^\star\) ending at \(C\); since the path has at least one edge and alternates starting outside \(F\), its last edge lies in \(F\) and joins \(C\) to an agent \(i\), which lies in \(Q\) because the path truncated before \(C\) ends at \(i\). Hence every block of \(X\setminus\{C^\star\}\) is assigned to some agent of \(Q\), and the assignment is a bijection; in particular \(|Q|=|X|-1\).

We claim next that the set of agents adjacent in \(H\) to at least one block of \(X\) is exactly \(Q\). Every agent of \(Q\) is adjacent to the block matched to it, which lies in \(X\), so one inclusion is clear. For the other, let \(C\in X\) and let \(i\) be an agent adjacent to \(C\). If \(\{i,C\}\notin F\), take an \(F\)-alternating path from \(C^\star\) ending at \(C\); its last edge lies in \(F\), or the path is the one-vertex path \(C^\star\), so appending \(\{C,i\}\) preserves the alternation and shows \(i\in Q\). If \(\{i,C\}\in F\), then \(C\ne C^\star\), and by the bijection of the previous paragraph the block \(C\) is matched to an agent of \(Q\), which must be \(i\) because \(F\) is a matching. In both cases \(i\in Q\).

Now allocate as follows. Give \(C^\star\) to \(d\), and give every block \(C\in X\setminus\{C^\star\}\) to the agent of \(Q\) matched to it by \(F\); by the bijection above, each agent of \(Q\) receives exactly one block. Write
\[
P=Q\cup\{d\},
\qquad
M_P=\bigcup_{C\in X}C
\]
for the set of agents served so far and the set of items they receive. Every agent of \(P\) holds a bundle that is deletion-secure for her: the bundle \(C^\star\) is a peeled block and hence deletion-secure for \(d\), and each \(i\in Q\) receives a block \(C\) with \(v_i(C)=1\).

Consider the residual instance with agent set \(N'=N\setminus P\) and item set \(M'=M\setminus M_P\), in which every agent \(i\in N'\) keeps the restriction of \(v_i\) to the subsets of \(M'\); this restriction still assigns the value \(1\) to \(\varnothing\), so the residual instance again has all valuations in \(\cVo\). Since \(|Q|+1=|X|\le p\le n-1\), we have \(|N'|=n-|X|\ge1\), and \(|N'|<n\) because \(d\in P\). By the induction hypothesis, the residual instance admits a complete \(\EFXpz\) allocation \((A_i)_{i\in N'}\) of \(M'\). Let \(A\) be the allocation of \(M\) among \(N\) that gives every agent of \(P\) the bundle assigned above and every agent of \(N'\) her bundle in the residual allocation. Since \(M_P\) and \(M'\) partition \(M\), the allocation \(A\) is complete.

We verify the condition of \Cref{lem:negative-efx-characterization} for every agent of \(N\), which proves that \(A\) is \(\EFXpz\). For \(i\in P\), the bundle \(A_i\) is deletion-secure for \(i\), as observed above. Let now \(i\in N'\). Every bundle held by an agent of \(P\) is a block of \(X\), and \(i\notin Q\), so \(i\) is adjacent in \(H\) to no block of \(X\) by the previous claim; that is,
\begin{equation}\label{eq:no-cross-envy}
v_i(A_j)=0\qquad\text{for every } j\in P .
\end{equation}
If \(A_i\) is deletion-secure for \(i\), the first alternative of the characterization holds for \(i\) and we are done. Otherwise \(A_i\) is not deletion-secure for \(i\); applying \Cref{lem:negative-efx-characterization} to the residual allocation, which is \(\EFXpz\) by the induction hypothesis, yields \(v_i(A_j)=0\) for every \(j\in N'\setminus\{i\}\), and together with \Cref{eq:no-cross-envy} this gives \(v_i(A_j)=0\) for every \(j\in N\setminus\{i\}\), which is the second alternative. Note that the correctness of the induction depends upon \Cref{eq:no-cross-envy}: the bundles allocated to \(P\) are worth zero to every agent of \(N'\), so no agent of the residual instance envies an agent of \(P\), and the recursive allocation can be computed without any reference to the bundles already assigned.

\emph{Running time.} Deciding whether a set is deletion-secure for \(d\) takes at most \(m+1\) value queries, and each block is produced by at most \(m\) deletions, so producing all the blocks uses \(O(m^3)\) value queries in total. The graph \(H\) is built with at most \(nm\) queries. A maximum matching of \(H\), and the vertex sets \(X\) and \(Q\), are computable in polynomial time by standard algorithms for maximum matching. Every recursive call removes at least one agent, namely \(d\), so there are at most \(n\) calls. Hence the algorithm runs in polynomial time.
\end{proof}

\subsection{Tightly characterizing the existence of \texorpdfstring{\(\EFXzm\)}{EFX0-}}
 
\begin{theorem}
\label{thm:efx0minus}
Every instance with Boolean valuations where \(Z=\{i:v_i(\varnothing)=0\}\ne\varnothing\) admits a complete \(\EFXzm\) allocation. This is best possible as a guarantee at the level of valuation classes: there is an instance with \(Z=\varnothing\) that admits no \(\EFXzm\) allocation.
\end{theorem}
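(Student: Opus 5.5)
The existence part follows directly from \Cref{ilem:plus}. If \(Z\neq\varnothing\), every allocation returned by \Cref{alg:goodsopt} is \(\EFXzm\), and that algorithm optimizes over the nonempty finite set \(\cA\) of complete allocations. So a complete \(\EFXzm\) allocation exists, and nothing further is needed for the first sentence.

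For tightness I will give an explicit instance with \(Z=\varnothing\) and check every complete allocation. First I restate \(\EFXzm\) for an envious pair \((i,j)\), meaning \(v_i(A_i)=0\) and \(v_i(A_j)=1\). The pair fails whenever either of the following holds:
\begin{itemize}
\item some item \(e\in A_j\) with deletion marginal \(0\) for \(i\) is tested, since then \(v_i(A_j\setminus\{e\})=1>v_i(A_i)\) and the envy is not removed;
\item the tested set \(S_i^+(A_j)\cup S_i^0(A_j)\cup S_i^-(A_i)\) is empty.
\end{itemize}
If \(A_j=\varnothing\), only \(S_i^-(A_i)\) can supply tested items.

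The instance has \(n=2\), \(M=\{a,b,c\}\), and identical valuations \(v_1=v_2=v\) with \(v(S)=\ind{|S|\le 1}\). Then \(v(\varnothing)=1\), so \(Z=\varnothing\). In any complete allocation one agent, say \(i\), holds at least two items, so \(v(A_i)=0\). The other agent \(j\) holds at most one item, so \(v(A_j)=1\), and \(i\) envies \(j\). There are two cases.
\begin{itemize}
\item If \(|A_j|=1\), say \(A_j=\{c\}\), then \(v(A_j\setminus\{c\})=v(\varnothing)=1=v(A_j)\). So \(c\in S_i^0(A_j)\) is tested, and removing it leaves \(i\) still envious.
\item If \(A_j=\varnothing\), then \(A_i=M\). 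Every one-item deletion from \(M\) has size \(2\) and value \(0\), so \(S_i^-(A_i)=\varnothing\). Since \(A_j\) is empty, the tested set is empty, which violates the nonemptiness requirement.
\end{itemize}
Hence no complete allocation of this instance is \(\EFXzm\).

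The main risk is choosing the counterexample. The simplest candidate, \(v(S)=\ind{S=\varnothing}\), fails: splitting the items into two nonempty bundles leaves neither agent envious. The threshold valuation \(\ind{|S|\le1}\) with three items works because it forces envy in every allocation (pigeonhole), and both possible envied bundles then fail, one through a zero-marginal item and the other through an empty tested set.
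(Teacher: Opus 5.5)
Your proposal is correct and matches the paper's proof: existence comes from \Cref{ilem:plus}, and tightness from the same two-agent, three-item instance with identical valuation \(v(S)=1\) iff \(|S|\le 1\). The case analysis is also the paper's: in the \((0,3)\) split the tested set is empty, and in the \((1,2)\) split the singleton's zero-marginal item is tested but its removal does not eliminate the envy.
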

 
\begin{proof}
Existence of \(\EFXzm\) when \(Z\ne\varnothing\) follows from \Cref{ilem:plus}. For the negative result, take two agents, three items, and the identical valuation \(v(S)=1\) if and only if \(|S|\le1\). Every allocation has a bundle-size split of \((0,3)\) or \((1,2)\), up to exchanging the agents.
 
If the bundle sizes are \((0,3)\), the three-item holder has value zero and envies the empty bundle, which has value one. The envied bundle has no item, and deletion of any item from the three-item bundle leaves value zero. Thus the tested set \(S_i^+(A_j)\cup S_i^0(A_j)\cup S_i^-(A_i)\) is empty, violating $\EFXzm$.
 
If the bundle sizes are \((1,2)\), the two-item holder has value zero and envies the singleton, which has value one. The singleton item has zero deletion marginal because deleting it leaves the empty bundle, also of value one. It is therefore tested under \(\EFXzm\), but its deletion does not eliminate the envy. Hence neither bundle-size pattern is \(\EFXzm\).
\end{proof}

\subsection{The remaining variants of EFX}
 
No universal guarantee on arbitrary Boolean profiles is possible for either \(\EFXpz\) or \(\EFXzz\). B\'erczi et al.\ give a two-agent, three-item identical positive-Boolean valuation \(v(S)=\ind{|S|\ge2}\) with no \(\EFXpz\) allocation, and a two-agent, two-item identical additive Boolean instance with singleton values one and zero with no \(\EFXzz\) allocation~\cite{berczi2024}. The first counterexample complements \Cref{thm:negative-efx}: \(\EFXpz\) can fail when \(v_i(\varnothing)=0\) for every agent, but always exists when \(v_i(\varnothing)=1\) for every agent. \Cref{tab:efx-status} summarizes the resulting landscape.

\section{Combining Fairness and Efficiency with Incentives}
\label{sec:mechanisms}
In the setting where every agent values the empty bundle at zero, B\'erczi et al.~\cite{berczi2024} showed that \(\EFXzm\) allocations always exist. We strengthen this by designing a deterministic weakly group-strategyproof mechanism that always outputs PO and \(\EFXzm\) allocations.
  
The mechanism of this section is simply \Cref{alg:goodsopt} with a tie-breaking rule. Recall from \Cref{alg:goodsopt} the scores \(W_p(A)=\sum_{i\in N}p_i(A_i)\) and \(C_p(A)=\sum_{i\in N}p_i(A_i)|A_i|\) of an allocation \(A\in\cA\) for any profile of Boolean valuation functions $p=(p_1,\ldots,p_n) \in (\cVz \cup \cVo)^n$. These scores are, respectively, the number of agents reporting value one for their bundles and the total size of those agents' bundles. Fix a strict total order \(\rhd\) on the set of all allocations, independent of all reports; this tie-breaking device, which plays no role in the existence results of \Cref{sec:onesided}, is what turns \Cref{alg:goodsopt} into a deterministic mechanism. The mechanism \(\mathcal M_\rhd\), stated as \Cref{alg:mechanism}, runs \textsc{GoodsOpt} on the reported profile \(p\) and resolves its final choice by \(\rhd\): it maximizes \(W_p(A)\), minimizes \(C_p(A)\) among the maximizers, and returns the allocation that is maximal according to \(\rhd\) among the allocations that remain. In particular, \(\mathcal M_\rhd(p)\in\OPT^{+}(p)\).

\begin{algorithm}[H]
\caption{The mechanism \(\mathcal M_\rhd\)}
\label{alg:mechanism}
\KwIn{A reported Boolean profile $p=(p_1,\ldots,p_n)$; a strict total order $\rhd$ on $\cA$, fixed independently of $p$.}
\KwOut{A complete allocation.}
$\cW\leftarrow\arg\max_{A\in\cA} W_p(A)$\;
$\OPT^{+}(p)\leftarrow\arg\min_{A\in\cW} C_p(A)$\;
\Return{the $\rhd$-maximal allocation in $\OPT^{+}(p)$}
\end{algorithm}
 
We show that this mechanism resists not only unilateral deviations by agents but also coordinated misreporting attempts by a group of agents.
 
\begin{definition}[Weak group strategyproofness]
\label{def:gsp}
A mechanism \(f\) is \emph{weakly group-strategyproof} if no coalition can misreport so that all of its members strictly benefit: for every nonempty coalition \(K\subseteq N\), every true profile \(v\in\cV^n\), and every joint misreport \(\widehat v_K=(\widehat v_i)_{i\in K}\), some member \(i\in K\) has
\[
v_i\bigl(f_i(\widehat v_K,\, v_{-K})\bigr)\;\le\; v_i\bigl(f_i(v)\bigr).
\]
Taking \(|K|=1\) recovers strategyproofness.
\end{definition}
\begin{theorem}
\label{thm:score-mechanism}
If the agents are allowed to report any valuations from \(\cV^n\), the mechanism \(\mathcal M_\rhd\) is weakly group-strategyproof and returns a Pareto-optimal allocation. If the reports lie in \(\cVz^n\), its output additionally satisfies the fairness guarantees \(\EFoneplus\) and \(\EFXzm\), and hence \(\EFXpm\), with respect to the reported profile.
\end{theorem}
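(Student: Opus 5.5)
The plan is to treat the three claims separately. Pareto optimality and fairness come almost directly from \Cref{ilem:plus}. Weak group strategyproofness comes from a lexicographic comparison of the scores \((W,C)\) under the true and the reported profiles.

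\emph{Efficiency and fairness.} Since \(\mathcal M_\rhd(p)\in\OPT^{+}(p)\), the first part of \Cref{ilem:plus}, applied to the profile \(p\), shows that the output is Pareto optimal with respect to \(p\). That argument uses only maximality of \(W_p\), so it is valid for any reports in \(\cV^n\). If \(p\in\cVz^n\), then \(Z=N\ne\varnothing\) for \(p\). The second part of \Cref{ilem:plus} then gives \(\EFoneplus\) and \(\EFXzm\) with respect to \(p\), and \(\EFXpm\) follows via \Cref{prop:efx-ef1}.

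\emph{Weak group strategyproofness.} Suppose, for contradiction, that a coalition \(K\) misreports \(\widehat v_K\) and every member strictly gains. Let \(v\) be the true profile, \(p=(\widehat v_K,v_{-K})\) the reported one, \(A=\mathcal M_\rhd(v)\) and \(B=\mathcal M_\rhd(p)\). Because values are Boolean, a strict gain means \(v_i(A_i)=0\) and \(v_i(B_i)=1\) for every \(i\in K\); in particular \(A\ne B\). For \(X\in\{A,B\}\), split the scores into the contribution of \(N\setminus K\) (identical under \(v\) and \(p\)), denoted \(W_{-K}(X)\) and \(C_{-K}(X)\), plus the contribution of \(K\).

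Under \(v\), agents of \(K\) contribute nothing at \(A\) and exactly \(|K|\), respectively \(\sum_{i\in K}|B_i|\), at \(B\). Optimality of \(A\) for \(W_v\) therefore gives \(W_{-K}(A)\ge W_{-K}(B)+|K|\). Under \(p\), bounding the \(K\)-contribution by \(|K|\) at \(B\) and by \(0\) at \(A\) yields the chain
\[
W_p(B)\le W_{-K}(B)+|K|\le W_{-K}(A)\le W_p(A).
\]
Optimality of \(B\) for \(W_p\) forces equality throughout. Hence \(\widehat v_i(B_i)=1\) and \(\widehat v_i(A_i)=0\) for all \(i\in K\), and \(W_v(A)=W_v(B)\).

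We now repeat the comparison with \(C\). The ties just derived show that \(C_p(A)=C_{-K}(A)=C_v(A)\) and \(C_p(B)=C_{-K}(B)+\sum_{i\in K}|B_i|=C_v(B)\). Minimality of \(C_v\) at \(A\) among \(W_v\)-maximizers gives \(C_v(A)\le C_v(B)\). Minimality of \(C_p\) at \(B\) gives \(C_p(B)\le C_p(A)\). Hence both \(A\) and \(B\) lie in \(\OPT^{+}(v)\) and in \(\OPT^{+}(p)\). The report-independent order \(\rhd\) then selects \(A\) in the first set and \(B\) in the second, so \(A\rhd B\) or \(A=B\), and also \(B\rhd A\) or \(A=B\). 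Either way this forces \(A=B\), contradicting the strict gains.

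The main obstacle I expect is the step showing that the \(K\)-contribution to \(W_p(A)\) is nonnegative while the \(K\)-contribution to \(W_p(B)\) is at most \(|K|\). Only these two crude bounds are available, and the argument works solely because both the welfare ties and the size ties are forced to collapse simultaneously. I would check carefully that the \(C\)-equalities use exactly those forced values of \(\widehat v_i\).
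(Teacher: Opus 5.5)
Your proposal is correct and takes essentially the same route as the paper. Efficiency and fairness come from \Cref{ilem:plus} applied to the reported profile. For weak group strategyproofness, a jointly profitable deviation forces \(\widehat v_i(A_i)=0\) and \(\widehat v_i(B_i)=1\) for every coalition member, so both allocations keep their score pairs \((W,C)\), and the report-independent order \(\rhd\) gives the contradiction. The only difference is cosmetic: you get these forced values from the single chain \(W_p(B)\le W_{-K}(B)+|K|\le W_{-K}(A)\le W_p(A)\), whereas the paper first rules out \(\widehat v_i(X_i)=1\) by a separate welfare comparison and then uses tightness of its bound on \(W_{\widehat v}(Y)\).
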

\begin{proof}
Fix a reported profile \(p\), and write \(A=\mathcal M_\rhd(p)\).
 
\emph{Pareto optimality and fairness.} Since \(A\in\OPT^{+}(p)\) is returned by \Cref{alg:goodsopt}, both claims are exactly \Cref{ilem:plus} applied to the profile \(p\): Pareto optimality holds unconditionally, and \(\EFoneplus\) together with \(\EFXzm\) (hence \(\EFXpm\), by \Cref{prop:efx-ef1}) holds since \(Z(p)=\{i:p_i(\varnothing)=0\}=N\).
 
\emph{Weak group strategyproofness.} Fix a nonempty coalition \(K\subseteq N\), a true profile \(v\in\cV^n\), and misreports \(\widehat v_i\in\cV\) for \(i\in K\); let \(\widehat v\) denote the profile that agrees with the misreports on \(K\) and with \(v\) elsewhere. Let \(X=\mathcal M_\rhd(v)\) and \(Y=\mathcal M_\rhd(\widehat v)\). A profitable joint deviation would require every member to gain: \(v_i(X_i)=0\) and \(v_i(Y_i)=1\) for all \(i\in K\); indeed this implies that \(Y\ne X\). Put \(w^*=W_v(X)\) and \(c^*=C_v(X)\).
 
For any allocation \(B\), changing only the coalition's reports changes its scores by
\[
W_{\widehat v}(B)-W_v(B)=\sum_{i\in K}\bigl(\widehat v_i(B_i)-v_i(B_i)\bigr),
\qquad
C_{\widehat v}(B)-C_v(B)=\sum_{i\in K}\bigl(\widehat v_i(B_i)-v_i(B_i)\bigr)|B_i|;
\]
in particular, if \(\widehat v_i(B_i)=v_i(B_i)\) for every \(i\in K\), then \(B\) has the same pair of scores under both profiles. Since \(v_i(Y_i)=1\) for every member, each term \(\widehat v_i(Y_i)-v_i(Y_i)\) is nonpositive, whence
\[
W_{\widehat v}(Y)=W_v(Y)+\sum_{i\in K}\bigl(\widehat v_i(Y_i)-1\bigr)\le W_v(Y)\le w^*,
\]
with equality throughout only if \(\widehat v_i(Y_i)=1\) for every \(i\in K\) and \(W_v(Y)=w^*\).
 
If \(\widehat v_i(X_i)=1\) for some member \(i\), then, since \(v_j(X_j)=0\) for every \(j\in K\), the welfare of \(X\) can only increase: \(W_{\widehat v}(X)\ge w^*+1\). By the preceding bound, \(Y\) cannot be selected. Hence \(\widehat v_i(X_i)=0=v_i(X_i)\) for every member, and \(X\) has the same pair of scores \((w^*,c^*)\) under both profiles.
 
For \(Y\) to defeat \(X\) under \(\widehat v\), it cannot have strictly larger welfare by the preceding bound. If it has welfare \(w^*\), the bound is tight, which forces \(\widehat v_i(Y_i)=v_i(Y_i)=1\) for every \(i\in K\) and \(W_v(Y)=w^*\). The scores of \(Y\) are therefore also unchanged by the deviation. It cannot have \(C_{\widehat v}(Y)<c^*\), because then \(C_v(Y)<c^*\) would contradict the optimality of $X$. If the two score pairs are equal, the fixed order \(\rhd\) selects \(X\) at the truthful profile and therefore must rank \(X\) above \(Y\) at the deviating profile as well. Thus \(Y\) could not have been chosen by \(\mathcal M_\rhd\) on the profile \(\widehat v\), a contradiction.
\end{proof}

We call the property above weak group strategyproofness: it rules out joint deviations that strictly benefit every coalition member. Strong group strategyproofness instead rules out a joint deviation under which every coalition member is weakly better off and at least one is strictly better off. Prior work shows that strong group strategyproofness is incompatible with Pareto optimality even for unit-demand dichotomous valuations~\cite{babaioff2021dichotomous,barman-verma-truthful}, which form a subclass of \(\cVz\). Hence no deterministic mechanism on the normalized Boolean domain can satisfy both strong group strategyproofness and Pareto optimality, even without imposing a fairness requirement.

\begin{example}[The score mechanism is not strongly group-strategyproof]
Let \(N=\{1,2,3\}\), \(M=\{a,b,c,d\}\), \(B_2=\{a,b\}\), and \(B_3=\{b,c\}\). Let \(v_1(S)=0\) for every \(S\subseteq M\), and, for each \(i\in\{2,3\}\), let \(v_i(S)=1\) if and only if \(S=B_i\). At the truthful profile, the maximum welfare is one because \(B_2\cap B_3\ne\varnothing\), and every welfare maximizer has secondary score two. Let \(X=\mathcal M_\rhd(v)\), let \(w\in\{2,3\}\) be the unique agent with \(X_w=B_w\), let \(\ell\) be the other agent in \(\{2,3\}\), and put \(T=M\setminus B_\ell\). Consider the coalition \(K=\{1,\ell\}\). Agent \(1\) reports \(\widehat v_1(S)=1\) if and only if \(S=T\), agent \(\ell\) reports \(\widehat v_\ell(S)=1\) if and only if \(S\in\{B_\ell,M\}\), and agent \(w\) reports truthfully. The bundles \(T\) and \(B_\ell\) partition \(M\), while \(B_w\) intersects both of them. Therefore the unique reported-welfare-two allocation gives \(T\) to agent \(1\), \(B_\ell\) to agent \(\ell\), and \(\varnothing\) to agent \(w\); no allocation has reported welfare three. The mechanism selects this allocation regardless of its secondary score or the fixed tie-breaking rule. Under the true valuations, agent \(1\)'s value remains zero, while agent \(\ell\)'s value increases from zero to one. Thus every coalition member is weakly better off and one is strictly better off, so \(\mathcal M_\rhd\) is not strongly group-strategyproof.
\end{example}

\section{Best of Both Worlds: Adding Ex-Ante Envy-Freeness}
\label{sec:bobw}
In this section we show that randomization strengthens fairness without sacrificing any of the guarantees obtained so far: whenever some agent values the empty bundle at zero, drawing an allocation uniformly at random from the optimal set of \Cref{alg:goodsopt} yields a lottery that is envy-free in expectation, while every allocation it may output is Pareto optimal, $\EFoneplus$, and $\EFXzm$.

The condition $Z\neq\varnothing$ is tight as a class-wide guarantee. Indeed, for every $n,m\ge2$, \Cref{prop:po-impossibility} gives a profile with $Z=\varnothing$ at which no allocation is simultaneously Pareto optimal and EF1; hence no lottery can have support consisting of Pareto-optimal $\EFoneplus$ allocations, regardless of its ex-ante properties. In that counterexample, the uniform lottery over the Pareto-optimal allocations is ex-ante envy-free by symmetry, so it is precisely the ex-post fairness requirement that fails.

Throughout this section we assume $Z\neq\varnothing$. For an allocation $A$ and an ordered pair $i\neq j$ define the \emph{envy differential}
\[
d_A(i,j)\;=\;v_i(A_i)-v_i(A_j)\;\in\;\{-1,0,+1\},
\]
and say that $i$ \emph{envies} $j$ in $A$ if $d_A(i,j)=-1$.

\begin{definition}[ex-ante EF; BoBW lottery]
A lottery (probability distribution) $p$ over allocations is \emph{ex-ante
envy-free} if
\[
\mathbb{E}_{A\sim p}\bigl[v_i(A_i)\bigr]\;\ge\;\mathbb{E}_{A\sim p}\bigl[v_i(A_j)\bigr]
\qquad\text{for every ordered pair } i\neq j,
\]
equivalently $\mathbb{E}_{A\sim p}[d_A(i,j)]\ge 0$ for all $i\neq j$. A \emph{BoBW
lottery} is a lottery that is ex-ante EF and whose support consists of allocations
that are Pareto optimal and $\EFoneplus$.
\end{definition}
  
The lottery is built directly on \textsc{GoodsOpt} (\Cref{alg:goodsopt}): instead of selecting a single allocation from the optimal set $\OPT^{+}(v)$, we randomize uniformly over the whole set. Recall the scores of \Cref{sec:optimizers} on the true profile: $W(A)=\sum_{i\in N} v_i(A_i)$ is the utilitarian welfare, and $C(A)=\sum_{i\in N}v_i(A_i)|A_i|$ is the number of items the agents having utility $1$ hold. Let $(W^{\ast},C^{\ast})$ be the optimal score pair, so that
\[
\OPT^{+}(v)=\bigl\{A\in\cA:\ W(A)=W^{\ast}\ \text{and}\ C(A)=C^{\ast}\bigr\}
\]
is the family of allocations computed by \textsc{GoodsOpt} (\Cref{alg:goodsopt}). Indeed, $\OPT^{+}(v)\neq\varnothing$.
 
\begin{definition}[the uniform optimizer lottery]\label{def:lottery}
The lottery $p^{\mathrm{unif}}$ is the uniform distribution over $\OPT^{+}(v)$: it draws
each $A\in\OPT^{+}(v)$ with probability $1/|\OPT^{+}(v)|$.
\end{definition}
  
\begin{theorem}\label{thm:bobwZ}
Let $v_1,\dots,v_n\colon 2^{M}\to\{0,1\}$ be arbitrary Boolean valuations with
$Z = \{i : v_i(\varnothing) = 0\} \neq\varnothing$. Then $p^{\mathrm{unif}}$ (\Cref{def:lottery}) is a BoBW lottery: it is ex-ante
envy-free, and every allocation in its support is Pareto optimal, $\EFoneplus$, and $\EFXzm$.
\end{theorem}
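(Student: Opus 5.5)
The plan is to split the statement into the ex-post part and the ex-ante part. The ex-post part is immediate from earlier results. Every allocation in the support of $p^{\mathrm{unif}}$ lies in $\OPT^{+}(v)$, so it is an allocation that \Cref{alg:goodsopt} may return. Since $Z\neq\varnothing$, \Cref{ilem:plus} then gives that it is Pareto optimal, $\EFoneplus$ and $\EFXzm$. So the only real work is ex-ante envy-freeness.

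For ex-ante envy-freeness, fix an ordered pair $i\neq j$. Since the lottery is uniform, it suffices to show
\[
|\{A\in\OPT^{+}(v): d_A(i,j)=-1\}|\le|\{A\in\OPT^{+}(v): d_A(i,j)=+1\}|.
\]
I will prove this with the swap map $\sigma_{ij}$. It exchanges the bundles of $i$ and $j$ and leaves all other bundles fixed. It is an involution on $\cA$, hence injective. The claim is that $\sigma_{ij}$ maps every $A\in\OPT^{+}(v)$ in which $i$ envies $j$ to an allocation $A'=\sigma_{ij}(A)$ that is again in $\OPT^{+}(v)$ and has $d_{A'}(i,j)=+1$. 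Restricting $\sigma_{ij}$ to these allocations then gives the required injection.

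The key step is to verify the claim. Suppose $v_i(A_i)=0$ and $v_i(A_j)=1$. Then
\[
W(A')-W(A)=1+v_j(A_i)-v_j(A_j).
\]
Maximality of $W$ at $A$ forces $v_j(A_j)=1$ and $v_j(A_i)=0$, so $W(A')=W(A)=W^\ast$. Now compare $C$ on agents $i$ and $j$. In $A$ the contribution is $0+|A_j|$. In $A'$ agent $i$ holds $A_j$ at value $1$ and agent $j$ holds $A_i$ at value $0$, so the contribution is $|A_j|+0$. Hence $C(A')=C^\ast$ and $A'\in\OPT^{+}(v)$. Finally,
\[
d_{A'}(i,j)=v_i(A_j)-v_i(A_i)=1-0=+1.
\]
Summing over the support then gives $\mathbb{E}[d_A(i,j)]\ge0$.

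I expect no serious obstacle. The one point that needs care is the score bookkeeping for $C$. It is exactly this bookkeeping that guarantees the swapped allocation stays in $\OPT^{+}(v)$ and not merely in the welfare-maximizing set $\cW$. Notably, the ex-ante argument does not use $Z\neq\varnothing$; that hypothesis enters only through \Cref{ilem:plus} for the ex-post guarantees.
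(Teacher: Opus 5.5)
Your proposal is correct and follows the paper's proof essentially verbatim. The ex-post guarantees come from \Cref{ilem:plus}, and ex-ante envy-freeness comes from the swap involution $T_{ij}$, which maps each allocation in $\OPT^{+}(v)$ where $i$ envies $j$ injectively to an allocation in $\OPT^{+}(v)$ with $d_{A'}(i,j)=+1$; this is the paper's \Cref{lem:swap} and \Cref{prop:exante}, with the same welfare and secondary-score bookkeeping, and your remark that $Z\neq\varnothing$ enters only through the ex-post part is accurate.
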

 
Since the support of the lottery satisfies $\supp(p^{\mathrm{unif}})=\OPT^{+}(v)$, the ex-post guarantees of PO, $\EFoneplus$, and $\EFXzm$ directly follow from~\Cref{ilem:plus}. The remainder of this
section proves the ex-ante part.
  
For $i\neq j$, the \emph{swap operator} $T_{ij}$ maps an allocation $A$ to the
allocation $A'=T_{ij}(A)$ defined by
\[
A'_i=A_j,\qquad A'_j=A_i,\qquad A'_k=A_k\quad (k\notin\{i,j\}).
\]
Note that $T_{ij}$ is an involution ($T_{ij}$ is its own inverse) on the set of all allocations, i.e.,
$T_{ij}(T_{ij}(A))=A$.
 
\begin{lemma}[swap closure]\label{lem:swap}
Let $A\in\OPT^{+}(v)$ and suppose $d_A(i,j)=-1$. Then $A'=T_{ij}(A)\in\OPT^{+}(v)$ and
$d_{A'}(i,j)=+1$.
\end{lemma}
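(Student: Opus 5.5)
The plan is to compare the score pair $(W,C)$ of $A$ and $A'=T_{ij}(A)$. Since $A$ already attains $(W^\ast,C^\ast)$, it suffices to show $W(A')\ge W(A)$ and, when equality holds, $C(A')\le C(A)$. Optimality of $A$ then forces $(W(A'),C(A'))=(W^\ast,C^\ast)$, that is, $A'\in\OPT^{+}(v)$.

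First I fix what the envy tells us. From $d_A(i,j)=-1$ we get $v_i(A_i)=0$ and $v_i(A_j)=1$. Only agents $i$ and $j$ change bundles, so the change in welfare is
\[
W(A')-W(A)=v_i(A_j)+v_j(A_i)-v_i(A_i)-v_j(A_j)=1+v_j(A_i)-v_j(A_j).
\]
This is at least $0$. It is strictly positive unless $v_j(A_j)=1$ and $v_j(A_i)=0$. Strict positivity would contradict the maximality of $W$ at $A$, so this remaining case must hold.

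Next I compare the secondary score in that case. The joint contribution of $i$ and $j$ to $C$ is $v_i(A_i)|A_i|+v_j(A_j)|A_j|=|A_j|$ under $A$. Under $A'$ it is $v_i(A_j)|A_j|+v_j(A_i)|A_i|=|A_j|$. Hence $C(A')=C(A)$, and $A'$ attains the same optimal pair, so $A'\in\OPT^{+}(v)$.

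Finally, $d_{A'}(i,j)=v_i(A'_i)-v_i(A'_j)=v_i(A_j)-v_i(A_i)=1-0=+1$. The only step needing care is the case analysis on $v_j$, which is forced by welfare maximality. Everything else is routine bookkeeping, and the hypothesis $Z\neq\varnothing$ is not needed for this lemma.
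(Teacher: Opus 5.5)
Your proposal is correct and matches the paper's proof essentially step for step: the welfare difference $1+v_j(A_i)-v_j(A_j)$ forces $v_j(A_j)=1$ and $v_j(A_i)=0$, and then the joint contribution of $i$ and $j$ to $C$ equals $|A_j|$ both before and after the swap. Your remark that the hypothesis $Z\neq\varnothing$ plays no role in this lemma is also accurate.
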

 
\begin{proof}
Since the valuations are Boolean, $d_A(i,j)=-1$ means $v_i(A_i)=0, v_i(A_j)=1$.
 
\medskip
First, we will show that the swap preserves welfare and pins down $j$'s values.
Only agents $i$ and $j$ change bundles, so
\begin{align*}
W(A')-W(A)
&=\bigl(v_i(A'_i)+v_j(A'_j)\bigr)-\bigl(v_i(A_i)+v_j(A_j)\bigr)\\
&=\bigl(v_i(A_j)+v_j(A_i)\bigr)-\bigl(v_i(A_i)+v_j(A_j)\bigr)\\
&=1+v_j(A_i)-v_j(A_j).
\end{align*}
Since $A$ attains the maximum welfare $W^{\ast}$, we must have
$W(A')-W(A)\le 0$, i.e.\ $v_j(A_j)-v_j(A_i)\ge 1$. As both values lie in $\{0,1\}$,
this forces $v_j(A_j)=1, v_j(A_i)=0$,
and consequently $W(A')=W(A)=W^{\ast}$, so $A'$ attains the maximum welfare.
 
\medskip
Now, we will prove that the swap preserves the secondary cost.
By the above argument,
\[
v_i(A'_i)=v_i(A_j)=1,\qquad v_j(A'_j)=v_j(A_i)=0 ,
\]
so among $\{i,j\}$ exactly $j$ is satisfied in $A$ and exactly $i$
is satisfied in $A'$. Every agent
$k\notin\{i,j\}$ keeps her bundle, hence her satisfaction status and her
contribution to $C$ are unchanged. Therefore
\[
C(A')-C(A)=|A'_i|-|A_j|=|A_j|-|A_j|=0,
\]
so $C(A')=C(A)=C^{\ast}$.

Hence, $A'$ attains the optimal score pair
$(W^{\ast},C^{\ast})$, and therefore $A'\in\OPT^{+}(v)$. Finally,
\[
d_{A'}(i,j)=v_i(A'_i)-v_i(A'_j)=v_i(A_j)-v_i(A_i)=1-0=+1. \qedhere
\]
\end{proof}
 
\begin{proposition}\label{prop:exante}
The lottery $p^{\mathrm{unif}}$ is ex-ante envy-free.
\end{proposition}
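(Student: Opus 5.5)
Fix an ordered pair $i\neq j$. Since $p^{\mathrm{unif}}$ is uniform on $\OPT^{+}(v)$, the claim $\mathbb{E}_{A\sim p^{\mathrm{unif}}}[d_A(i,j)]\ge 0$ is equivalent to
\[
\sum_{A\in\OPT^{+}(v)} d_A(i,j)\;\ge\;0 .
\]
I will prove this by an injection argument built on the swap operator. Let $\mathcal{E}_{ij}=\{A\in\OPT^{+}(v): d_A(i,j)=-1\}$ be the optimal allocations in which $i$ envies $j$, and $\mathcal{P}_{ij}=\{A\in\OPT^{+}(v): d_A(i,j)=+1\}$ those in which $i$ strictly prefers her own bundle to $j$'s. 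All other allocations in $\OPT^{+}(v)$ contribute zero. So the sum equals $|\mathcal{P}_{ij}|-|\mathcal{E}_{ij}|$, and it suffices to show $|\mathcal{E}_{ij}|\le|\mathcal{P}_{ij}|$.

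The injection is the swap $T_{ij}$ restricted to $\mathcal{E}_{ij}$. By \Cref{lem:swap}, for every $A\in\mathcal{E}_{ij}$ the swapped allocation $T_{ij}(A)$ again lies in $\OPT^{+}(v)$ and satisfies $d_{T_{ij}(A)}(i,j)=+1$, so $T_{ij}$ maps $\mathcal{E}_{ij}$ into $\mathcal{P}_{ij}$. It is injective on all allocations because it is an involution: $T_{ij}(A)=T_{ij}(B)$ gives $A=T_{ij}(T_{ij}(A))=T_{ij}(T_{ij}(B))=B$. Hence $|\mathcal{E}_{ij}|\le|\mathcal{P}_{ij}|$, and the expected envy differential is nonnegative for every ordered pair, which is ex-ante envy-freeness.

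There is essentially no obstacle left at this point, since \Cref{lem:swap} already carries the substantive work: welfare maximality forces $v_j(A_j)=1$ and $v_j(A_i)=0$, and the secondary score is preserved. The one point to state explicitly is that $\OPT^{+}(v)$ is nonempty and finite, so $p^{\mathrm{unif}}$ is well defined. The hypothesis $Z\neq\varnothing$ is not needed for the ex-ante part. It enters only through \Cref{ilem:plus}, for the ex-post guarantees on the support.
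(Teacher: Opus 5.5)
Your proof is correct and is essentially the paper's argument: partition $\OPT^{+}(v)$ by the value of $d_A(i,j)$, use \Cref{lem:swap} to map the envy class into the $+1$ class, and conclude $|S_{-1}|\le|S_{+1}|$ from the injectivity of the involution $T_{ij}$. Your side remark is also accurate: $Z\neq\varnothing$ plays no role in the ex-ante part, since \Cref{lem:swap} uses only the optimality of the score pair $(W^{\ast},C^{\ast})$.
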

 
\begin{proof}
Fix an ordered pair $i\neq j$ and partition $\OPT^{+}(v)$ according to the value of
$d_A(i,j)$:
\[
S_{d}=\{A\in\OPT^{+}(v): d_A(i,j)=d\}
\qquad\text{for } d\in\{-1,0,+1\}.
\]
By \Cref{lem:swap}, the swap operator $T_{ij}$ maps $S_{-1}$ into $S_{+1}$.
Since $T_{ij}$ is an involution on the set of all allocations, it is injective, and
therefore
\[
|S_{-1}|\;\le\;|S_{+1}| .
\]
Consequently,
\[
\mathbb{E}_{A\sim p^{\mathrm{unif}}}\bigl[v_i(A_i)-v_i(A_j)\bigr]
=\frac{1}{|\OPT^{+}(v)|}\sum_{A\in\OPT^{+}(v)} d_A(i,j)
=\frac{|S_{+1}|-|S_{-1}|}{|\OPT^{+}(v)|}\;\ge\;0 .
\]
As the ordered pair $(i,j)$ was arbitrary,
$\mathbb{E}_{A\sim p^{\mathrm{unif}}}[v_i(A_i)]\ge
\mathbb{E}_{A\sim p^{\mathrm{unif}}}[v_i(A_j)]$ for all $i\neq j$; that is,
$p^{\mathrm{unif}}$ is ex-ante envy-free.
\end{proof}
 
\begin{proof}[Proof of \Cref{thm:bobwZ}]
The support of $p^{\mathrm{unif}}$ is $\OPT^{+}(v)$, and by \Cref{ilem:plus} every $A\in\OPT^{+}(v)$ is Pareto optimal, $\EFoneplus$, and $\EFXzm$; this establishes the ex-post guarantee. \Cref{prop:exante} establishes ex-ante envy-freeness. Hence $p^{\mathrm{unif}}$ is a BoBW lottery.
\end{proof}
 
\section{Matroid Constraints}
\label{sec:matroid}

Until now every partition of the items has been admissible. We now require the bundles to be independent in a matroid, which models settings such as course allocation with per-category quotas. For two agents we determine exactly which matroids admit a feasible EF1 allocation for every Boolean profile; the answer concerns the way in which one pair of complementary independent sets can be turned into another by exchanging single items, and through this route the fairness question meets two conjectures of matroid theory that have been open for decades.

\subsection{Matroid preliminaries}
 
Let \(\cK=(M,\cI)\) be the matroid. That is, \(M\) is the ground set of items and \(\cI\subseteq 2^M\) is the family of \emph{independent sets}, satisfying three axioms, $(i)$ the non-emptiness axiom which requires \(\varnothing\in\cI\), $(ii)$ the hereditary axiom which states that if \(X\in\cI\) and \(Y\subseteq X\), then \(Y\in\cI\), and $(iii)$ the augmentation axiom, with the requirement that if \(X,Y\in\cI\) with \(|X|<|Y|\), then some \(e\in Y\setminus X\) satisfies \(X\cup\{e\}\in\cI\).

A \emph{basis} is an inclusion-maximal independent set. The augmentation axiom implies that all bases have the same cardinality, called the \emph{rank} of the matroid. In a uniform matroid of rank \(r\), a set is independent exactly when it has cardinality at most \(r\). A partition matroid partitions \(M\) into blocks \(P_q\) with capacities \(r_q\), and a set \(S\) is independent exactly when \(|S\cap P_q|\le r_q\) for every block \(P_q\).

We assume that there is a common matroid \(\cK=(M,\cI)\) for every agent. An allocation \(A\) is \emph{feasible} if \(A_i\in\cI\) for every \(i\); the allocation continues to satisfy $\cup_i A_i = M$ and $A_i \cap A_j = \emptyset$ for all distinct $i,j \in N$. We assume that at least one complete feasible allocation exists.

Fairness is defined exactly as in \Cref{sec:preliminaries}: a complete feasible allocation \(A\) is EF1 if, for every ordered pair of agents \(i,j\), either \(v_i(A_i)\ge v_i(A_j)\), or some item \(e\in A_i\cup A_j\) satisfies \(v_i(A_i\setminus\{e\})\ge v_i(A_j\setminus\{e\})\). The constraint thus restricts which allocations are admissible, not which comparisons the definition may make; and since the independent sets of a matroid are closed under taking subsets, the bundles obtained by deleting a single item in the fairness test are themselves independent.

In the following section, we uncover the necessary and sufficient condition that the feasibility matroid $\cK$ must satisfy to enable the existence of feasible EF1 allocations.
 
\subsection{A characterization for two agents}
 
Let $\cF_{\cK}$ be the set of independent sets of $\cK$ whose complement sets are also independent sets, formally,
\[
\cF_{\cK}=\{B\subseteq M:B\in\cI\text{ and }M\setminus B\in\cI\}.
\]
Thus every \(B\in\cF_{\cK}\) represents the ordered feasible allocation \((B,M\setminus B)\). Define a graph \(H_{\cK}\) on vertex set \(\cF_{\cK}\), joining distinct vertices \(B,B'\) whenever
\[
\lowerstar B\cap\lowerstar B'\ne\varnothing
\qquad\text{or}\qquad
\lowerstar(M\setminus B)\cap\lowerstar(M\setminus B')\ne\varnothing.
\]
Define the \emph{complementation map}
\[
\tau:\cF_{\cK}\to\cF_{\cK},\qquad \tau(B)=M\setminus B .
\]
It is well defined, because the two conditions defining \(\cF_{\cK}\) are exchanged when \(B\) is replaced by \(M\setminus B\), and it satisfies \(\tau(\tau(B))=B\). Furthermore, \(\tau\) is an automorphism of \(H_{\cK}\): replacing \(B\) by \(M\setminus B\) and \(B'\) by \(M\setminus B'\) simultaneously exchanges the two conditions of the adjacency rule above, so \(B\) and \(B'\) are adjacent if and only if \(\tau(B)\) and \(\tau(B')\) are. Consequently \(\tau\) maps every connected component of \(H_{\cK}\) onto a connected component. For a component \(K \subseteq \cF_{\cK}\) of \(H_{\cK}\), we write \(\tau(K)=\{\tau(B):B\in K\}\), and we call \(K\) \emph{self-complementary} if \(\tau(K)=K\).
 
\begin{theorem}
\label{thm:matroid-characterization}
An instance with two agents and a common matroid feasibility constraint represented by $\cK$ admits a complete feasible EF1 allocation if and only if \(H_{\cK}\) has a connected component \(K\) satisfying \(\tau(K)=K\).
\end{theorem}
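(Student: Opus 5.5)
We prove both directions using the violation characterization of \Cref{lem:violation}, specialized to two agents. Throughout, an allocation is identified with \(B\in\cF_{\cK}\), meaning agent~1 holds \(B\) and agent~2 holds \(\tau(B)\). We read the statement as a guarantee over all Boolean profiles on \(\cK\), as in the introduction: every two-agent Boolean profile admits a feasible EF1 allocation if and only if some component of \(H_{\cK}\) is self-complementary. By \Cref{lem:violation}, the allocation \(B\) fails EF1 exactly when one of two things holds. Either \(B\) is unsafe and \(\tau(B)\) is robust for \(v_1\), or \(\tau(B)\) is unsafe and \(B\) is robust for \(v_2\).

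For an agent \(i\), let
\[
R_i=\{B\in\cF_{\cK}: B \text{ unsafe for } v_i,\ \tau(B)\text{ robust for } v_i\}.
\]
Then \(B\) violates EF1 iff \(B\in R_1\cup\tau(R_2)\). The basic tool is the following local lemma.

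\emph{No-edge lemma.} If \(S\) is robust and \(S'\) is unsafe for the same valuation, then \(\lowerstar S\cap\lowerstar S'=\varnothing\). Indeed, a common element \(T\) would satisfy both \(v(T)=1\) and \(v(T)=0\). Consequently there is no edge of \(H_{\cK}\) between \(R_i\) and \(\tau(R_i)\). To see this, take \(B\in R_i\) and \(B'\in\tau(R_i)\). Then \(B\) is unsafe while \(B'\) is robust, and \(\tau(B)\) is robust while \(\tau(B')\) is unsafe. So both adjacency conditions fail. Moreover \(R_i\cap\tau(R_i)=\varnothing\), since no set is both unsafe and robust.

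\emph{Sufficiency.} Let \(K\) be a component with \(\tau(K)=K\), and suppose for contradiction that every \(B\in K\) violates EF1, i.e.\ \(K\subseteq R_1\cup\tau(R_2)\). Applying \(\tau\) gives \(K\subseteq\tau(R_1)\cup R_2\). Intersect the two covers and use \(R_1\cap\tau(R_1)=\varnothing\) and \(R_2\cap\tau(R_2)=\varnothing\). This yields \(K\subseteq D\cup\tau(D)\) with \(D=R_1\cap R_2\). The sets \(D\) and \(\tau(D)\) are disjoint, and by the no-edge lemma (for agent~1) no edge joins them. Since \(K\ne\varnothing\) and \(K=\tau(K)\), a vertex \(B\in K\cap D\) (or one in \(K\cap\tau(D)\)) forces \(\tau(B)\in K\) to lie in the other part. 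This disconnects \(K\), which is a contradiction. Hence some \(B\in K\) gives an EF1 allocation.

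\emph{Necessity.} Suppose no component is self-complementary. Then \(\tau\) pairs the components into distinct pairs \(\{K,\tau(K)\}\). Choose one member of each pair and let \(\mathcal S\) be the union of the chosen components, so that \(\cF_{\cK}=\mathcal S\sqcup\tau(\mathcal S)\). Give both agents the identical valuation
\[
v(T)=1 \iff T\in\lowerstar B' \text{ for some } B'\in\tau(\mathcal S).
\]
Every \(B'\in\tau(\mathcal S)\) is robust by construction. Every \(B\in\mathcal S\) is unsafe: \(B\) lies in a different component from each \(B'\in\tau(\mathcal S)\), so they are non-adjacent, and in particular \(\lowerstar B\cap\lowerstar B'=\varnothing\). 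Now take any feasible allocation. Exactly one of the two bundles lies in \(\mathcal S\), and the other lies in \(\tau(\mathcal S)\). The holder of the \(\mathcal S\)-bundle has an unsafe bundle, while the complement is robust for the same \(v\). By \Cref{lem:violation} this is an EF1 violation, so no feasible EF1 allocation exists.

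The only delicate step is the sufficiency direction. There the two agents have unrelated valuations, and one must see that the two covers can be intersected via \(\tau\)-invariance of \(K\) to reduce to the single set \(D\). This is the part I would check most carefully.
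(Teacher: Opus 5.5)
Your proof is correct, and its key observation is the paper's: for one Boolean valuation, $R_1$ and $\tau(R_1)$ are disjoint, non-adjacent in $H_{\cK}$, and swapped by $\tau$, which is incompatible with a connected $\tau$-invariant component.

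\emph{Necessity.} This is the paper's colouring construction in a slightly different guise. The paper 2-colours the components with $c(\tau(K))=1-c(K)$ and must check that $v(S)=c(K_B)$ is well defined. Your indicator of $\bigcup_{B'\in\tau(\mathcal S)}\lowerstar B'$ is well defined automatically. The same non-adjacency fact is then used to show that the $\mathcal S$-bundles are unsafe.

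\emph{Sufficiency.} Here you handle agent~2 differently from the paper.
\begin{itemize}
\item The paper uses $v_1$ alone. Along a path from $B$ to $\tau(B)$ inside $K$, it finds a vertex outside $R_1\cup\tau(R_1)$, i.e.\ a split that is safe for agent~1 in either orientation. It then finishes by cut-and-choose, letting agent~2 take her preferred bundle.
\item You treat both agents at once. You intersect the cover $K\subseteq R_1\cup\tau(R_2)$ with its $\tau$-image, which confines $K$ to $D\cup\tau(D)$, and then you disconnect $K$. No path and no chooser step are needed.
\end{itemize}

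The paper's version is more constructive and local: the good cut lies on any prescribed path from $B$ to $\tau(B)$ and depends only on $v_1$. Yours is a shorter global set-algebra argument.

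Neither is more general with respect to agent~2. Your argument uses only $R_2\cap\tau(R_2)=\varnothing$, which holds for any valuation of agent~2, since she cannot envy in both orientations of the same split. Cut-and-choose likewise needs nothing about her valuation.
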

 
\begin{proof}
For sufficiency, let \(K\) be a self-complementary component and fix \(B\in K\). Since \(\tau(K)=K\), the vertex \(\tau(B)\) also lies in \(K\), so the connectedness of \(K\) provides a path \(B=X_0,X_1,\ldots,X_\ell=\tau(B)\) in \(H_{\cK}\). Call a vertex \(X\in\cF_{\cK}\) \emph{agreeable for \(v_1\)} if agent \(1\) is the source of no EF1 violation when she receives \(X\) and the other agent receives \(M\setminus X\), and also when she receives \(M\setminus X\) and the other agent receives \(X\). We claim that the path contains an agreeable vertex.

Recall from \Cref{sec:preliminaries} that a bundle \(S\) is \emph{unsafe} for \(v_1\) if \(v_1(T)=0\) for every \(T\in\lowerstar S\), and \emph{robust} for \(v_1\) if \(v_1(T)=1\) for every \(T\in\lowerstar S\). By \Cref{lem:violation}, agent \(1\) holding \(S\) while the other agent holds \(M\setminus S\) is an EF1 violation exactly when \(S\) is unsafe and \(M\setminus S\) is robust for \(v_1\).

Assume, for contradiction, that no vertex of the path is agreeable. Consider a vertex \(X_q\) along the path. Exactly one of the two ways of assigning the pair \(\{X_q,M\setminus X_q\}\) to the two agents makes agent \(1\) the source of a violation: at least one by the assumption, and not both, because a bundle cannot be at once unsafe and robust for \(v_1\). We may therefore label \(X_q\) by \(0\) if \(X_q\) is unsafe and \(M\setminus X_q\) is robust for \(v_1\), and by \(1\) in the opposite case.

We now show that adjacent vertices of the path carry the same label, this will eventually lead to a contradiction. Indeed, suppose \(X\) has label \(0\) and an adjacent vertex \(Y\) has label \(1\). Then every set in \(\lowerstar X\) has value zero for \(v_1\) and every set in \(\lowerstar Y\) has value one, so \(\lowerstar X\cap\lowerstar Y=\varnothing\); likewise every set in \(\lowerstar(M\setminus X)\) has value one and every set in \(\lowerstar(M\setminus Y)\) has value zero, so \(\lowerstar(M\setminus X)\cap\lowerstar(M\setminus Y)=\varnothing\). Both conditions in the definition of \(H_{\cK}\) fail, so \(X\) and \(Y\) are not adjacent. Hence, by contradiction, the label is a fixed constant along the path. 

On the other hand, interchanging \(X\) and \(M\setminus X\) interchanges the two cases of the labelling, so \(\tau(X)\) carries the label opposite to that of \(X\); in particular the endpoints \(X_0=B\) and \(X_\ell=\tau(B)\) carry opposite labels. This contradiction implies that there must be at least one agreeable vertex on the path.\footnote{The logical structure of this argument deserves a remark, since the labelling above is available only under the supposition that no vertex of the path is agreeable. What the argument establishes is that \emph{at least one} vertex of the path is agreeable; it does not exhibit that vertex, and in particular it does not show that the chosen starting vertex \(B\) is agreeable. If \(B\) happens to be agreeable, then the path is not needed at all.}

Fix an agreeable vertex \(X\) of the path. Both \(X\) and \(M\setminus X\) are independent, because \(X\in\cF_{\cK}\). Offer the two bundles \(X\) and \(M\setminus X\) to agent \(2\), let her keep one that she weakly prefers, and give the other to agent \(1\). Agent \(2\) does not envy agent \(1\), and agent \(1\) is the source of no violation, whichever of the two bundles she receives, because \(X\) is agreeable for \(v_1\). The result is a complete feasible EF1 allocation for the profile \((v_1,v_2)\).

For necessity, suppose no connected component maps onto itself under \(\tau\). The components then occur in disjoint pairs \(K,\tau(K)\). Choose a coloring \(c\) of the components by zero and one such that \(c(\tau(K))=1-c(K)\). Define a Boolean valuation \(v\) as follows. If \(S\in\lowerstar B\) for some \(B\in\cF_{\cK}\), set \(v(S)=c(K_B)\), where \(K_B\) is the component containing \(B\); assign value zero to all remaining sets. This is well defined: if \(\lowerstar B\cap\lowerstar B' \neq \varnothing\), then either \(B=B'\) or \(B,B'\) are adjacent, and hence belong to the same component.
 
Fix \(B\in\cF_{\cK}\). Every set in \(\lowerstar B\) has value \(c(K_B)\), while every set in \(\lowerstar(M\setminus B)\) has value \(1-c(K_B)\). If \(c(K_B)=0\), then \(B\) is unsafe and \(M\setminus B\) robust; if \(c(K_B)=1\), the reverse holds. Consequently, under the identical profile \((v,v)\), every feasible allocation has one agent holding an unsafe bundle and the other holding a robust bundle. \Cref{lem:violation} shows that no feasible allocation is EF1, completing the proof.
\end{proof}
 
\subsection{Basis-pair reconfiguration and positive classes}
We now specialize \Cref{thm:matroid-characterization} to the case in which the ground set is exactly the union of two bases. We show that the graph \(H_{\cK}\) then records single exchanges of items between a pair of complementary bases, and we explain how the self-complementary-component condition relates to two long-standing conjectures on such exchanges; this yields feasible EF1 allocations for several classes of matroids.

Suppose that \(\cK\) has rank \(r\), that \(|M|=2r\), and that \(M\) can be partitioned into two independent sets; we refer to this as the \emph{base-partition case}. Each part of such a partition has size at most \(r\) and the two sizes add up to \(2r\), so both parts have size exactly \(r\) and are therefore bases. Hence \(\cF_{\cK}\) consists precisely of those bases whose complement is again a basis, and a vertex \(B\) of \(H_{\cK}\) encodes the ordered pair of complementary bases \((B,M\setminus B)\), with \(\tau\) interchanging the two entries of the pair.

Recall that a \emph{symmetric exchange}~\cite{white1980} is a procedure that transforms a pair of disjoint bases \((B,M\setminus B)\) into \((B-x+y,\,(M\setminus B)-y+x)\), where \(x\in B\) and \(y\in M\setminus B\) are such that both new sets are again bases.\footnote{For simplicity we denote $X \cup \{g\}$ and $X \setminus \{g\}$ using $X + g$ and $X - g$ respectively.} In the base-partition case, adjacency of two vertices in \(H_{\cK}\) exactly corresponds to the existence of a symmetric exchange between the two pairs corresponding to each vertex. Indeed, if \(B'=B-x+y\) arises from \(B\) by a symmetric exchange, then \(B\setminus\{x\}=B'\setminus\{y\}\) lies in \(\lowerstar B\cap\lowerstar B'\), so \(B\) and \(B'\) are adjacent. Conversely, let \(B\ne B'\) be adjacent, and suppose first that \(\lowerstar B\cap\lowerstar B'\ne\varnothing\). A common member of the two lower stars has size \(r\) or \(r-1\); it cannot have size \(r\), since that would force \(B=B'\), so it is of the form \(B\setminus\{x\}=B'\setminus\{y\}\) with \(x\in B\) and \(y\in B'\), whence \(B'=B-x+y\). Both complements are bases because \(B'\in\cF_{\cK}\), so this is a symmetric exchange. If instead \(\lowerstar(M\setminus B)\cap\lowerstar(M\setminus B')\ne\varnothing\), the same argument applied to the complements gives \(M\setminus B'=(M\setminus B)-y+x\), which is again a symmetric exchange. Thus \(H_{\cK}\) is the graph whose vertices are the pairs of complementary bases and whose edges join two pairs that differ by one symmetric exchange.

White~\cite{white1980} conjectured that any two pairs of bases with the same union can be transformed into one another by a sequence of symmetric exchanges, and Gabow~\cite{gabow1976} conjectured the stronger statement that such exchanges can be carried out in a serial fashion. These conjectures relate to the condition of \Cref{thm:matroid-characterization} as follows. In the base-partition case, the pairs \((B,M\setminus B)\) and \((M\setminus B,B)\) have the same union, namely \(M\); White's conjecture applied to them asserts that they are connected by symmetric exchanges, that is, by the previous paragraph, that \(B\) and \(\tau(B)\) lie in the same connected component of \(H_{\cK}\). This is precisely the statement that the component containing \(B\) is self-complementary, which is what \Cref{thm:matroid-characterization} requires. Note that only one vertex \(B\) and its complement are needed, so the condition of \Cref{thm:matroid-characterization} is weaker than the conjectures.

The conjectures are known to hold for several classes of matroids, which the previous paragraph converts into fairness guarantees. B\'erczi and Schwarcz prove the required exchange property for split matroids, a class that includes sparse paving matroids~\cite{berczi-schwarcz}, and B\'erczi, M\'atrav\"olgyi, and Schwarcz prove it for regular matroids~\cite{berczi-matravolgyi-schwarcz}. For a strongly base-orderable matroid the property is immediate: there is a bijection \(\phi:B\to M\setminus B\) such that exchanging the elements of any subset of the pairs \(\{x,\phi(x)\}\) leaves both sets bases, so performing these exchanges one pair at a time traces a path in \(H_{\cK}\) from \(B\) to \(\tau(B)\). Combining these facts with \Cref{thm:matroid-characterization} gives the following.

\begin{corollary}
\label{cor:matroid-classes}
In a two-agent base-partition instance, arbitrary Boolean valuations admit a feasible EF1 allocation whenever the matroid is strongly base-orderable, split, sparse paving, or regular.
\end{corollary}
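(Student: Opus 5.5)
The corollary follows by combining \Cref{thm:matroid-characterization} with the reduction to symmetric exchanges described just above it. Throughout we assume the base-partition case: \(\cK\) has rank \(r\), \(|M|=2r\), and \(M\) splits into two independent sets. Then \(\cF_{\cK}\) is exactly the set of bases \(B\) whose complement \(M\setminus B\) is also a basis, and this set is nonempty. The adjacency analysis above shows that \(H_{\cK}\) is the graph on these complementary basis pairs in which two pairs are joined exactly when they differ by one symmetric exchange. So, by \Cref{thm:matroid-characterization}, it suffices to find a single \(B\in\cF_{\cK}\) that is joined to \(\tau(B)=M\setminus B\) by a sequence of symmetric exchanges. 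The component containing \(B\) then contains \(\tau(B)\), and since \(\tau\) maps components onto components, that component is self-complementary.

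We treat each class in turn.

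\emph{Strongly base-orderable matroids.} Fix \(B\in\cF_{\cK}\) and take the bijection \(\phi:B\to M\setminus B\) from the definition. For every \(S\subseteq B\), the set \((B\setminus S)\cup\phi(S)\) is a basis. Its complement is \(((M\setminus B)\setminus\phi(S))\cup S\), which is also a basis by the same property. Order the elements of \(B\) as \(x_1,\dots,x_r\) and let \(S_t=\{x_1,\dots,x_t\}\). Consecutive sets \((B\setminus S_{t-1})\cup\phi(S_{t-1})\) and \((B\setminus S_t)\cup\phi(S_t)\) differ by the symmetric exchange of \(x_t\) with \(\phi(x_t)\). The final set is \(M\setminus B\), so we have a path in \(H_{\cK}\) from \(B\) to \(\tau(B)\).

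\emph{Split, sparse paving and regular matroids.} Here we invoke the cited results of B\'erczi--Schwarcz (split matroids, which include sparse paving matroids) and B\'erczi--M\'atrav\"olgyi--Schwarcz (regular matroids). Both establish White's conjecture for the relevant class: any two basis pairs with the same union are connected by symmetric exchanges. We apply this to the pairs \((B,M\setminus B)\) and \((M\setminus B,B)\), whose common union is \(M\). This gives a symmetric-exchange sequence between them, which is a path from \(B\) to \(\tau(B)\) in \(H_{\cK}\).

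The main point to check is that the cited theorems really give the \emph{ordered} form needed here, namely that \((B,M\setminus B)\) reaches the swapped pair \((M\setminus B,B)\), not merely the same unordered pair, and that each intermediate step is a symmetric exchange of complementary bases, so stays inside \(\cF_{\cK}\). For strongly base-orderable matroids this is explicit in the construction above. For the cited classes, White's conjecture as stated concerns ordered pairs of bases, so the swapped pair is a legitimate target. In each case \Cref{thm:matroid-characterization} then yields a feasible EF1 allocation for arbitrary Boolean valuations of the two agents.
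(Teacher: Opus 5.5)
Your proposal is correct and follows the paper's argument essentially verbatim. You reduce via \Cref{thm:matroid-characterization} to connecting $B$ and $\tau(B)$ in $H_{\cK}$, handle strongly base-orderable matroids by performing the exchanges $x\leftrightarrow\phi(x)$ one at a time, and for the other classes cite the White-conjecture results of B\'erczi--Schwarcz and B\'erczi--M\'atrav\"olgyi--Schwarcz applied to the ordered pairs $(B,M\setminus B)$ and $(M\setminus B,B)$. One step deserves to be made explicit: the complement $((M\setminus B)\setminus\phi(S))\cup S$ equals $(B\setminus(B\setminus S))\cup\phi(B\setminus S)$, so it is a basis by applying the strong base-orderability property to the subset $B\setminus S$, not to $S$ itself.
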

 
The exact condition in \Cref{thm:matroid-characterization} is weaker than global basis-pair connectivity: it requires only one self-complementary component. Consequently, the theorem identifies the precise reconfiguration property needed by arbitrary Boolean valuations without assuming a stronger open conjecture. The following corollary follows by padding a feasible instance to the base-partition case and using the fact that uniform matroids and partition matroids are strongly base-orderable.
 
\begin{corollary}
\label{cor:uniform-partition}
For two agents, arbitrary Boolean valuations admit a complete feasible EF1 allocation under every uniform-matroid or partition-matroid constraint for which a complete feasible allocation exists.
\end{corollary}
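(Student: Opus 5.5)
The plan is to reduce Corollary~\ref{cor:uniform-partition} to Corollary~\ref{cor:matroid-classes} by padding. Suppose the two-agent instance has a uniform or partition matroid \(\cK=(M,\cI)\) and admits some complete feasible allocation \((A_1,A_2)\). We enlarge \(\cK\) to a matroid \(\widehat\cK=(\widehat M,\widehat\cI)\) of the same type, adding dummy items \(D\) so that \(\widehat M=M\cup D\) is the disjoint union of two bases of \(\widehat\cK\). Then we extend each valuation by \(\widehat v_i(S)=v_i(S\setminus D)\), so that dummies are worthless.

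For a uniform matroid of rank \(r\), feasibility gives \(|M|\le 2r\). We add \(2r-|M|\) dummies and keep rank \(r\); this is still uniform, and any split of \(\widehat M\) into two \(r\)-sets is a base partition. For a partition matroid with blocks \(P_q\) and capacities \(r_q\), feasibility gives \(|P_q|\le 2r_q\). We add \(2r_q-|P_q|\) dummies to block \(q\), keeping the capacities, and then splitting each block evenly gives two complementary bases. In both cases, uniform and partition matroids are strongly base-orderable; this is standard, since any two bases admit a bijection under which exchanging any subset of pairs preserves basehood, matched within blocks. So Corollary~\ref{cor:matroid-classes} gives a feasible EF1 allocation \((\widehat A_1,\widehat A_2)\) of \(\widehat M\) for \((\widehat v_1,\widehat v_2)\).

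Next we set \(A_i=\widehat A_i\setminus D\). Each \(A_i\) is independent in \(\cK\) because \(\widehat\cI\) restricted to \(M\) equals \(\cI\) by the hereditary property, and the allocation is complete on \(M\). The main step to check is that EF1 survives removing dummies. Since \(\widehat v_i(\widehat A_k)=v_i(A_k)\), envy is identical in both instances.

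The obstacle is the case where the EF1 witness \(e\) is a dummy. Then deleting \(e\) does not change either value, so \(e\) cannot eliminate envy. Any envy-eliminating witness is therefore a real item \(e\in A_i\cup A_j\), and \(\widehat v_i(\widehat A_k\setminus\{e\})=v_i(A_k\setminus\{e\})\). Hence the same \(e\) certifies EF1 in the original instance. Equivalently, by Proposition~\ref{lem:violation}, unsafety and robustness are unaffected, because deleting a dummy reproduces the same member of the lower star as the undeleted bundle. This completes the reduction.
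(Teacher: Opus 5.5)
Your proposal is correct and follows the paper's proof essentially step for step: pad with worthless dummy items to reach a base-partition instance of a strongly base-orderable (uniform or partition) matroid, invoke \Cref{cor:matroid-classes}, intersect the bundles with $M$, and observe that a dummy item can never be an envy-eliminating witness. One small wording point: $\widehat\cI$ restricted to subsets of $M$ equals $\cI$ because the padded matroid keeps the same rank or capacities, not by heredity; heredity is only needed to get $\widehat A_i\setminus D\in\widehat\cI$.
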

 
\begin{proof}
For a uniform matroid of rank \(r\), the existence of a complete feasible allocation implies \(|M|\le2r\). Add \(2r-|M|\) dummy items; the enlarged rank-\(r\) uniform matroid has a ground set of size \(2r\) and can therefore be partitioned into two bases. For a partition matroid with blocks \(P_q\) and capacities \(r_q\), the existence of a complete feasible allocation implies \(|P_q|\le2r_q\) for every block. Add \(2r_q-|P_q|\) dummy items to block \(P_q\). Each enlarged block then has size \(2r_q\), so splitting every block into two sets of size \(r_q\) partitions the enlarged ground set into two bases. Thus, in either case, the enlarged instance is a base-partition instance of a strongly base-orderable matroid.

Extend each valuation by setting \(\widetilde v_i(S)=v_i(S\cap M)\), and apply \Cref{cor:matroid-classes} to obtain a feasible EF1 allocation \(\widetilde A\) of the enlarged instance. Set \(A_i=\widetilde A_i\cap M\) for each agent \(i\). The resulting allocation \(A\) is complete and feasible for the original instance. Fix an ordered pair of agents \(i,j\). If \(\widetilde v_i(\widetilde A_i)\ge\widetilde v_i(\widetilde A_j)\), then \(v_i(A_i)\ge v_i(A_j)\). Otherwise, let \(e\in\widetilde A_i\cup\widetilde A_j\) be an EF1 witness. If \(e\in M\), then the same item witnesses EF1 for \(A\). If \(e\) is a dummy item, then \(\widetilde v_i(\widetilde A_i\setminus\{e\})\ge\widetilde v_i(\widetilde A_j\setminus\{e\})\) reduces to \(v_i(A_i)\ge v_i(A_j)\), so no witness is needed. Hence \(A\) is EF1.
\end{proof}
 
The question for arbitrary Boolean valuations under matroid constraints is therefore already tied, in the two-agent case, to a longstanding basis-pair reconfiguration problem. 

\subsection{EF1 and Pareto optimality can conflict under a partition matroid}

We show that EF1 and PO are incompatible under partition-matroid constraints. Pareto optimality in this section is relative to the family of complete feasible allocations.

\begin{proposition}
\label{prop:matroid-po}
There is a two-agent instance with identical normalized Boolean valuations and a partition-matroid constraint in which feasible EF1 allocations and feasible Pareto-optimal allocations both exist, but no feasible allocation satisfies both properties.
\end{proposition}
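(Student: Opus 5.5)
The plan is to exhibit an explicit small instance in which every feasible allocation is pinned down by the matroid. In that instance the only way to give anyone value one is to hand her a bundle that is \emph{robust}, while the other agent is left with an \emph{unsafe} bundle. \Cref{lem:violation} then turns every Pareto-optimal allocation into an EF1 violation.

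\textbf{The instance.} Take two agents and items \(M=\{a,b,c,d\}\). Use the partition matroid with blocks \(P_1=\{a,b\}\) and \(P_2=\{c,d\}\), each of capacity one. Since each block has size two and capacity one, every complete feasible allocation gives each agent exactly one item from each block. So there are exactly four feasible allocations: agent \(1\) receives one of \(\{a,c\},\{a,d\},\{b,c\},\{b,d\}\), and agent \(2\) receives the complement. Both agents have the identical normalized valuation
\[
v(S)=1 \iff S\in\bigl\{\{a\},\{c\},\{a,c\}\bigr\},
\]
so \(v(\varnothing)=0\).

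\textbf{EF1 allocations exist.} In the allocation \((\{a,d\},\{b,c\})\), both bundles have value zero, so there is no envy and the allocation is EF1. By symmetry the same holds for \((\{b,c\},\{a,d\})\).

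\textbf{PO allocations exist.} Among feasible allocations, a bundle of value one can only be \(\{a,c\}\), because the other value-one sets are singletons and are not complete-feasible bundles. Hence the maximum welfare is one, and the allocations giving \(\{a,c\}\) to some agent Pareto dominate the two welfare-zero allocations. Conversely, the two allocations \((\{a,c\},\{b,d\})\) and \((\{b,d\},\{a,c\})\) are Pareto optimal. Improving the agent holding \(\{b,d\}\) would require giving her \(\{a,c\}\), which drops the other agent to zero.

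\textbf{No allocation is both.} Consider a Pareto-optimal allocation, and let \(i\) be the agent holding \(\{b,d\}\).
\begin{itemize}
\item \(\{b,d\}\) is unsafe for \(v\), since \(v(\{b,d\})=v(\{b\})=v(\{d\})=0\).
\item \(\{a,c\}\) is robust for \(v\), since \(v(\{a,c\})=v(\{a\})=v(\{c\})=1\).
\end{itemize}
So by \Cref{lem:violation}, agent \(i\) has an EF1 violation toward the other agent.

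\textbf{Main obstacle.} The delicate step is choosing \(v\) so that the envied value-one bundle is robust despite normalization. This is why the singletons \(\{a\}\) and \(\{c\}\) are assigned value one. They are never feasible bundles themselves, because feasibility forces each agent to hold two items, so they do not create new value-one allocations. Once the instance is fixed, the remaining verification is a routine check over the four feasible allocations.
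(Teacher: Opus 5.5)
Your proof is correct and is essentially the paper's argument. Your instance is the paper's up to relabeling the items: the paper uses capacity-one blocks \(\{a,c\},\{b,d\}\) and value-one sets \(\{c\},\{d\},\{c,d\}\). The verification is also the same: the welfare-zero allocations are envy-free but Pareto dominated, and each Pareto-optimal allocation pairs an unsafe bundle with a robust one, so \Cref{lem:violation} gives an EF1 violation.
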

 
\begin{proof}
Let \(M=\{a,b,c,d\}\). The partition matroid has two capacity-one blocks, \(\{a,c\}\) and \(\{b,d\}\). Let both agents have the normalized valuation
\[
v(S)=1\quad\Longleftrightarrow\quad S\in\bigl\{\{c\},\{d\},\{c,d\}\bigr\}.
\]
There are four ordered complete feasible allocations. The allocations with bundle pair \(\{a,b\},\{c,d\}\), in either orientation, have utility vectors \((0,1)\) and \((1,0)\). They are the only Pareto-optimal allocations: keeping the agent who receives value one at value one forces her to retain \(\{c,d\}\). Yet \(\{a,b\}\) is unsafe and \(\{c,d\}\) is robust, so each such allocation violates EF1 by \Cref{lem:violation}.
 
The remaining two allocations have bundle pairs \(\{a,d\},\{b,c\}\), in either orientation. Both agents receive value zero, so these allocations are envy-free and hence EF1. Each is Pareto dominated by orienting \(\{c,d\}\) toward one agent and \(\{a,b\}\) toward the other. Thus no feasible allocation is both EF1 and Pareto optimal.
\end{proof}

\section{Discussion}
\label{sec:discussion}
 
As our results highlight, the value of the empty bundle is more than a normalization convention: it dictates the guarantees we can achieve. An item may raise one bundle from zero to one and lower another from one to zero, so calling it a good or a chore is generally meaningless; in contrast, the value agents assign to \(\varnothing\) dictates which one-sided version of EF1 we can achieve. The sharp transition at \(|Z|=1\) makes the point especially vivid. 

Our mechanism-design results leave two distinct questions open. On \(\cVz^n\), \Cref{thm:score-mechanism} gives a weakly group-strategyproof and Pareto-optimal mechanism whose output is \(\EFXzm\). At the opposite endpoint \(Z=\varnothing\), \Cref{prop:po-impossibility} shows that, for every \(n,m\ge2\), there exists a profile at which no allocation is simultaneously Pareto optimal and EF1; hence no mechanism can guarantee both properties throughout \(\cVo^n\). This does not rule out a strategyproof EF1 mechanism without Pareto optimality on \(\cVo^n\), and it does not classify mixed profiles with \(0<|Z|<n\).
 
\begin{open}
\label{open:mechanisms}
Can a strategyproof rule guarantee EF1 on \(\cVo^n\) for arbitrary $n$ and \(m\)? On which Boolean profiles can strategyproofness, EF1, and Pareto optimality be achieved simultaneously?
\end{open}
For two agents we obtain a complete answer for the fairness achievable under matroid feasibility constraints: a feasible EF1 allocation is guaranteed for every Boolean profile exactly when the exchange graph \(H_{\cK}\) has a self-complementary component (\Cref{thm:matroid-characterization}). This condition is weaker than the exchange property conjectured by White and by Gabow, and it already holds for several classes of matroids, which is how \Cref{cor:matroid-classes} is obtained. Extending the characterization to more agents seems to require a different object, since a graph records only exchanges between two bundles whereas \(n\) bundles may be permuted in many ways at once; agent-specific matroids are further out of reach, as there is then no single exchange graph shared across agents.

\begin{open}
\label{open:matroids}
Which matroids guarantee a feasible EF1 allocation for every Boolean profile and every number of agents? In the two-agent case, must the basis-pair exchange graph contain a self-complementary component?
\end{open}
The natural next domain is preferences with three indifference classes rather than two. There a single deletion can move a bundle by more than one level, so the distinction between unsafe and robust bundles that drives all of our arguments no longer captures the effect of removing an item, and it is not clear which of our results survive.

\section*{AI Disclosure}

Many of the mathematical proofs and counterexamples were derived by OpenAI Codex (GPT-5.6-Sol at Max effort) based on explicit questions, research directions, literature connections, proof and search strategies, and inspirations supplied by the authors. The authors have verified all mathematical details, presented simplified arguments and expositions, often with the aid of GPT-5.6-Sol and Claude Opus 5. The authors retain full responsibility for all the content.
 
\bibliographystyle{plainnat}
{\footnotesize
\bibliography{refs}}

@article{babaioff2021dichotomous,
  author    = {Babaioff, Moshe and Ezra, Tomer and Feige, Uriel},
  title     = {Fair and Truthful Mechanisms for Dichotomous Valuations},
  journal   = {Proceedings of the AAAI Conference on Artificial Intelligence},
  volume    = {35},
  number    = {6},
  pages     = {5119--5126},
  year      = {2021},
  doi       = {10.1609/aaai.v35i6.16647}
}

@inproceedings{halpern2020binary,
  author    = {Halpern, Daniel and Procaccia, Ariel D. and Psomas, Alexandros and Shah, Nisarg},
  title     = {Fair Division with Binary Valuations: One Rule to Rule Them All},
  booktitle = {Web and Internet Economics --- 16th International Conference (WINE)},
  series    = {Lecture Notes in Computer Science},
  volume    = {12495},
  pages     = {370--383},
  publisher = {Springer},
  year      = {2020},
  doi       = {10.1007/978-3-030-64946-3_26}
}

@inproceedings{barman-verma-mms,
  author    = {Barman, Siddharth and Verma, Paritosh},
  title     = {Existence and Computation of Maximin Fair Allocations under Matroid-Rank Valuations},
  booktitle = {Proceedings of the 20th International Conference on Autonomous Agents and Multiagent Systems (AAMAS)},
  pages     = {169--177},
  year      = {2021}
}

@inproceedings{barman-verma-truthful,
  author    = {Barman, Siddharth and Verma, Paritosh},
  title     = {Truthful and Fair Mechanisms for Matroid-Rank Valuations},
  booktitle = {Proceedings of the 36th AAAI Conference on Artificial Intelligence (AAAI)},
  pages     = {4801--4808},
  year      = {2022}
}

@inproceedings{viswanathan-zick-yankee,
  author    = {Viswanathan, Vignesh and Zick, Yair},
  title     = {Yankee Swap: A Fast and Simple Fair Allocation Mechanism for Matroid Rank Valuations},
  booktitle = {Proceedings of the 22nd International Conference on Autonomous Agents and Multiagent Systems (AAMAS)},
  pages     = {179--187},
  year      = {2023}
}

@inproceedings{viswanathan-zick-framework,
  author    = {Viswanathan, Vignesh and Zick, Yair},
  title     = {A General Framework for Fair Allocation under Matroid Rank Valuations},
  booktitle = {Proceedings of the 24th ACM Conference on Economics and Computation (EC)},
  pages     = {1129--1152},
  year      = {2023},
  doi       = {10.1145/3580507.3597675}
}

@article{benabbou2021matroid,
  author  = {Benabbou, Nawal and Chakraborty, Mithun and Igarashi, Ayumi and Zick, Yair},
  title   = {Finding Fair and Efficient Allocations for Matroid Rank Valuations},
  journal = {ACM Transactions on Economics and Computation},
  volume  = {9},
  number  = {4},
  pages   = {21:1--21:41},
  year    = {2021}
}

@article{aziz2022,
  author  = {Haris Aziz and Ioannis Caragiannis and Ayumi Igarashi and Toby Walsh},
  title   = {Fair Allocation of Indivisible Goods and Chores},
  journal = {Autonomous Agents and Multi-Agent Systems},
  volume  = {36},
  number  = {1},
  pages   = {3},
  year    = {2022},
  doi     = {10.1007/s10458-021-09532-8}
}

@article{berczi2024,
  author  = {Krist{\'o}f B{\'e}rczi and Erika R. B{\'e}rczi-Kov{\'a}cs and Endre Boros and Fekadu Tolessa Gedefa and Naoyuki Kamiyama and Telikepalli Kavitha and Yusuke Kobayashi and Kazuhisa Makino},
  title   = {Envy-free Relaxations for Goods, Chores, and Mixed Items},
  journal = {Theoretical Computer Science},
  volume  = {1002},
  pages   = {114596},
  year    = {2024},
  doi     = {10.1016/j.tcs.2024.114596}
}

@article{berczi-matravolgyi-schwarcz,
  author  = {Krist{\'o}f B{\'e}rczi and Bence M{\'a}trav{\"o}lgyi and Tam{\'a}s Schwarcz},
  title   = {Reconfiguration of Basis Pairs in Regular Matroids},
  journal = {Journal of Combinatorial Theory, Series B},
  volume  = {177},
  pages   = {105--142},
  year    = {2026},
  doi     = {10.1016/j.jctb.2025.10.009},
  note    = {Preliminary version in \emph{Proceedings of the 56th Annual ACM Symposium on Theory of Computing (STOC)}, pages 1653--1664, 2024}
}

@article{berczi-schwarcz,
  author  = {Krist{\'o}f B{\'e}rczi and Tam{\'a}s Schwarcz},
  title   = {Exchange Distance of Basis Pairs in Split Matroids},
  journal = {SIAM Journal on Discrete Mathematics},
  volume  = {38},
  number  = {1},
  pages   = {132--147},
  year    = {2024},
  doi     = {10.1137/23M1565115}
}

@inproceedings{bhaskar2021,
  author    = {Umang Bhaskar and A. R. Sricharan and Rohit Vaish},
  title     = {On Approximate Envy-Freeness for Indivisible Chores and Mixed Resources},
  booktitle = {Approximation, Randomization, and Combinatorial Optimization. Algorithms and Techniques (APPROX/RANDOM)},
  series    = {LIPIcs},
  volume    = {207},
  pages     = {1:1--1:23},
  publisher = {Schloss Dagstuhl -- Leibniz-Zentrum f{\"u}r Informatik},
  year      = {2021},
  doi       = {10.4230/LIPIcs.APPROX/RANDOM.2021.1}
}

@inproceedings{bhaskar2025,
  author    = {Umang Bhaskar and Gunjan Kumar and Yeshwant Pandit and Rakshitha},
  title     = {Towards Envy-Freeness Relaxations for General Nonmonotone Valuations},
  booktitle = {Proceedings of the 24th International Conference on Autonomous Agents and Multiagent Systems (AAMAS)},
  pages     = {298--306},
  publisher = {International Foundation for Autonomous Agents and Multiagent Systems},
  year      = {2025},
  note      = {Extended version available as arXiv:2411.19881}
}

@article{gabow1976,
  author  = {Harold N. Gabow},
  title   = {Decomposing Symmetric Exchanges in Matroid Bases},
  journal = {Mathematical Programming},
  volume  = {10},
  number  = {1},
  pages   = {271--276},
  year    = {1976},
  doi     = {10.1007/BF01580677}
}

@inproceedings{lipton2004,
  author    = {Richard J. Lipton and Evangelos Markakis and Elchanan Mossel and Amin Saberi},
  title     = {On Approximately Fair Allocations of Indivisible Goods},
  booktitle = {Proceedings of the 5th ACM Conference on Electronic Commerce (EC)},
  pages     = {125--131},
  publisher = {ACM},
  year      = {2004},
  doi       = {10.1145/988772.988792}
}

@article{white1980,
  author  = {Neil L. White},
  title   = {A Unique Exchange Property for Bases},
  journal = {Linear Algebra and its Applications},
  volume  = {31},
  pages   = {81--91},
  year    = {1980},
  doi     = {10.1016/0024-3795(80)90209-8}
}

@inproceedings{barman2026fair,
  author    = {Siddharth Barman and Paritosh Verma},
  title     = {Fair Division Beyond Monotone Valuations with Applications to Equitable Graph Partitioning},
  booktitle = {Proceedings of the 2026 Annual ACM--SIAM Symposium on Discrete Algorithms (SODA)},
  pages     = {6613--6641},
  year      = {2026},
  doi       = {10.1137/1.9781611978971.236}
}

@article{bilo2026approximately,
  author    = {Bil{\`o}, Vittorio and Loebl, Martin and Vinci, Cosimo},
  title     = {Approximately Envy-Free and Equitable Allocations of Indivisible Items for Non-monotone Valuations},
  journal   = {Proceedings of the AAAI Conference on Artificial Intelligence},
  volume    = {40},
  number    = {20},
  pages     = {16700--16708},
  year      = {2026},
  doi       = {10.1609/aaai.v40i20.38712}
}

@article{edward1999rental,
  author  = {Francis Edward Su},
  title   = {Rental Harmony: Sperner's Lemma in Fair Division},
  journal = {The American Mathematical Monthly},
  volume  = {106},
  number  = {10},
  pages   = {930--942},
  year    = {1999},
  doi     = {10.1080/00029890.1999.12005142}
}
 
\appendix
\crefalias{section}{appendix}
\section{Valuation classes}
\label{sec:valuation-classes}
 
Several valuation classes appear in the discussion of related work; we collect their definitions
here. Throughout this subsection, \(v:2^M\to\bR\) denotes the valuation of a single agent,
normalized so that \(v(\varnothing)=0\). For a bundle \(S\subseteq M\) and an item \(e\notin S\), the
\emph{marginal value} of \(e\) with respect to \(S\) is
\[
  v(e\mid S)\;=\;v(S\cup\{e\})-v(S),
\]
and \(v\) is \emph{submodular} if marginal values never increase as the bundle grows, that is, if
\(v(e\mid S)\ge v(e\mid T)\) whenever \(S\subseteq T\subseteq M\setminus\{e\}\).
 
We first record the classes in which the value of a bundle is the sum of the values of its items.
 
\begin{itemize}
\item \emph{Additive}: \(v(S)=\sum_{e\in S}v(\{e\})\) for every bundle \(S\), where \(v(\{e\})\ge0\)
for every item \(e\).
 
\item \emph{Binary additive}: additive, with the further requirement that \(v(\{e\})\in\{0,1\}\) for
every item \(e\). Each item is thus either wanted or unwanted, and the value of a bundle counts the
wanted items that it contains.
 
\item \emph{Additive mixed manna}: \(v(S)=\sum_{e\in S}v(\{e\})\) for every bundle \(S\), where
\(v(\{e\})\) is now an arbitrary real number. An item \(e\) is a \emph{good} if \(v(\{e\})>0\), a
\emph{chore} if \(v(\{e\})<0\), and irrelevant to the agent if \(v(\{e\})=0\).
\end{itemize}
 
The next two classes keep the requirement that every marginal value be zero or one, but drop
additivity.
 
\begin{itemize}
\item \emph{Dichotomous}: \(v(e\mid S)\in\{0,1\}\) for every bundle \(S\) and every item \(e\notin S\). Such a valuation is monotone nondecreasing, and its range is contained in \(\{0,1,\ldots,m\}\).
 
\item \emph{Matroid rank}, also called \emph{binary submodular}: dichotomous and submodular.
Equivalently, \(v\) is the rank function of a matroid on \(M\), so that \(v(S)\) is the size of a
largest independent subset of \(S\). Every binary additive valuation is of this form.
\end{itemize}
 
The remaining classes are defined by the signs of the marginal values alone, and impose no
structure beyond them.
 
\begin{itemize}
\item \emph{Monotone nondecreasing}: \(v(S)\le v(T)\) whenever \(S\subseteq T\); equivalently, \(v(e\mid S)\ge0\)
for every bundle \(S\) and every item \(e\notin S\), so that every item is a good.
 
\item \emph{Doubly monotone}: the item set splits into two parts, \(M^{+}\) and \(M^{-}\), such that
\(v(e\mid S)\ge0\) for every \(e\in M^{+}\) and \(v(e\mid S)\le0\) for every \(e\in M^{-}\), for
every bundle \(S\). Every item is therefore a good or a chore for the agent, but the split may
differ from agent to agent, and the value of a bundle need not be the sum of the values of its
items.
 
\item \emph{Nonnegative}: \(v(S)\ge0\) for every bundle \(S\), with no monotonicity requirement.
 
\item \emph{Arbitrary set valuations}: no requirement beyond \(v\) being a function on \(2^M\).
\end{itemize}
 
The Boolean valuations studied in this paper are nonnegative, but they are subject to no other
requirement, and in particular they are incomparable with the classes based on binary marginals: a
Boolean valuation has range \(\{0,1\}\) and need not be monotone nondecreasing, whereas a dichotomous valuation is monotone nondecreasing and may take any value in \(\{0,1,\ldots,m\}\). The two classes intersect precisely in the monotone nondecreasing valuations of \(\cVz\).
 
Finally, a profile \((v_1,\ldots,v_n)\) is \emph{identical} if all agents share the same valuation. We use \emph{heterogeneous} to mean \emph{not necessarily identical}: no restriction is imposed on how the agents' valuations relate to one another.
 
\end{document}